\newcommand{\eg}{\emph{e.g.,~}}
\newcommand{\ie}{\emph{i.e.,~}}
\newcommand{\bx}{\mathbf{x}}
\newcommand{\by}{\mathbf{y}}
\newcommand{\bp}{\mathbf{p}}
\newcommand{\sket}[1]{\vert #1 )}
\newcommand{\sbraket}[2]{( #1 \vert #2 )}
\newcommand{\ket}[1]{\vert#1\rangle}
\newcommand{\bra}[1]{\langle #1\vert}
\newcommand{\expect}[1]{\langle #1\rangle}
\newcommand{\dg}{^\dagger}
\newcommand{\nn}{\nonumber}
\newcommand{\pd}{{\sf p}}	% Probability density
\newcommand{\ncoord}[1]{\eta^{#1}}
\newcommand{\PsiDO}{\Psi\text{DO}}
\newcommand{\Op}{\operatorname{Op}}
\newcommand{\Flow}{\Phi}
\newcommand{\cstate}[2][]{{\ket{#1{\psi}^h_{#2}}}}
\newcommand{\dcstate}[2][]{{\sket{#1{\psi}^h_{#2}}}}
\newcommand{\ext}{\operatorname{ext}}
\newcommand{\LPCA}{\operatorname{LPCA}}
\newcommand{\sLap}{\breve{L}_{\epsilon,\lambda}}
\newcommand{\dropcap}[1]{#1}
\newtheorem{theorem}{Theorem}
\newtheorem{lemma}{Lemma}
\newtheorem{prop}{Proposition}
\theoremstyle{remark}
\renewcommand{\thesubsection}{\thesection.\Alph{subsection}}
\begin{document}
\doparttoc
\faketableofcontents

\title{Manifold learning via quantum dynamics}

\author{Akshat Kumar}
\email{akumar@clarkson.edu}
\affiliation{Department of Mathematics, Clarkson University, Potsdam, NY 13699 USA}
\affiliation{Instituto de Telecomunica\c{c}\~{o}es, Lisbon, Portugal}
\author{Mohan Sarovar}
\email{mnsarov@sandia.gov}
\affiliation{Sandia National Laboratories, Livermore, California 94550, USA}

\begin{abstract}
We introduce an algorithm for computing geodesics on sampled manifolds that relies on simulation of quantum dynamics on a graph embedding of the sampled data. Our approach exploits classic results in semiclassical analysis and the quantum-classical correspondence, and forms a basis for techniques to learn the manifold from which a dataset is sampled, and subsequently for nonlinear dimensionality reduction of high-dimensional datasets. We illustrate the new algorithm with data sampled from model manifolds and also by a clustering demonstration based on COVID-19 mobility data. Finally, our method reveals interesting connections between the discretization provided by data sampling and quantization.
\end{abstract}

\maketitle

\dropcap{P}hysical processes have inspired information processing algorithms throughout the history of computing. Perhaps the most prominent examples are sampling algorithms inspired by statistical mechanics such as the Metropolis-Hastings algorithm \cite{hitchcock_history_2003}.
More recently, diffusion and the dynamics of heat flow have provided inspiration for techniques that learn the geometry of data sets and perform nonlinear dimensionality reduction to reduce the complexity of high-dimensional datasets
\cite{chui_special_2006}.
This task is referred to as \emph{manifold learning}, and has become a cornerstone task in unsupervised learning from large datasets.
Underlying manifold learning is the \emph{manifold hypothesis} \cite{fefferman_testing_2016}, which states that most real-world high-dimensional datasets, especially those originating from physical systems constrained by physical laws, are actually constrained to lower dimensional manifolds.
Manifold learning has proven to be a successful means for reducing the complexity of such high-dimensional datasets and enabling tasks such as shape analysis and image denoising \cite{Thorstensen_2011}, feature extraction and classification \cite{Leon-Medina_2020}, anomaly detection \cite{Sipola_2011,Stolz_2020}, and clustering \cite{Levin_Chao_Wenger_Proctor_2020}.

Diffusion-based approaches in spectral geometry
\cite{belkin_laplacian_2003,chui_special_2006,jones_manifold_2008,crane_geodesics_2013}
have been successful to data applications because they are local and correspond to Markov processes defined on finite datasets. The spectral analysis of a Markov generator of diffusion on a dataset is used to access its geometrical features. These approaches exploit two key properties: (i) in the limit of infinite data samples, the Markov generator limits to the Laplace-Beltrami (LB) operator on the data manifold (with possible correction terms that depend on the data sampling distribution), and (ii) the low-energy eigenfunctions of the LB operator, which physically correspond to the long-time limit of diffusion, encode coarse aspects of the geometry of the manifold.

In this paper we propose an entirely new approach to learning the geometry of datasets, motivated by \emph{quantum dynamics}. We develop a data-driven approach to extracting geodesics on data manifolds and show how this can be coupled with embedding techniques to achieve non-linear dimensionality reduction.
The key ingredients to our approach are (i) the approximation of a unitary time evolution operator (propagator) for a quantization of Hamiltonian dynamics on the data manifold, (ii) propagation of localized coherent states that approximate classical point particle trajectories, (iii) a rescaling of the diffusion approximation parameter to support the quantized dynamics. Quantization of the dynamics linearizes the classically non-linear problem of solving for geodesics on a manifold, and moreover, our construction allows us to use results in semiclassical analysis to prove strong asymptotic connections between the dynamics induced by unitary propagation and geodesic distances on the underlying manifold. 
Approximations of geodesic paths have been a central component of several previous algorithms in computer vision and network science \cite{mitchell_discrete_1987, kimmel_computing_1998, peyre_geodesic_2010, crane_geodesics_2013}.
However, connections to quantized dynamics have not been exploited in the past, and our work demonstrates the considerable benefits of utilizing this connection.

We begin with background and asymptotic results in \cref{sec:qdyn_geo}, develop data-driven constructions in \cref{sec:data_dyn},  demonstrate our approach on model manifolds in \cref{sec:eg}, and apply it to real-world data by performing clustering on low-dimensional embeddings derived from the output of our method in \cref{sec:app}.
We discuss the relation to previous work and directions for extending our results in \cref{sec:disc}.

\section{Quantum and classical dynamics}
\label{sec:qdyn_geo}
We begin by defining the dataset $V = \{v_1, v_2, ..., v_N \}$ as $N$ samples from the underlying compact, smooth, and boundaryless Riemannian manifold $\mathcal{M}$ with dimension $\nu$ and metric tensor $g_{ij}$. The samples are drawn according to a fixed distribution that has density $\pd$ with respect to the Riemannian volume on $\mathcal{M}$. The manifold is assumed to be isometrically and smoothly ($C^\infty$) embedded in $\mathbb{R}^n$ with a bounded second fundamental form. We view the index $\ell$ to correspond to a sample on $\mathcal{M}$ and $v_{\ell} \in \mathbb{R}^n$ to specify its \emph{extrinsic coordinates}.

Often, in practice, $n\gg \nu$ and this renders resolving many geometric properties of the underlying manifold, including geodesics, computationally intractable.
The geodesic equation, from a Hamiltonian dynamics point of view, models a ``test particle'' moving freely on phase space, which is defined to be the cotangent bundle $T^*\mathcal{M}$. Upon projection onto the \emph{configuration space} $\mathcal{M}$, the motion of this particle traces out a geodesic path with starting position and direction fixed by the initial conditions to the Hamiltonian flow. 
Simulations of this Hamiltonian flow are difficult to achieve directly since there is no straightforward way to use samples only from configuration space, $\mathcal{M}$, to simulate dynamics that are naturally described in terms of phase space variables, \ie  $\bx=(x^1,...x^\nu)$ and $\bp=(p_1,...p_\nu)$, the canonical configuration (or position) and momentum coordinates, respectively.
Another perspective, coming from geometric optics and modeled by the eikonal equation, is that the phase space dynamics simulate the propagation of a light ray in curved space. 
To obtain geodesics using samples from the manifold the conventional approach involves solving a discrete approximation to the eikonal equation \cite{peyre_geodesic_2010}, usually after a triangulation step that approximates a surface from the point cloud $V$.

In the following, we develop a different approach to obtaining geodesics that does not require solving a nonlinear equation, or triangulation of the point cloud, and instead requires only simple matrix operations on the dataset $V$.
Our approach proceeds through a \emph{quantization} of the dynamical system, which is a linear (albeit infinite-dimensional) system, and propagation of wavepackets defined on $\mathcal{M}$, but localized to a point in $T^*\mathcal{M} \setminus 0$ (while still respecting the uncertainty principle) that quantum mechanically approximate the classical test particles. We then assess certain \emph{statistics} of these propagated wavepackets, which constitutes an effective \emph{dequantization} that enables access to the underlying classical -- and nonlinear! -- dynamics, tracing out the geodesic path with initial position and direction fixed by the localization of the initial wavepacket.

\begin{figure}%[tbhp]
\centering
\includegraphics[width=1\linewidth]{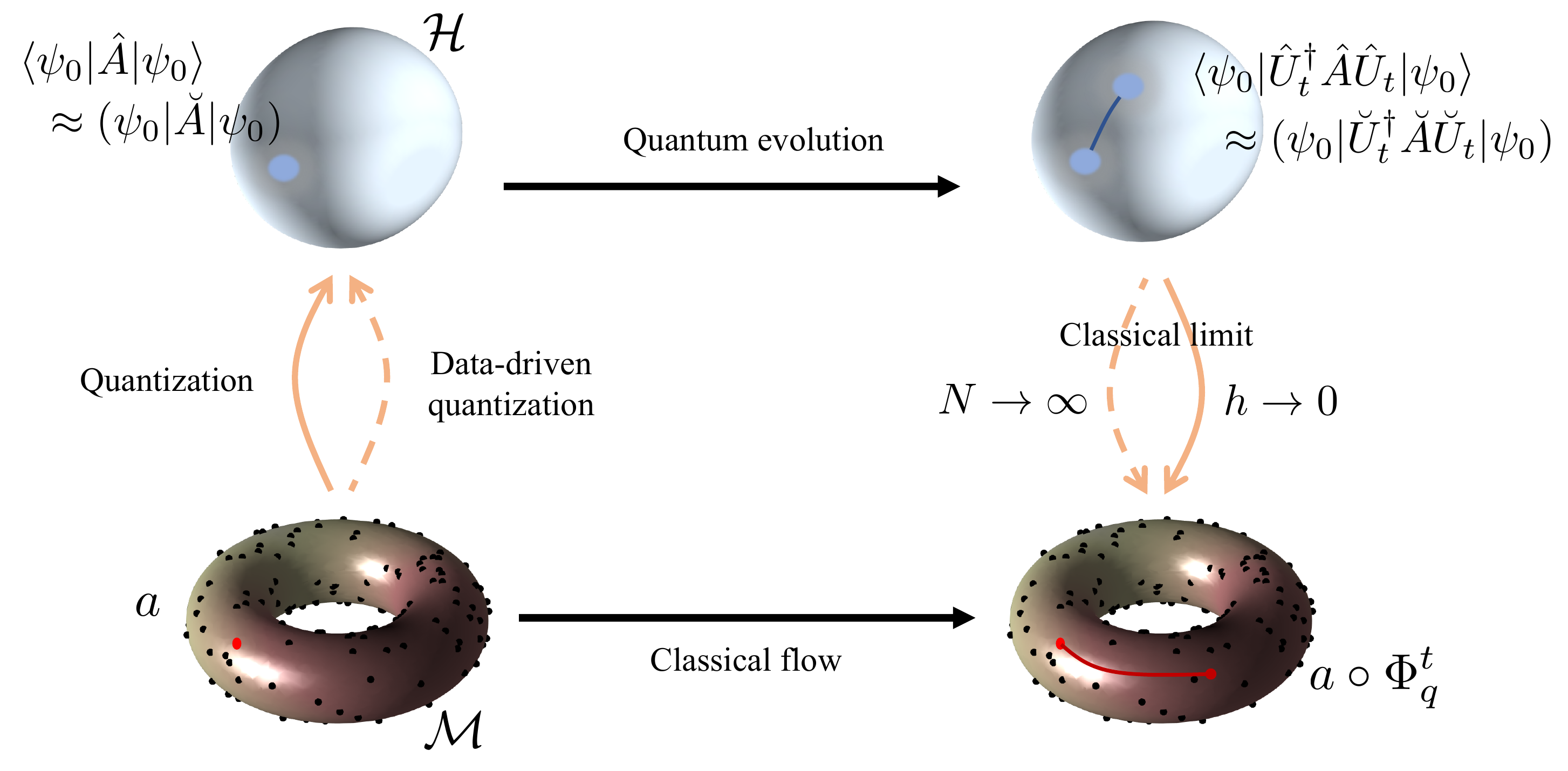}
\caption{A schematic representation of our general approach. Through Egorov's theorem a quantum-classical correspondence can be established where the classical flow on a manifold ($\mathcal{M}$) is related to a quantized evolution in Hilbert space ($\mathcal{H}$). Expectations of time-propagated quantized observables, $\bra{\psi_0}\hat{U}\dg_{t}\hat{A}\hat{U}_t\ket{\psi_0}$, can recover corresponding classical observables under geodesic flow, $a\circ \Phi^t_q$, in the classical limit, $h\to 0$. We show that quantized operators in Hilbert space can be approximated from sample points on the manifold (black dots on the torus above), and construct discrete quantum propagators ($\breve{U})$), states ($|\psi_0)$) and observables ($\breve{A})$) to approximate the time-dependent observable expectation by $(\psi_0|\breve{U}_t\dg\breve{A}\breve{U}_t|\psi_0)$. This approximation converges to the classical flow in the infinite sample limit, $N\to \infty$. }
\label{fig:schematic}
\end{figure}

In general terms, the linearization we are concerned with connects certain dynamics of operators on $L^2(\mathcal{M})$ with Hamiltonian mechanics, or in the functional form, Liouvillian dynamics on $T^*\mathcal{M}$.
The operator dynamics are generated by a linear operator $\hat{Q}_h : C^{\infty}(\mathcal{M}) \to C^{\infty}(\mathcal{M})$ with parameter $h \in (0, h_0]$ for some $h_0 > 0$ and driven by the operator solution $\hat{U}_t$ to $h \partial_t \hat{U}_t = i \hat{Q}_h \hat{U}_t$ with initial condition, $\hat{U}_{t=0} = \hat{I}$.
Then, the evolution of $\hat{A} : L^2(\mathcal{M}) \to L^2(\mathcal{M})$ is given by $\hat{A}_t := \hat{U}_{-t} \hat{A} \hat{U}_t$, which solves $\partial_t \hat{A}_t = \frac{i}{h} [\hat{A}_t, \hat{Q}_h]$ with initial condition $\hat{A}_{t=0} = \hat{A}$.
To bring this to \emph{classical dynamics} on phase space, a connective link is Egorov's theorem \cite{zworski_semiclassical_2012}, which states roughly that if $\hat{Q}_h$ and $\hat{A}$ are the \emph{quantizations} of $q, a \in C^{\infty}(T^*\mathcal{M})$, respectively satisfying certain properties, then $\hat{A}_t$ is the quantization of $a \circ \Flow_q^t$ with $\Flow_q^t$ the Hamiltonian flow given by $q$.
A graphical description of our approach is shown in \cref{fig:schematic}.

Throughout this work, we will connect our discussion to concepts in physics. To aid with this we use the following Dirac notation and physics terminology \footnote{A comprehensive table of notation can be found in SI. Sec. I.}:  operators and quantum states are denoted as $\hat{A}$ and $\ket{\psi}$, respectively, and complex inner products are written as $\bra{\phi}\psi\rangle$, where $\bra{\phi}$ is the dual of $\ket{\phi}$, \ie this inner product is anti-linear in the first component. Further, operator expectations are denoted $\langle \hat{A} \rangle := \bra{\psi} \hat{A} \ket{\psi} = \langle \psi | \hat{A}\psi\rangle$, and we often use generalized distributions (in the sense of Schwartz) so that we may represent states in the position basis as $\psi(\bx) := \bra{\delta_\bx} \psi\rangle$, where $\ket{\delta_\bx}$ a point mass at $\bx \in \mathcal{M}$, hence allowing us to think of them as $L^2$ functions on the manifold.

This first step in our approach requires the notion of \emph{quantization}: as alluded to, this is a map that takes certain classes of smooth functions on phase space, called \emph{symbols}, to linear operators.
The resulting operators are then called \emph{pseudodifferential operators} (or shortly, $\PsiDO$s).
We refer to \cite{zworski_semiclassical_2012} for details, but to give a sense of symbols and $\PsiDO$s, consider the situation in $\mathbb{R}^d$.
On identifying $T^*\mathbb{R}^d \cong \mathbb{R}^{2d}$ for the purposes of analogy, consider a symbol $a \in C^{\infty}(\mathbb{R}^{2d} \times [0, h_0))$ for some $h_0 > 0$ that is Schwartz class in $\mathbb{R}^{2d}$.
Then, its \emph{quantization} is any of the \emph{continuous maps}
\begin{multline*} \label{eq:quantization}
	\Op_{h,\tau}(a) : \mathscr{S}' \ni u \to	\\
		\frac{1}{(2 \pi h)^d} \int_{\mathbb{R}^{2d}} e^{\frac{i}{h} \langle x - y, \xi \rangle} a(\tau x + (1 - \tau )y, \xi ; h) u(y) ~ dy d\xi \in \mathscr{S}
\end{multline*}
for $\mathscr{S}$ the Schwartz space of functions and $\mathscr{S}'$ the space of tempered distributions on $\mathbb{R}^d$ and $\tau \in [0, 1]$.
Particular cases of interest are the \emph{standard quantization} with $\tau = 0$, \emph{Weyl quantization} with $\tau = \nicefrac{1}{2}$ and \emph{right quantization} with $\tau = 1$.
The Schwartz class of symbols is too restrictive, since we lack even the differential operators here upon quantization; to remedy this, one defines a more general \emph{symbol class of order} $(k,m)$, which we denote by $h^k S^m$, to be given by functions $a \in C^{\infty}(\mathbb{R}^{2d})$ such that for all $\beta, \gamma \in \mathbb{N}^d_0$, there is a $C_{\beta,\gamma} \geq 0$ so that for all $h \in [0, h_0]$ and $(\bx,\bp) \in \mathbb{R}^{2d}$, $|\partial_{\bx}^{\gamma} \partial_{\bp}^{\beta} a| \leq C_{\beta,\gamma} h^{-k} (1+\vert\bp\vert^2)^{\nicefrac{(m - |\beta|)}{2}}$.
This symbol class is defined by a certain decay of higher order derivatives with respect to $\bp$, which ensures that the symbols have an asymptotic expansion at large $\vert\bp\vert$ that allows us to distinguish dominant parts of the symbol.
The corresponding image of $\Op_{h,\tau}$ is the space of $\PsiDO$s \emph{of order} $(k,m)$, which we write as $h^k \Psi^m$.

The agreement of two $\PsiDO$s up to \emph{a lower order} signifies that the symbol of their residual decays an order faster \emph{both}, in $h \to 0$ and in $|\bp| \to \infty$; so \eg $\PsiDO$s from two different quantizations, \emph{viz.}, different values of $\tau$, of a common symbol are equal up to a lower order $\PsiDO$.
Now, indeed, we have the polynomials in the symbol classes and in particular, the Laplacian (up to semi-classical scaling), $h^2 \Delta \in h^0 \Psi^2$.

A particularly useful concept is that of the \emph{principal symbol} of a $\PsiDO$ $\hat{A} \in h^k \Psi^m$, which is the equivalence class $[a_0] \in h^k S^m / h^{k + 1} S^{m - 1}$ such that for any $a \in [a_0]$, $\hat{A} - \Op_{h,t}(a) \in h^{k + 1} \Psi^{m - 1}$. Roughly, the principal symbol captures the leading order behavior in the limit $h\to 0$, which is the definition of the \emph{classical limit} in the \emph{quantum-classical correspondence}.
This can be understood as the operator's \emph{high-frequency} component in the following way: if $\phi \in C^{\infty}(\mathbb{R}^d)$ has derivative $d\phi|_{\bx_0} = -\bp_0$ and $q(\bx, \bp) = \sum_{|\beta| \leq D} q_{\beta}(\bx) h^{D-\vert\beta\vert}\bp^{\beta}$ is a symbol, then application of its quantization to a \emph{wave-packet} with phase $\phi$ leads to, $e^{-\frac{i}{h} \phi} \Op_{h,\tau}[q(\bx, \bp)](e^{\frac{i}{h} \phi})(\bx_0) \to \sum_{|\beta| = D} q_{\beta}(\bx_0) \bp_0^{\beta}$ as $h \to 0$.
Another key part of the relationship between $\PsiDO$s and symbols is that in as far as $\PsiDO$s commute up to a lower order operator, the principal symbol resulting from a product of $\PsiDO$s is the product of their principal symbols: $\hat{C} := \Op_h(a) \Op_h(b)$ has principal symbol $a_0 b_0$ with $a_0$ and $b_0$ the principal parts of $a$ and $b$, respectively.
These properties of the principal symbol of a $\PsiDO$ underlie the key relationship in the quantum-classical correspondence: principal symbols are the \emph{classical observables} corresponding to $\PsiDO$s that are interpreted as \emph{quantum observables}.

The semiclassical framework can be ported to $C^{\infty}$ boundaryless Riemannian manifolds by localizing the definition from Euclidean space to small balls, wherein the manifolds are roughly flat themselves and then patching them together via partitions of unity.
A key aspect of this is that the symbol classes $h^k S^m$ are invariant to coordinate transformations \cite[$\S 14$]{zworski_semiclassical_2012}.
Likewise, the concept of principal symbol can be realized by taking \emph{localized} wavepackets on the manifold and applying $\PsiDO$s to them as was sketched in the previous paragraph for general wavepackets in $\mathbb{R}^d$. This is expected since $\mathcal{M}$ locally resembles $\mathbb{R}^d$.
Concretely, we will use \emph{coherent states} to this effect: given $\zeta_0 \in T^*\mathcal{M}$, a \emph{coherent state localized at} $\zeta_0=(\bx_0,\bp_0)$ is the function
\begin{equation} \label{eq:coherent-state}
	\psi^h_{\zeta_0}(\bx) = \frac{1}{h^{\nicefrac{\nu}{4}}} e^{\frac{i}{h} \varphi(\bx, \zeta_0)}
\end{equation}
with $\varphi \in C^{\infty}(\mathcal{M})$ the restriction to $\zeta_0$ of an \emph{admissible phase} \cite{kumar_math} $\phi \in C^{\infty}(\mathcal{M} \times T^*\mathcal{M})$.
This admissibility condition entails roughly that in a neighbourhood of the diagonal $\{ (\bx;  \bx, \bp) \in \mathcal{M} \times T^*\mathcal{M} \}$, $\phi$ can be approximated in local coordinates by, $\langle p, x - y \rangle_{\mathbb{R}^{\nu}} + i ||x - y||^2_{\mathbb{R}^{\nu}}$.
In the context of its interactions with $\PsiDO$s, the \emph{phase space} properties of this function can be understood from its \emph{Husimi function}, as described in \cite{kumar_math}.
These states satisfy the properties of coherent states usually encountered in physics \cite{
gazeau_coherent_2009}; \ie they have minimum uncertainty volume in quantum phase space, satisfying the Heisenberg uncertainty principle, which is equally distributed between position and momentum degrees of freedom, defined through the $\Op_h(\cdot)$ quantization.

Connecting the semiclassical framework and Egorov's theorem with coherent states, the form of quantum-classical correspondence we will utilize is given by,
\begin{theorem}[Ref. \cite{kumar_math}] \label{thm:qc-correspondence}
Let $\hat{Q} = \Op_h(q) \in h^0 \Psi^{-\infty} := \cap_{m \in \mathbb{Z}} h^0 \Psi^m$ with $q \in C_c^{\infty}(T^*\mathcal{M})$ having real principal symbol $q_0$ and $a \in h^0 S^m$ over $h \in (0, h_0]$ for some $h_0 > 0$.
Then, given a coherent state $\psi_h$ localized at $\zeta_0 \in T^*\mathcal{M}$ and a fixed $T \geq 0$, we have for all $h \in (0, h_0]$ and $|t| \leq T$,
\begin{equation} \label{eq:qc-correspondence}
	|\langle \psi_h | e^{-\frac{i}{h} \hat{Q} t} \Op_h(a) e^{\frac{i}{h} \hat{Q} t} | \psi_h \rangle - a \circ \Flow_{q_0}^t(\zeta_0)| = \mathcal{O}(h),
\end{equation}
where $\Phi_{q_0}^t$ is the Hamiltonian flow on $T^*\mathcal{M}$ generated by $q_0$.
\end{theorem}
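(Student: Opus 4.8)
The plan is to decompose \cref{thm:qc-correspondence} into a semiclassical Egorov theorem for the Heisenberg-evolved observable together with a localization property of coherent states, and then compose the two. Write $\hat{A}_t := e^{-\frac{i}{h}\hat{Q}t}\,\Op_h(a)\,e^{\frac{i}{h}\hat{Q}t}$. I would first establish: \textbf{(i)} (Egorov) $\hat{A}_t - \Op_h\!\big(a_0\circ\Flow_{q_0}^t\big)\in h\,\Psi^{m-1}$ --- an operator smaller by $\mathcal{O}(h)$ and of one lower order in $\bp$ --- uniformly for $|t|\le T$; and \textbf{(ii)} (localization) for any symbol $b\in h^0 S^m$ with principal symbol $b_0$ and any coherent state $\psi_h$ localized at $\zeta_0$, $\langle\psi_h|\Op_h(b)|\psi_h\rangle = b_0(\zeta_0)+\mathcal{O}(h)$. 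Granting these, and noting that (ii) applied to $b = (1+\|\bp\|^2)^s$ also shows $\psi_h$ has $h$-uniformly bounded semiclassical Sobolev norms --- so the lower-order error in (i) contributes only $\mathcal{O}(h)$ to the expectation --- we get $\langle\psi_h|\hat{A}_t|\psi_h\rangle = a_0\circ\Flow_{q_0}^t(\zeta_0)+\mathcal{O}(h)$; since $a = a_0+\mathcal{O}(h)$ in $S^m$ and $\{\Flow_{q_0}^t(\zeta_0):|t|\le T\}$ is compact, this equals $a\circ\Flow_{q_0}^t(\zeta_0)+\mathcal{O}(h)$ uniformly in $|t|\le T$, which is \cref{eq:qc-correspondence}.

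For \textbf{(i)}, set $b_t := a_0\circ\Flow_{q_0}^t$. Because $q\in C_c^\infty(T^*\mathcal{M})$, the principal symbol $q_0$ is compactly supported, so $\Flow_{q_0}^t$ is complete, equals the identity off a fixed compact set, and is a smooth family of diffeomorphisms on that set; hence $b_t\in h^0 S^m$ with seminorms uniform in $|t|\le T$, and coordinate invariance of the symbol classes makes this chart-independent (only finitely many cotangent charts meet $\operatorname{supp} q_0$). By the method of characteristics, $b_t$ solves the transport (Liouville) equation $\partial_t b_t = \{b_t,q_0\}$ with $b_0 = a_0$, which is the Heisenberg equation at the level of principal symbols. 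On the operator side, the symbol calculus gives $\frac{i}{h}[\Op_h(b_t),\hat{Q}] = \Op_h(\{b_t,q_0\}) + R_t$, and because $q$ is compactly supported the remainder $R_t$ lies in $h\,\Psi^{-\infty}$ and is therefore bounded on $L^2$ with norm $\mathcal{O}(h)$, uniformly in $|t|\le T$; so $\Op_h(b_t)$ solves the Heisenberg equation for $\hat{A}_t$ up to the source $-R_t$. Since $q$ has real principal symbol, $\hat{Q}$ is self-adjoint modulo $h\,\Psi^{-\infty}$, so $\hat{U}_t = e^{\frac{i}{h}\hat{Q}t}$ is uniformly bounded on $L^2$ (and on the semiclassical Sobolev scale) for $|t|\le T$; Duhamel's formula then expresses $\hat{A}_t - \Op_h(b_t)$ through $\hat{U}_{-t}\big(\Op_h(a)-\Op_h(a_0)\big)\hat{U}_t$ and conjugates $\hat{U}_{s-t}R_s\hat{U}_{t-s}$, all of which are $\mathcal{O}(h)$ of order $(1,m-1)$, uniformly in $|t|\le T$. (For an $\mathcal{O}(h^\infty)$ statement one would append corrections $h\,b_t^{(1)}+h^2b_t^{(2)}+\cdots$ solving inhomogeneous transport equations, but the leading term suffices here.)

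For \textbf{(ii)}, pass to a chart around $\bx_0$, where by admissibility of the phase $\varphi$ the coherent state of \cref{eq:coherent-state} is, up to an $\mathcal{O}(h^\infty)$-negligible error, the Gaussian wavepacket $h^{-\nu/4}\exp\!\big(\tfrac{i}{h}\langle\bp_0,\bx-\bx_0\rangle-\tfrac{1}{h}\|\bx-\bx_0\|^2\big)$ (modulo higher-order phase corrections). Then $\langle\psi_h|\Op_h(b)|\psi_h\rangle$ is a Gaussian integral: it equals the pairing of $b$ with the phase-space (Husimi/Wigner) density of $\psi_h$, which to leading order is a Gaussian probability density centered at $\zeta_0$, of width $\mathcal{O}(\sqrt h)$, with the width equidistributed between position ($\bx$) and momentum ($\bp$) and symmetric about $\zeta_0$ --- exactly the minimum-uncertainty structure secured by the admissible-phase construction of \cite{kumar_math}. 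Taylor-expanding $b(\,\cdot\,;h) = b_0+h b_1+\cdots$ about $\zeta_0$, the constant term contributes $b_0(\zeta_0)+\mathcal{O}(h)$, the linear term integrates to zero by the symmetry, and the quadratic and higher terms contribute $\mathcal{O}(h)$ since the width is $\mathcal{O}(\sqrt h)$ and the $h^0 S^m$ bounds control the growth in $\bp$ uniformly over the $\sqrt h$-ball about the fixed point $\zeta_0$; the dependence of $\Op_{h,\tau}$ on $\tau$ shifts this only by $\mathcal{O}(h)$. Hence $\langle\psi_h|\Op_h(b)|\psi_h\rangle = b_0(\zeta_0)+\mathcal{O}(h)$, and taking $b = (1+\|\bp\|^2)^s$ gives the semiclassical Sobolev bound invoked above.

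The step I expect to be the main obstacle is the uniform-in-$t$ symbol bookkeeping in \textbf{(i)}: one must carry orders through the product and commutator calculus on the manifold (chart patching, invariance of $S^m$, the growth in $\bp$ when $m>0$), verify that $a_0\circ\Flow_{q_0}^t$ stays in $S^m$ with seminorms bounded on $[-T,T]$, and estimate the Duhamel remainder in an $h$-independent way --- the compact support of $q$ is what makes each of these tractable. The other sensitive point is obtaining $\mathcal{O}(h)$ rather than merely $\mathcal{O}(\sqrt h)$ in \textbf{(ii)}, which relies on the exact Gaussian/admissible-phase structure of $\psi^h_{\zeta_0}$ (the symmetry of its phase-space profile) --- this is why that hypothesis is stated with care and is the analytic input drawn from \cite{kumar_math}. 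An alternative route, closer to the wavepacket picture of \cref{fig:schematic}, would skip operator Egorov and instead show directly that $\hat{U}_t\psi_h$ is, to leading order, a coherent state localized at $\Flow_{q_0}^t(\zeta_0)$, and then apply \textbf{(ii)} there; this needs the same two ingredients.
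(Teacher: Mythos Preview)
Your proposal is sound and follows the standard two-step route (Egorov for the Heisenberg-evolved observable, then coherent-state localization), with the correct observations that compact support of $q$ makes the flow and symbol seminorms uniformly controlled on $[-T,T]$ and that the admissible-phase Gaussian structure is what upgrades $\mathcal{O}(\sqrt{h})$ to $\mathcal{O}(h)$ in step~(ii).

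However, there is no proof in this paper to compare against: the theorem is stated with the attribution ``[Ref.~\cite{kumar_math}]'' and is imported wholesale from the companion work; the Supplementary Information proves only Propositions~1 and~2. The only hint the paper gives is the heuristic paragraph immediately following the statement, which expands $\hat{A}(t)$ via Baker--Campbell--Hausdorff and notes that each nested commutator drops an order in both $h$ and $|\bp|$. Your Duhamel/transport-equation argument in~(i) is the rigorous version of exactly that observation, so your approach is consistent with the paper's informal sketch and is presumably what the cited reference carries out in detail.
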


We often call $\hat{Q}$ the \emph{quantized Hamiltonian}, or even just the \emph{Hamiltonian} when the context is clear with regards to the classical case, in as far as it generates the dynamics on \emph{quantum observables} $\hat{A} \in h^0 \Psi^m$ through the propagator $\hat{U}_t := e^{\frac{i}{h} \hat{Q} t}$.
The expression \cref{eq:qc-correspondence} states that the time evolution of any observable of our quantized system approximates the classical Hamiltonian flow of the corresponding classical observable generated by the principal symbol of the quantized Hamiltonian $\hat{Q}$.
This approximation improves as $h\rightarrow 0$.
The validity of the relationship between the time evolution of the quantized observable and the related classical flow given by \cref{eq:qc-correspondence} is only valid until a fixed amount of time $T > 0$.
This suffices for our application, since we will propagate only for a time smaller than the \emph{injectivity radius} at a given point on $\mathcal{M}$.

The statement in \cref{eq:qc-correspondence} can be roughly understood by expanding the propagated quantum observable by application of the Baker-Campbell-Hausdorff formula:
\begin{align}
\label{eq:At_expansion}
	\hat{A}(t) &= e^{-\frac{i}{h}\hat{Q}t} \hat{A}  e^{\frac{i}{h}\hat{Q}t} = \hat{A} -\frac{it}{h} [\hat{Q}, \hat{A}] + \frac{t^2}{2h^2} [\hat{Q},[\hat{Q},\hat{A}]] + \dots,\nn
\end{align}
with $[\hat{Q},\hat{B}] \in h^{k+1}\Psi^{m-1}$, whenever $\hat{B} \in h^k \Psi^m$ due to the properties of quantization discussed above.
If we dequantize this operator to recover the corresponding propagated symbol, it is clear that even if $a$ has no dependence on $h$ the symbol at $t>0$ will due to the terms in this expansion. These are quantum corrections that decay as $h\to 0$ to yield the flow of the classical observable, $a\circ \Phi_{q_0}^t$.

\subsection{Quantization for manifold learning}
\label{sec:quant_manilearn}
We set some notation for use in the following. Finite-dimensional, data-dependent representations or approximations of operators, $\hat{A}$, and states, $\ket{\psi}$, are denoted $\breve{A}$ and $\vert \psi)$, respectively; $\breve{A} \in \mathbb{C}^{N\times N}$ and $\vert \psi ) \in \mathbb{C}^N$ for some dimension $N$. Expectation values for finite-dimensional approximations of quantum observables are denoted $\expect{\breve{A}} := (\psi \vert \breve{A}\vert\psi)$. $\breve{A}$ and $\vert \psi )$ will be represented in the position basis, unless otherwise specified. Finally, we introduce notation for a global parameterization of points on the manifold as $\iota(\bx) = (\iota_1(\bx), ..., \iota_n(\bx))$.

For manifold learning, a natural choice for the quantized Hamiltonian is $\hat{H}' = h^2 \Delta_g$, where $\Delta_g$ is the LB operator on $\mathcal{M}$, since the principal symbol for this operator is $\mathcal{H}' = |\bp|^2_g = \sum_{i,j} g^{ij}p_i p_j$, which is the classical Hamiltonian for free motion on the manifold.
Thus, the corresponding flow $\Phi_{\mathcal{H}'}^t$ projected onto $\mathcal{M}$ transports a particle along geodesics, however its \emph{speed} depends on the initial momentum magnitude.
On the other hand, if we demand that the quantized Hamiltonian for the system take the form
\begin{equation}
	\hat{H} = h \sqrt{\Delta_g} ,
	\label{eq:qHam}
\end{equation}
then the corresponding classical Hamiltonian is $\mathcal{H} := |\bp|_g$ and its flow, which we denote by $\Phi_t$, projected onto $\mathcal{M}$ transports particles along \emph{unit speed} geodesics, regardless of the initial (non-zero) momentum.
Motivated by recovering intrinsic distances and the exponential map and in support of further applications where this unit-speed regularization is fruitful, it is this flow we wish to approximate.

The form of quantum-classical correspondence given by Theorem \ref{thm:qc-correspondence} is particularly suited to data-driven approximations of the LB operator of a manifold.
A quantum system governed by a Hamiltonian $\hat{Q}$ whose principal symbol $q_0$ is localized in phase space and such that $q_0(\bx,\bp) \equiv |\bp|_{g_{\bx}}$ in a neighbourhood of $(\bx_0, \bp_0) \in T^*\mathcal{M}$ is immediately linked to the geodesic flow $\Phi_t(\bx_0, \bp_0)$ through \cref{eq:qc-correspondence}.
Diffusion operators related to \emph{graph Laplacians} now have a good probabilistic convergence theory from finite samples and we rely on the results of \cite{kumar_math} that convert them to propagators $\hat{U}_{q_0}^t := e^{\frac{i}{h} \hat{Q} t}$.
We now set the stage for implementing this procedure.

Consider the matrix
\begin{equation}
\sLap := \frac{2c_0}{c_2\epsilon}\left(\breve{I}_N - \breve{D}^{-1}_{\epsilon,\lambda}\breve{\Sigma}^{-\lambda}_{\epsilon} \breve{T}_{\epsilon}\breve{\Sigma}^{-\lambda}_{\epsilon}\right),
\label{eq:L_N}
\end{equation}
where $[\breve{T}_{\epsilon}]_{i,j} = k(\nicefrac{\lVert v_i - v_j \rVert^2}{2\epsilon})$, for $k:\mathbb{R}\rightarrow \mathbb{R}$ is a smooth function that exponentially decays in its argument. The diagonal matrices $[\breve{\Sigma}_{\epsilon}]_{i,i} = \sum_{j=1}^N [\breve{T}_{\epsilon}]_{i,j}$ and $[\breve{D}_{\epsilon,\lambda}]_{i,i} = \sum_{j=1}^N [\breve{\Sigma}^{-\lambda}_{\epsilon} \breve{T}_{\epsilon}\breve{\Sigma}^{-\lambda}_{\epsilon}]_{i,j}$ are normalizations and $c_j = \int_{\mathbb{R}^n} k(\vert\vert y \vert\vert^2) y_1^j dy$ for $j = 0, 2$ are constants that are trivial to compute upon a choice for $k$.
The scale parameter $\epsilon > 0$ regulates the order of approximation beyond the infinite sample limit and its optimal choice depends on the density of the samples $V$ on $\mathcal{M}$.
While there is no universal choice for $\epsilon$ at a given sample size, the asymptotic rates of convergence in \cref{prop:prop2} lend a rule-of-thumb.
As the construction of our propagator and its discrete approximation are rooted in diffusion processes, the intuitions for the $\epsilon$ parameter carry over from that literature \cite{nadler2006diffusion}:  one attempts to assign a value such that a $\sqrt{\epsilon}$-ball captures the local neighborhood of any data point.
Thus, $\epsilon$ is a measure of degree of uncertainty or resolution in configuration space data, and below we will show how it is intimately related to the fundamental phase space uncertainty $h$ imposed by quantum mechanics.
Finally, the $\lambda$ parameter that influences the asymptotic (in $N$) convergence properties of $\sLap$, which we now discuss.

The asymptotic properties of $\sLap$ have been thoroughly characterized in recent years,
including the important convergence result  \cite{hein2005graphs} that
$\sLap \xrightarrow{N\rightarrow \infty} \hat{\mathcal{L}}_{\epsilon,\lambda}$, where
\begin{equation}
\hat{\mathcal{L}}_{\epsilon,\lambda} = \left (\Delta_{g} + \frac{2(1-\lambda)}{\pd}\langle\nabla \pd, \nabla\rangle_{g} \right) + \mathcal{O}(\epsilon) .
\label{eq:L_eps}
\end{equation}
The leading term in this operator is the LB operator on the manifold, except for a correction term that depends on the sampling density $\pd$.
The correction term is zero under uniform sampling, and its influence can be tuned by choice of normalization parameter $\lambda$.
While the choice $\lambda=1$ asymptotically removes the dependence on sampling density, this is not necessary for our approach because Theorem \ref{thm:qc-correspondence} only depends on the principal symbol of the Hamiltonian generating the dynamics and these sampling density dependent terms enter as lower-order contributions to the symbol for $h^2 \hat{\mathcal{L}}_{\epsilon,\lambda}$.
Nevertheless, we leave this option for regularization open.

Although $\hat{\mathcal{L}}_{\epsilon,\lambda}$ approximates the LB operator on the manifold as $\epsilon \rightarrow 0$, this approximation is subtle from a quantization perspective since the $\mathcal{O}(\epsilon)$ terms include differential operators of increasing order.
Indeed, since the basis of the approximation (\cref{eq:L_eps}) is a first-order difference approximation to the infinitesimal of a diffusion operator in the parameter $\epsilon$, the corresponding \emph{symbol}, once viewed from the quantization perspective, only locally approximates $|\bp|^2_g$ in phase space \cite{kumar_math}.
The domain and quality of approximation grow as $\epsilon \to 0$ and in order to apply semiclassical methods, this must increase at a certain rate with $h \to 0$ as well.
Further, to use this approximation to obtain quantum dynamics with a well-characterized classical limit, we must view it as applied to coherent states, which as we discussed in the previous section are wave-packets localized in phase space.
The action of the asymptotic operator $\hat{\mathcal{L}}_{\epsilon,\lambda}$ on a coherent state in fact uncovers this crucial relationship between $\epsilon$ and $h$, as we now see.
\begin{prop}
Let $\alpha \geq 1$, $h > 0$, $\zeta_0 := (\bx_0, \bp_0) \in T^*\mathcal{M}$ and let $\psi_{\zeta_0}^h$ be a coherent state localized at $\zeta_0$. Then with $\epsilon := h^{(2 + \alpha)}$ we have,
\begin{align}
 (h^2 \hat{\mathcal{L}}_{\epsilon,\lambda}) \psi^h_{\zeta_0}(\bx_0) = |\bp_0|^2_{g(\bx_0)}\psi^h_{\zeta_0}(\bx_0) + \mathcal{O}(h).
\label{eq:L_coh}
\end{align}
\label{thm:thm1}
\end{prop}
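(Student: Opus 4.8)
The plan is to split the continuum operator $\hat{\mathcal{L}}_{\epsilon,\lambda}$ of \cref{eq:L_eps} into three pieces — the Laplace–Beltrami leading term $\Delta_g$, the first-order sampling-density correction $\tfrac{2(1-\lambda)}{\pd}\langle\nabla\pd,\nabla\rangle_g$, and the $\mathcal{O}(\epsilon)$ remainder — apply $h^2$ times each to the coherent state $\psi^h_{\zeta_0}(\bx)=h^{-\nu/4}e^{\frac{i}{h}\varphi(\bx,\zeta_0)}$, and track the powers of $h$ that each piece contributes when evaluated at the localization point $\bx_0$. By admissibility of the phase, in local coordinates near $\bx_0$ one has $\varphi(\bx_0,\zeta_0)=0$ and $d_\bx\varphi(\bx,\zeta_0)\big|_{\bx=\bx_0}=\bp_0$, so $\psi^h_{\zeta_0}(\bx_0)=h^{-\nu/4}\neq 0$ is a common factor of every term; it therefore suffices to show $h^{\nu/4}\,(h^2\hat{\mathcal{L}}_{\epsilon,\lambda})\psi^h_{\zeta_0}(\bx_0)=|\bp_0|^2_{g(\bx_0)}+\mathcal{O}(h)$, the $\mathcal{O}(h)$ in \cref{eq:L_coh} being understood relative to the prefactor $\psi^h_{\zeta_0}(\bx_0)$. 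The one elementary computation used throughout is that for a differential operator $\mathcal{D}$ of order $d$ with fixed smooth coefficients, $e^{-\frac{i}{h}\varphi}\,\mathcal{D}\!\left(e^{\frac{i}{h}\varphi}\right)$ is a polynomial in $h^{-1}$ of degree $\le d$ whose $h^{-d}$ coefficient is the principal symbol of $\mathcal{D}$ evaluated at $(\bx,d_\bx\varphi(\bx))$ and whose lower-order coefficients are finite combinations of derivatives of $\varphi$ and of the coefficients of $\mathcal{D}$; since $\varphi$ is $h$-independent, all of these are $\mathcal{O}(1)$ at $\bx_0$.

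For the leading term, $h^2\Delta_g$ has principal symbol $|\bp|^2_g$, so the fact above with $d=2$ gives $e^{-\frac{i}{h}\varphi}(h^2\Delta_g)(e^{\frac{i}{h}\varphi})(\bx_0)=|d_\bx\varphi(\bx_0)|^2_{g(\bx_0)}+\mathcal{O}(h)=|\bp_0|^2_{g(\bx_0)}+\mathcal{O}(h)$, the $\mathcal{O}(h)$ collecting the subprincipal (first-order) part of $\Delta_g$ together with the $h^{-1}$ term produced by the Hessian of $\varphi$; this is exactly the wave-packet extraction of a principal symbol recalled in \cref{sec:qdyn_geo}, specialized to the Laplacian. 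The density term is first order, so $h^2$ times its action on $\psi^h_{\zeta_0}$ at $\bx_0$ is $h^2\cdot\mathcal{O}(h^{-1})=\mathcal{O}(h)$ and is absorbed into the remainder.

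It remains to control $h^2$ times the $\mathcal{O}(\epsilon)$ part. As recalled just before the statement (see also \cite{kumar_math,hein2005graphs,nadler2006diffusion}), this part is an asymptotic series $\sum_{k\ge 1}\epsilon^k\mathcal{D}_k$ in which $\mathcal{D}_k$ is a differential operator of order $\le 2k+2$ — on $\mathbb{R}^n$ this is transparent from the kernel's symbol $\epsilon^{-1}\!\left(1-e^{-\epsilon|\bp|^2}\right)=|\bp|^2-\tfrac{\epsilon}{2}|\bp|^4+\cdots$, and on $\mathcal{M}$ curvature and the density normalizations only add lower-order terms. Applying $\mathcal{D}_k$ to $e^{\frac{i}{h}\varphi}$ costs at most $h^{-(2k+2)}$ by the fact above, so the $k$-th contribution to $h^{\nu/4}\,(h^2\hat{\mathcal{L}}_{\epsilon,\lambda})\psi^h_{\zeta_0}(\bx_0)$ is $\mathcal{O}\!\left(h^2\,\epsilon^k\,h^{-(2k+2)}\right)=\mathcal{O}\!\left(\epsilon^k h^{-2k}\right)$. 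Setting $\epsilon=h^{2+\alpha}$ turns this into $\mathcal{O}(h^{\alpha k})$, which is $\mathcal{O}(h)$ for every $k\ge 1$ since $\alpha\ge 1$; bounding the tail of the expansion by its controlled-remainder form, the entire $\mathcal{O}(\epsilon)$ contribution is $\mathcal{O}(h^{\alpha})=\mathcal{O}(h)$. Summing the three pieces yields \cref{eq:L_coh}.

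The main obstacle is the last step. The $\epsilon\to 0$ expansion of $\hat{\mathcal{L}}_{\epsilon,\lambda}$ is ordinarily stated acting on a \emph{fixed} smooth function, whereas here it must be applied to the \emph{$h$-dependent} coherent state, whose derivatives grow like powers of $h^{-1}$; one therefore needs the expansion in a form whose remainder estimates are uniform and in which the only growth is the one produced by differentiating $e^{\frac{i}{h}\varphi}$, and one must invoke the Gaussian localization of $\psi^h_{\zeta_0}$ about $\bx_0$ to discard the (exponentially small) contribution away from the diagonal. Both ingredients are available from \cite{kumar_math}; the remainder of the argument is the chain-rule bookkeeping outlined above.
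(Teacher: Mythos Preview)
Your argument is correct and follows essentially the same route as the paper's proof: expand $\hat{\mathcal{L}}_{\epsilon,\lambda}$ as the Laplace--Beltrami term plus a series $\sum_{j\ge 2}\epsilon^{j-1}G_j$ of differential operators of order $2j$, apply each to $e^{\frac{i}{h}\varphi}$ to pick up at most $h^{-2j}$, and observe that the scaling $\epsilon=h^{2+\alpha}$ with $\alpha\ge 1$ makes every higher-order contribution $\mathcal{O}(h)$ after multiplying by $h^2$. Your separate treatment of the first-order density correction and your remark on the need for remainder estimates uniform in $h$ are exactly the bookkeeping and caveat the paper records; the only cosmetic discrepancy is the sign convention $d\varphi|_{\bx_0}=\pm\bp_0$, which is immaterial since only $|\bp_0|^2_{g(\bx_0)}$ enters.
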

\noindent\textit{Proof}: see Supplementary Information (SI), Sec. I.

\cref{eq:L_coh} states that, in the sense of the discussion in the previous section, the \emph{high-frequency} data in $h^2 \hat{\mathcal{L}}_{\epsilon,\lambda}$ is the symbol of the semiclassical LB operator, $\hat{H}'$ when the \emph{configuration space resolution parameter} $\epsilon$ decays as $h^{2 + \alpha}$ for some $\alpha \geq 1$ with respect to the \emph{phase space resolution parameter} $h$.
In other words, $h^2 \hat{\mathcal{L}}_{\epsilon,\lambda}$ acting on a coherent state approximates the action of the semiclassical LB operator on that coherent state, to order $O(h)$.

In \cite{kumar_math}, $h^2 \hat{\mathcal{L}}_{\epsilon,\lambda}$ is shown to be a $\PsiDO$ with this scaling of $\epsilon$ and $h$ and then, the action of Heisenberg dynamics governed by $\hat{V}_t := e^{i t \sqrt{\hat{\mathcal{L}}_{\epsilon,\lambda}}}$ on coherent states is reduced to the form satisfying Theorem \ref{thm:qc-correspondence}.
In practice we approximate expectations such as those on the left hand side of \cref{eq:qc-correspondence} by discrete inner products that are influenced by the sampling density $\pd$. Therefore, we must account for this density as a multiplier here in the asymptotic regime.
Altogether, we have the quantum-correspondence principle for graph Laplacians, in the asymptotic (continuum) limit:

\begin{theorem}[Ref. \cite{kumar_math}]
Let $\hat{A}$ be an observable for a quantized system and $a \in h^0 S^0$ its corresponding symbol. Let the quantized system have Hamiltonian $\hat{G} = h \sqrt{\hat{\mathcal{L}}_{\epsilon,\lambda}}$, with $h=\epsilon^{\nicefrac{1}{(2+\alpha)}}$ for some $\alpha\geq 1$. The corresponding propagator is $\hat{V}_t = e^{\frac{i}{h}\hat{G}t}$. Let $\Phi_t$ denote the classical Hamiltonian flow generated by the principal symbol of $\hat{H}$: $\mathcal{H}=\vert \bp \vert_g$ (\emph{i.e.}, the homogeneous geodesic flow) and let $\ket{\psi^h_{\zeta_0}}$ be a normalized coherent state localized at $\zeta_0 = (\bx_0, \bp_0)$.
Then, there is an $h_0 > 0$ such that for all $h \in (0, h_0]$ and $|t|$ less than the injectivity radius at $\bx_0$,
\begin{align}
    \Big\vert \bra{\psi_{t,\zeta_0}} \hat{\pd}\hat{A} \ket{\psi_{t,\zeta_0}} - a \circ \Phi_t(\bx_0,\bp_0) \Big\vert = \mathcal{O}(h),
	\label{eq:trace}
\end{align}
where $\ket{\psi_{t,\zeta_0}}:=\hat{V}_t\ket{\psi^h_{\zeta_0}}/\sqrt{\bra{\psi^h_{\zeta_0}}\hat{V}_t\dg\hat{\pd}\hat{V}_t\ket{\psi^h_{\zeta_0}}}$ is the normalized propagated state, and $\hat{\pd}$ is a diagonal operator representing the sampling density $\pd$.
\label{thm:thm2}
\end{theorem}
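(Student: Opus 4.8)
\noindent\textit{Proof strategy.} The plan is to obtain \cref{thm:thm2} as a corollary of the coherent-state Egorov estimate \cref{thm:qc-correspondence}, after (i) identifying $\hat{G}=h\sqrt{\hat{\mathcal{L}}_{\epsilon,\lambda}}$ as a $\PsiDO$ with the right principal symbol, (ii) microlocalizing its dynamics to the compactly supported form required there, and (iii) unwinding the density multiplier $\hat{\pd}$ and the normalization. For (i), I would import from \cite{kumar_math} together with \cref{thm:thm1} that, under the coupling $\epsilon=h^{2+\alpha}$ with $\alpha\geq1$, the operator $h^2\hat{\mathcal{L}}_{\epsilon,\lambda}$ is a $\PsiDO$ of order $(0,2)$ whose real principal symbol equals $|\bp|^2_g$ on a neighbourhood $U$ of $\zeta_0=(\bx_0,\bp_0)$ in $T^*\mathcal{M}\setminus0$. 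Since $\bp_0\neq0$, this symbol is positive on $U$, so $\hat{G}$ is a $\PsiDO$ of order $(0,1)$ with principal symbol $|\bp|_g=\mathcal{H}$ there --- the leading symbol of $\hat{H}$ in \cref{eq:qHam} --- and hence generates, to leading order, the homogeneous geodesic flow $\Phi_t$.

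For (ii), since $|t|$ is below the injectivity radius at $\bx_0$, the arc $\Gamma:=\{\Phi_s(\zeta_0):|s|\leq|t|\}$ is a compact subset of $U$. I would pick a real $q\in C_c^\infty(T^*\mathcal{M})$, supported off the zero section, with $q\equiv|\bp|_g$ near $\Gamma$, and set $\hat{Q}:=\Op_h(q)\in h^0\Psi^{-\infty}$. Because $\hat{Q}$ and $\hat{G}$ have the same principal symbol along $\Gamma$, and coherent states propagated by either operator remain microlocally concentrated along the classical trajectory for $|s|\leq|t|$ (finite speed of propagation and wavepacket concentration, cf. \cite{kumar_math}), replacing $\hat{V}_t=e^{\frac{i}{h}\hat{G}t}$ by $e^{\frac{i}{h}\hat{Q}t}$ changes every expectation in the statement by $\mathcal{O}(h^\infty)$, while $\Flow_q^s(\zeta_0)=\Phi_s(\zeta_0)$ for $|s|\leq|t|$ since the Hamiltonian vector fields agree near $\Gamma$. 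We are then squarely in the hypotheses of \cref{thm:qc-correspondence}.

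For (iii), $\hat{\pd}$ is multiplication by a smooth, strictly positive function --- the continuum limit of the diagonal weighting built into \cref{eq:L_N} --- hence a $\PsiDO$ of order $(0,0)$ with $\bp$-independent principal symbol, so $\hat{\pd}\hat{A}\in h^0\Psi^0$ has principal symbol $\pd(\bx)\,a(\bx,\bp)$. To match the $\hat{\pd}$-weighted expectations and normalization appearing in the statement with the $L^2$-formulation of \cref{thm:qc-correspondence}, I would conjugate everything by the square root of the (position-diagonal) weight with respect to which $\hat{V}_t$ is unitary --- which differs from $\hat{\pd}$ by at most a power. Conjugation by a $\PsiDO$ with $\bp$-independent symbol preserves $\PsiDO$ orders and principal symbols, so it carries $\hat{G}$ to an $L^2$-self-adjoint $\PsiDO$ with unchanged principal symbol $\mathcal{H}$, $\hat{V}_t$ to an $L^2$-unitary propagator, and $\ket{\psi^h_{\zeta_0}}$ to a coherent state still localized at $\zeta_0$ --- differing from a positive scalar multiple of $\ket{\psi^h_{\zeta_0}}$ only by a state $\hat{r}\ket{\psi^h_{\zeta_0}}$ with $\hat{r}$ a multiplication operator whose principal symbol vanishes at $\zeta_0$, hence harmless in the matrix elements below. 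Running \cref{thm:qc-correspondence} through this conjugation after the cutoff of (ii), once with observable $\hat{A}$ and once with $\hat{I}$, both $\bra{\psi^h_{\zeta_0}}\hat{V}_t\dg\hat{\pd}\hat{A}\hat{V}_t\ket{\psi^h_{\zeta_0}}$ and $\bra{\psi^h_{\zeta_0}}\hat{V}_t\dg\hat{\pd}\hat{V}_t\ket{\psi^h_{\zeta_0}}$ emerge with the \emph{same} positive prefactor --- a fixed number built from $\pd$ and the weight at $\bx_0$ and at the base point of $\Phi_t(\zeta_0)$, bounded below for $|t|$ up to the injectivity radius --- multiplying, respectively, $a\circ\Phi_t(\bx_0,\bp_0)+\mathcal{O}(h)$ and $1+\mathcal{O}(h)$. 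The prefactor cancels in the quotient defining $\bra{\psi_{t,\zeta_0}}\hat{\pd}\hat{A}\ket{\psi_{t,\zeta_0}}$, leaving $a\circ\Phi_t(\bx_0,\bp_0)+\mathcal{O}(h)$, with $h_0$ and uniformity in $|t|$ inherited from \cref{thm:qc-correspondence}. This cancellation is precisely why this quotient --- the quantity the data-driven algorithm estimates through discrete, density-weighted inner products --- is the natural object to track, and why the exact power of $\pd$ encoded in $\hat{\pd}$ does not matter.

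The heart of the matter, and the main obstacle, is (i)--(ii): establishing that the first-order difference operator $\hat{\mathcal{L}}_{\epsilon,\lambda}$ of \cref{eq:L_eps}, under the two-scale coupling $\epsilon=h^{2+\alpha}$, is genuinely a $\PsiDO$ with principal symbol $|\bp|^2_g$ near $\zeta_0$, and that its square-root Heisenberg dynamics can be cut off to the compactly-microlocalized form demanded by \cref{thm:qc-correspondence} --- controlling the error from the $\mathcal{O}(\epsilon)$ tail of differential operators of unbounded order in \cref{eq:L_eps} (which only \emph{locally} reproduces $|\bp|^2_g$ in phase space) and from the interplay between the configuration-space resolution $\epsilon$ and the phase-space resolution $h$. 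This is exactly the analysis carried out in \cite{kumar_math} underpinning \cref{thm:thm1}; granted it, the density bookkeeping above is routine symbol calculus.
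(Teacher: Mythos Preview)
Your proposal is essentially correct and matches the route the paper itself indicates. Note that the present paper does not actually prove \cref{thm:thm2}: it is stated with attribution to the companion work \cite{kumar_math}, and the only in-text discussion of the argument is the sentence preceding the theorem, which says that in \cite{kumar_math} the operator $h^2\hat{\mathcal{L}}_{\epsilon,\lambda}$ is shown to be a $\PsiDO$ under the scaling $\epsilon=h^{2+\alpha}$, and that the Heisenberg dynamics of $\hat{V}_t=e^{it\sqrt{\hat{\mathcal{L}}_{\epsilon,\lambda}}}$ on coherent states is then ``reduced to the form satisfying Theorem~\ref{thm:qc-correspondence}.'' That is precisely your steps (i)--(ii), and your step (iii) handling $\hat{\pd}$ and the normalization is the density bookkeeping the paper alludes to when it says one must ``account for this density as a multiplier here in the asymptotic regime.'' So your outline is faithful to the approach the paper sketches, with the substantive analysis deferred to \cite{kumar_math} in both cases.
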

This theorem tells us that if we propagate coherent states for $|t| < \operatorname{inj}(\bx_0)$ with the asymptotic data-driven propagator, we can approximate observables of the classical Hamiltonian flow, and this approximation improves as $h\rightarrow 0$ ($\epsilon\rightarrow 0$). Any features of the classical flow can be reconstructed by suitable choices for $\hat{A}$. This result is reminiscent of a semiclassical expansion of quantum dynamics around a classical trajectory; \ie for coherent state initial states and short times, time-propagated quantum observables are equal to corresponding classical observables, with $\mathcal{O}(h)$ quantum corrections. Although such expansions are known in specialized settings \cite{Dias_Mikovic_Prata_2006, Polkovnikov_2010}, our result is (i) specific to the manifold learning context, where the generator of evolution is the data-driven approximation $\hat{G}$, and (ii) general in the sense that it applies to dynamics on any Riemannian manifold \emph{satisfying the assumptions of} Section \ref{sec:qdyn_geo}.
In practice, we define coordinates on the manifold, $u:\mathcal{M}\to \mathbb{R}^m$, and compute $\expect{\hat{u}_j} := \bra{\psi_{t,\zeta_0}} \hat{u}_j\hat{\pd} \ket{\psi_{t, \zeta_0}} = \int_\mathcal{M} u_j(\bx) \vert \psi_{t,\zeta_0}(\bx)\vert^2 \pd(\bx)d\bx$, where $\hat{u}_j=\Op_h(u_j)$, which as an operator is simply multiplication by $u_j$. The equality follows from the fact that $\hat{u}_j$ and $\hat{\pd}$ are diagonal operators in position basis.

\subsection{The role of $h$ in manifold learning}
\label{sec:role_h}
What is the role of $h$ in the quantum dynamics formulation of manifold learning? Unlike in physical theories, $h$ is not a constant but a function of the sampled dataset $V$.
According to the proof of \cref{thm:thm1}, setting $h \in \omega(\sqrt{\epsilon})$, for $\epsilon$ the scale parameter in the graph Laplacian construction of the heat kernel, results in a data-driven propagator that approximates the action of the LB operator on coherent states.
This $h(\epsilon)$ relation is intended to set the quantization scale of the quantum system to the resolution of the manifold provided by the sampled data $V$.
The rationale for this is motivated by the following: the graph Laplacian construction approximates a diffusion process that is discretized to $\sqrt{\epsilon}$ balls on the manifold.
This process determines the smallest scale to which we can resolve the manifold and in doing so, uses all of the available resolution provided by the sampled data on configuration space.
With our choice of coherent states as initial states, we redistribute this resolution equally between the position and momentum degrees of freedom.
Therefore, following the uncertainty principle, we must set the minimal phase space resolution for the quantum dynamical system to be above this, \ie $h = \omega(\sqrt{\epsilon})$.
From this perspective, the $h\rightarrow 0$ limit is both the classical limit of the quantum dynamics and also the limit where the sampled data fully covers the manifold.
The practice we have opted for here is to set $h = \epsilon^{\nicefrac{1}{(2 + \alpha)}}$ with $\alpha \geq 1$ as a way to balance errors that arise from coherent state delocalization and from approximating the dynamics, as discussed in the proof of \cref{thm:thm1}.

Another form of the uncertainty principle \cite{fefferman1983uncertainty} dictates that the restriction of a $\PsiDO$ to a region of phase space constitutes its restriction to a portion of the spectrum.
The $h$ parameter thus simultaneously plays the role of setting an implicit \emph{spectral resolution} scale: due to the localization of a coherent state, applying a $\PsiDO$ to it effectively restricts this action to a portion of the spectrum governed by the momentum and the scale $h$.
In the present context, \cite{kumar_math} shows that coherent states are implicitly driven, to $\mathcal{O}(h)$ error, by spectral truncations of the propagator $\hat{V}_t$.
That is, $|\langle \psi_{t,\zeta_0} | \hat{A} | \psi_{t,\zeta_0} \rangle - \langle \psi_{t,\zeta_0,\sigma} | \hat{A} | \psi_{t,\zeta_0,\sigma} \rangle| = \mathcal{O}(h)$ with $\ket{\psi_{t,\zeta_0,\sigma}} = \Pi_{\sigma} \hat{V}_t \Pi_{\sigma} \psi_{t,\zeta_0}$ and $\Pi_{\sigma}$ the spectral projector onto the eigenspaces of $\mathcal{L}_{\epsilon,\lambda}$ with eigenvalues at most $\sigma \sim |\bp_0|_{g(\bx_0)}^2/h^2$.
Thus, the coherent state evolves according to an energy scale set by $|\bp_0|_{g(\bx_0)}$ and $h$.

\section{Data-driven approximation of quantum dynamics}
\label{sec:data_dyn}
This section outlines how the program to extract geodesic distances through quantum propagation described in the previous section can be realized through approximation of the quantum propagator, initial states, and measurements using the point cloud dataset, $V$. Pseudocode describing the procedures in this section is given in the SI, Sec. VII.

\subsection{Propagator}
\label{sec:dd_prop}
We approximate the unitary propagator $\hat{V}_t$ through the graph Laplacian specified in \cref{eq:L_N}. We choose an exponential function for $k(\cdot)$, resulting in a Gaussian kernel $[\breve{T}_{\epsilon}]_{i,j}=e^{-\nicefrac{\vert\vert v_i-v_j\vert\vert^2}{2\epsilon}}$ (in which case, $c_0=\sqrt{\pi}$ and $c_2=\sqrt{\pi}/2$) and in the following, we will assume $\lambda=1$ normalization to reduce the error terms appearing in \cref{eq:L_eps}, for in practice we find it to perform slightly better than $\lambda = 0$, although the results are quite comparable. In the following, if we omit the $\lambda$ subscript, \eg $\breve{L}_{\epsilon}$ and $\hat{\mathcal{L}}_\epsilon$, this implies the choice $\lambda=1$.

In view of \cref{thm:thm2}, we form the matrix $\breve{G} = \sqrt{h^2\sLap}$ with $h=\epsilon^{\nicefrac{1}{(2+\alpha)}}$ for some $\alpha\geq 1$, and then the data-driven approximation of the quantum propagator $\hat{V}_t$ over a time $t$ is formed as $\breve{V}_{t} = e^{\frac{i}{h}\breve{G} t}$.
$\sLap$ is not a symmetric matrix, and therefore in order to compute the above functions of it, we perform a similarity transform of $\sLap$ to a symmetrized version  $\breve{D}^{\nicefrac{1}{2}}_{\epsilon,\lambda}\sLap \breve{D}^{-\nicefrac{1}{2}}_{\epsilon,\lambda}$, perform the required computations spectrally and invert this similarity transformation afterwards.

Note that $\breve{G}$ is an approximation of the Hamiltonian in the position basis of the quantized dynamical system. We can then propagate initial states expressed in the position basis, $\vert \psi_0 )$, as usual, $\vert \psi_t ) = \breve{V}_t \vert \psi_0 )$. Integrating classical dynamics to get time evolved quantities requires working with phase space variables, both positions and momenta. In contrast, by quantizing we are able to work entirely in the position basis, where the data is sampled, an obvious advantage of quantization for our manifold learning application.

\subsection{Initial states} \label{sec:state-prep}
Using the samples from $\mathcal{M}$ in $V$ we will construct approximations of the phase space localized coherent states, in \cref{eq:coherent-state}, when the expected position $\bx_0$ corresponds to a sample point $v_0 := \iota(\bx_0) \in V$.
The definition of a coherent state requires only satisfying the \emph{admissibilty} conditions given in \cite{kumar_math} for its phase and as discussed there, such a state can be readily prepared by using extrinsic coordinates or a local coordinate patch about $\bx_0$.
In practice, the prescribed phase may have a differential vector lying outside of the tangent space of $\iota(\bx_0)$.
Nevertheless, the procedures that follow give a phase sufficiently close to admissible \emph{with high probability} that this issue can be circumvented.

Since $v_i$ represents a point $i$ in a set of global, extrinsic coordinates, an approximation of a coherent state with expected position $\bx_0$ is given by a vector with elements
\begin{align}
	\left[\vert \tilde{\psi}^h_{\zeta_0} )\right]_{\ell} = e^{\frac{i}{h}(v_\ell-v_0)^{\sf T}p_0}e^{-\frac{\lVert v_\ell - v_0 \rVert ^2}{2h}}, \quad 1\leq \ell \leq N ,
	\label{eq:coh_extrinsic}
\end{align}
wherein $p_0 = (v_n-v_0)/||v_n-v_0|| \in \mathbb{R}^n$
is a vector in the direction of $v_n$, which is one of any fixed number $n_{\rm coll}$ nearest neighbors of $v_0$.
Here, the uncertainty parameter $h$ is determined by the choice of $\epsilon$ used in the data-driven construction of the propagator.
This is a finite-dimensional approximation of $\ket{\tilde{\psi}^h_{\tilde{\zeta}_0}} := \ket{e^{\frac{i}{h} \varphi(\cdot ; \tilde{\zeta}_0)}}$, with $\tilde{\zeta}_0 = (\bx_0=\iota^{-1}(v_0), \bp_0)$ and wherein $p_0$ approximates $\bp_0$, the tangent direction between $v_n$ and $v_0$.
We denote the normalized version of $\sket{\tilde{\psi}^h_{\zeta_0}}$ by $\sket{\psi^h_{\zeta_0}}$.

Alternatively, one can also form a coherent state using local information derived through a data-driven approximation of the tangent space at $\bx_0$.
An approach to this is the local principal component analysis procedure (LPCA) as described in \cite{singer2012vector}, which as shown there, yields an approximation to the tangent space up to second order in the size of the local neighborhood\footnote{An application of this feature is that by preparing the state in LPCA coordinates at $\bx_0$ and propagating to time $|t| < \operatorname{inj}(\bx_0)$, one can extend the approximation of normal coordinates from an $O(\sqrt{h})$ neighborhood to the whole of the sampled data within an $O(1)$ radius ball inside the normal neighbourhood at $\bx_0$.}, \emph{with high probability}.
 We perform LPCA on samples points within a Euclidean distance $\delta_{\rm PCA}$ from the  sample point $v_0$ corresponding to the initial point $\bx_0$; \ie on sample points in the set $\mathcal{S}_0 = \{\ell ~\vert~ \vert\vert v_0-v_\ell\vert\vert^2 \leq \delta_{\rm PCA} \}$. The dimension of the approximated tangent space is dictated by the decay of singular values of the LPCA covariance matrix, or can be set if the manifold dimension is known \emph{a priori}. LPCA defines a linear map $\breve{O}:\mathbb{R}^n \to \mathbb{R}^{\tilde{\nu}}$, where $\tilde{\nu}$ is the dimension of the LPCA-determined subspace. Let $\vartheta^{v_0}_\ell \in \mathbb{R}^{\tilde{\nu}}$ be the LPCA coordinates for all $v_\ell$ with $\ell \in \mathcal{S}_0$. Once the LPCA approximation to the tangent space is formed, the initial momentum can be chosen to be a vector in this space. Furthermore, we approximate geodesic distances with the Euclidean distances in LPCA space, and the inner product between tangent vectors that determines the phase of a coherent state is approximated by a Euclidean inner product.

This gives rise to the following initial state :
\begin{align}
	\left[\vert \tilde{\psi}^h_{\zeta_0} )\right]_{\ell} =
	\begin{cases}
		e^{\frac{i}{h}(\vartheta^{v_0}_\ell-\vartheta^{v_0}_0)^{\sf T}p_0}e^{-\frac{\lVert \vartheta^{v_0}_\ell - \vartheta^{v_0}_0 \rVert ^2}{2h}},&   \text{if} \quad \ell \in \mathcal{S}_0 \\
		0,& \text{otherwise},
	\end{cases}
	\label{eq:coherent_state_data}
\end{align}
with $p_0$ a unit vector in $\mathbb{R}^{\tilde{\nu}}$.
Altogether, this is a finite-dimensional, unnormalized LPCA approximation of $\ket{\psi^h_{\tilde{\zeta}_0}}$, with $\tilde{\zeta}_0=(\bx_0=\iota^{-1}(v_0), \bp_0)$, and $p_0$ approximating $\bp_0$ which is an orthogonal projection of $p_0$ onto the tangent space at $v_0$.

As before, the uncertainty parameter $h$ is determined by the choice of $\epsilon$ used in the data-driven construction of the propagator, and we denote the normalized version of this vector by $\vert \psi^h_{\zeta_0} )$. The propagation speed of this state is one due to the normalization of the momentum vector, and hence we often state distances as times in the following. An important point to note is that the formation of this initial state vector does not require computations with the original (possibly high-dimensional) samples.

\subsection{Measurements to extract geodesic distances} \label{sec:geodesic-measurements}

Following \cref{thm:thm2} and the discussion following it, we use the expected position of a propagated state $\vert \psi_{t, \zeta_0})$ to identify a point on $\mathcal{M}$ that is approximately a distance $t$ from the expected position of the initial coherent state. The most direct route to the expected position is through the global extrinsic coordinates (in a larger embedding space) provided by $v_i$.
Let $q_\ell := \vert[\vert \psi_{t,\zeta_0})]_\ell\vert^2, ~ 1\leq\ell\leq N$, be the probability distribution over data points determined by the propagated state.
Then, the mean position of the propagated state is defined as the sample point $\expect{\breve{\bx}_t} \in V$ that minimizes the Euclidean distance in $\mathbb{R}^n$ to the empirical mean $\bar{v}=\sum_{\ell} q_\ell v_l$.

As in the formation of the initial state, the localization of the propagated wavepacket also enables approximation of the expected position without global computations involving the high-dimensional extrinsic coordinates if we proceed through an LPCA approximation of the local neighborhood of the \emph{maximum} of the propagated state.
Namely, we perform LPCA around datapoint $v_{\ell^*}$, where $\ell^* = \arg\max_{\ell} q_{\ell}$, using all data points in $\mathcal{S}_{\ell^*}= \{\ell~\vert~ \vert\vert v_{\ell^*}-v_\ell\vert\vert^2 \leq \delta_{\rm PCA} \}$.
This LPCA computation is decoupled from that of the initial state preparation, hence this neighbourhood size parameter $\delta_{\rm PCA}$ and the resulting coordinate map $\breve{O}$ may also be different from that of the previous section.
The mean position is then defined as the data point in LPCA coordinates, $\expect{\breve{\bx}_t}^{\LPCA} \in \breve{O}[V]$ that minimizes the Euclidean distance in LPCA space to the empirical mean position $\bar{\vartheta} = \sum_{\ell\in\mathcal{S}_{\ell^*}} q_\ell \vartheta^{v_{\ell^*}}_\ell$, where $\vartheta^{v_{\ell^*}}_\ell$ are LPCA coordinates for the data point $\ell$.

The above procedures yield a data point corresponding to $\breve{\bx}_t \in \{ \iota^{-1}(\expect{\breve{\bx}_t}),  \iota^{-1} \circ \breve{O}^{\mathsf T} \expect{\breve{\bx}_t}^{\LPCA} \} \subset \mathcal{M}$.
They are justified by the fact that for $h$ following a certain decay rate, we have with high probability that $\breve{\bx}_t$ is \emph{intrinsically} within distance $h$ from the geodesic point $\bx_t := \pi_{\mathcal{M}} \Flow_t(\bx_0, \bp_0)$: this is,
\begin{prop}
	\label{prop:prop2}
	Let $\beta := (2 + \alpha)(\nu + 1) + 1$ and $\beta_0 := (2 + \alpha)(\nicefrac{5 \nu}{2} + 4) + 2(\nu + 2)$.
	Then, $\exists ~ h_0 > 0$ such that
	\begin{enumerate}
	\item	if the initial state is given by \cref{eq:coh_extrinsic} and $N^{-\nicefrac{1}{\gamma_0}} \lesssim h < h_0$, then with high probability, $d_g(\breve{\bx}_t, \bx_t) < h$ and
	\item	if the initial state is given by \cref{eq:coherent_state_data} and $N^{-\nicefrac{1}{\gamma_1}} \lesssim h < h_0$, then with high probability, $d_g(\breve{\bx}_t, \bx_t) < h$,
	\end{enumerate}
	wherein $\gamma_0 := \max\{ \nu(\nu/4 + 1 + \beta), \beta_0 \}$ and $\gamma_1 := \max\{ (\nu + 2)(\nu/4 + 1 + \beta)/2, \beta_0 \}$ whenever $\nu \geq 2$ and if $\nu = 1$ is possible, these maxima must be taken with $3\nu/2 + 2\beta$.
\end{prop}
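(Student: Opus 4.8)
\emph{Proof proposal.} The plan is to split the error $d_g(\breve{\bx}_t,\bx_t)$ into a \emph{continuum (semiclassical)} contribution and a \emph{finite-sample} contribution, bound each separately, and then carve out the admissible window for $h$ by demanding that the finite-sample contribution — which is inevitably amplified by a factor $1/h$ when pushed through the propagator — be no larger than the $\mathcal{O}(h)$ continuum contribution. For the continuum part I would invoke \cref{thm:thm2} with the observable $\hat{A}=\hat{u}_j=\Op_h(u_j)$ for each coordinate $u_j$ (the ambient coordinates $\iota_j$ in case 1, the LPCA coordinates $\vartheta_j$ in case 2), which gives $|\bra{\psi_{t,\zeta_0}}\hat{u}_j\hat{\pd}\ket{\psi_{t,\zeta_0}}-u_j(\bx_t)|=\mathcal{O}(h)$ for the \emph{true} coherent state propagated by the \emph{true} continuum propagator $\hat{V}_t=e^{\frac{i}{h}\hat{G}t}$, with \cref{thm:thm1} licensing the scaling $\epsilon=h^{2+\alpha}$ that underlies $\hat{G}=h\sqrt{\hat{\mathcal{L}}_{\epsilon,\lambda}}$. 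Since the admissible phase makes $\ket{\psi^h_{\zeta_0}}$, and hence by Egorov for $|t|<\operatorname{inj}(\bx_0)$ also $\ket{\psi_{t,\zeta_0}}$, concentrated in an $O(\sqrt{h})$ ball, the bounded second fundamental form forces the density-weighted mean $\bar{u}=(\expect{\hat{u}_1},\dots)$ to lie within $O(h)$ of $\mathcal{M}$, with its foot point within $O(h)$ of $\bx_t$ intrinsically; so the continuum analogue of $\breve{\bx}_t$ is already $O(h)$-close to $\bx_t$.

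For the finite-sample part I would quantify three approximations. First, the propagator: the graph-Laplacian convergence theory (\cite{hein2005graphs,kumar_math}) supplies, with high probability, a Bernstein/kernel-density-estimator bound $\|\Pi_\sigma(\sLap-\hat{\mathcal{L}}_{\epsilon,\lambda})\Pi_\sigma\|\lesssim\mathrm{err}(N,\epsilon)$ restricted to the spectral window $\sigma\sim|\bp_0|^2_{g(\bx_0)}/h^2$ carrying the coherent state's mass (cf.\ the spectral-truncation discussion in \cref{sec:role_h}), after the symmetrizing similarity transform of \cref{sec:dd_prop}; on that window $\sqrt{\cdot}$ is Lipschitz away from $0$, so $\breve{G}$ tracks $\hat{G}$ to the same order, and Duhamel's formula then gives $\|(\breve{V}_t-\hat{V}_t)\Pi_\sigma\|\lesssim\frac{|t|}{h}\mathrm{err}(N,\epsilon)$, the $1/h$ being the semiclassical amplification. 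Second, the initial state: \cref{eq:coh_extrinsic} (resp.\ \cref{eq:coherent_state_data}) approximates $\ket{\psi^h_{\zeta_0}}$ with high probability, with the rate driven by the same concentration estimates plus, in case 2, the second-order tangent-space error of LPCA (\cite{singer2012vector}) and the concentration of $|\mathcal{S}_0|$. Third, the measurement: the discrete $\pd$-weighted inner products and the empirical mean $\bar{v}$ (resp.\ $\bar{\vartheta}$ about $v_{\ell^*}$) approximate their continuum counterparts with high probability, and snapping to the nearest sample point costs only the sampling radius, which is $o(h)$ once $N\gtrsim h^{-\nu}$.

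To assemble, I would substitute $\epsilon=h^{2+\alpha}$ so each high-probability rate becomes a function of $N$ and $h$, and then require the amplified propagator error $\tfrac1h\mathrm{err}(N,h^{2+\alpha})$, the initial-state error, and the LPCA-consistency/neighbor-count errors all to be $O(h)$. This produces a short list of inequalities of the form $N\gtrsim h^{-\gamma}$, and the largest exponent among them is $\gamma_0$ in case 1 and $\gamma_1$ in case 2. The two entries of the maxima correspond to the two competing budgets: a \emph{propagator/coherent-state resolution} budget, in which $\nu/4$ is the kernel-density-estimator variance exponent and $\beta=(2+\alpha)(\nu+1)+1$ bundles the re-expression of $\epsilon$ in $h$ with the $1/h$ amplification; and a \emph{mean-position/LPCA-consistency} budget $\beta_0=(2+\alpha)(5\nu/2+4)+2(\nu+2)$. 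The dimension reduction from $\mathbb{R}^n$ to $\mathbb{R}^{\tilde{\nu}}$ with $\tilde{\nu}\approx\nu$ is precisely what replaces the multiplier $\nu$ by $(\nu+2)/2$ in the first budget, and $\nu=1$ needs separate treatment because the ball-counting bound degenerates there, forcing the extra term $3\nu/2+2\beta$ into the maxima.

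The hard part will be the first of these three approximations: transporting a \emph{pointwise}, high-probability estimate on the non-normal matrix $\sLap$ through the symmetrizing similarity transform (tracking the relevant condition numbers), through the operator square root (singular at $0$, which is why one must first detour to a spectral window bounded away from $0$ on which the coherent state is supported to $\mathcal{O}(h)$), and through the $\tfrac1h$-amplifying matrix exponential uniformly over $|t|<\operatorname{inj}(\bx_0)$, all while keeping the dependence on $N$ and $\epsilon=h^{2+\alpha}$ tight enough that the budget closes with only a polynomial lower bound $N\gtrsim h^{-\gamma}$. Everything downstream — enumerating and comparing the bias/variance exponents to read off $\gamma_0$ and $\gamma_1$ — is essentially bookkeeping.
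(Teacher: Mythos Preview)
Your decomposition is \emph{continuum versus discrete}: you want to compare $\breve V_t$ to $\hat V_t$ in operator norm on a spectral window (via Duhamel and Lipschitz continuity of $\sqrt{\cdot}$), then invoke \cref{thm:thm2} for the continuum object. The paper's decomposition is different and sidesteps exactly the step you flag as the hard part. It stays entirely on the discrete side: it compares $\breve V_t$ applied to the \emph{prepared} state $\sket{\psi^h_{\tilde\zeta_0}}$ against $\breve V_t$ applied to a \emph{bona fide} coherent state $\cstate{\zeta_0}$, and then invokes a black-box consistency result (\cite[Prop.~4]{kumar_math}) for the latter, which already packages the discrete-to-geodesic comparison and is the source of $\beta_0$. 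The difference of propagated \emph{densities} is controlled not by an operator-norm bound on $\breve V_t-\hat V_t$ but by a crude $L^\infty$ estimate on the Nystr\"om extension of $\breve V_t$ alone, of the shape $|\breve V_t[u]|\lesssim \|u\|_\infty + t\,\epsilon^{-(\nu+1)/2}\|u\|_{N,2}\cdot(\text{degree factors})$. Squaring and substituting $\epsilon=h^{2+\alpha}$ is precisely what forces $\beta=(2+\alpha)(\nu+1)+1$: one needs the initial-state perturbation to be $O(h^\beta)$ so that the $\epsilon^{-(\nu+1)}$ blow-up leaves an $O(h)$ density error. So the amplification factor is $\epsilon^{-(\nu+1)/2}$, not the $1/h$ Duhamel factor you anticipate.

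Two smaller corrections on the bookkeeping. The exponent $\nu/4$ does not come from a kernel-density-estimator variance; it is the coherent-state normalization $h^{-\nu/4}$, which dictates that an $O(h^{\nu/4+1+\beta})$ error in the \emph{phase} is needed to produce an $O(h^\beta)$ error in the normalized state (this is isolated as a separate lemma). The term $\nu(\nu/4+1+\beta)$ in $\gamma_0$ then comes from a Chernoff bound on finding a neighbour $v_n$ within extrinsic distance $h^{\nu/4+1+\beta}$ of $v_0$, and $(\nu+2)(\nu/4+1+\beta)/2$ in $\gamma_1$ from the LPCA concentration of \cite{singer2012vector} at that same scale; the extra term $3\nu/2+2\beta$ arises from the empirical-$L^2$ norm consistency of the prepared state, not from a ball-counting degeneracy in $\nu=1$. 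Your route could in principle be made to work, but the paper's decomposition buys you freedom from ever proving spectral/operator-norm convergence of $\breve V_t$ through the square root, which is why it closes with the stated polynomial rates.
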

\begin{proof}
The proof is given in the SI, Sec. II and relies on the convergence properties established in \cite{kumar_math}.
Essentially, the consistency can be applied almost directly, save for that our initial states have phases that are some \emph{perturbations} away from being admissible.
We show that with high probability, both the extrinsic and LPCA formulations of the phases are sufficiently close to admissible such that the corresponding initial states and their propagations with respect to $\breve{V}_t$ are $\mathcal{O}(h)$ perturbations of the propagations of coherent states.
Then, the properties of localization and means of the propagations of coherent states given in \cite{kumar_math} can be applied directly.
\end{proof}

If we proceed through LPCA, we are using Euclidean approximations of local neighborhoods of the manifold for construction of the initial state and for approximating the expected position of the propagated state. However, since the propagator that connects the local neighborhood of the initial state, $\vert\psi_0)$, and the final state, $\vert \psi_t)$, is global we expect to go beyond linear methods. To confirm this, in \cref{sec:eg} we compare geodesic distances extracted using quantum dynamics to those extracted using a Euclidean approximation of propagation along geodesics (\emph{Euclidean propagation}); specifically, we use the Euclidean parametrization of the neighborhood around $v_0$ provided by LPCA to determine the sampled data point lying closest (in Euclidean norm in LPCA coordinates) to a point $\vartheta^{v_0}_0 + p_0 t$. This point is denoted $\bx_t^{E}$.

Finally, we mention that while the expected value of the position coordinate over the probability distribution determined by the propagated state is the best estimate of position along the geodesic path, due to the localization of the propagated coherent state the extrinsic coordinates of the maximizer $\ell^* := \arg\max_{\ell} q_{\ell}$, are often also a good estimate of the quantities of interest, $\iota_i \circ \Phi_t(\bx_0, \bp_0)$ \cite{kumar_math}.
We have found this to be useful in cases where $N$ is small (see SI, Sec. VIII).

\subsection{Choice of parameters}
\label{sec:deviation}
The computational procedure we have outlined in this section requires choice of several parameters, the most important of which are $\epsilon$ and $\alpha$.
Appropriate choice of both of these parameters is critical to accurate estimation of geodesic distances, and consequently, manifold learning.
This is a familiar situation in many manifold learning approaches and often there are data-driven heuristics to guide the choice of optimal parameters; \eg see Ref. \cite{coifman_random_2008} for an approach to choosing the $\epsilon$ scale parameter in
graph Laplacian applications.
We now show that one can utilize the result in \cref{thm:thm1} to define a procedure for identifying regions in $(\epsilon,\alpha)$ space that lead to good performance for a given dataset.

\cref{thm:thm1} and the semiclassical representation of $\hat{\mathcal{L}}_{\epsilon,\lambda}$ imply the following about its expectation under a coherent state  $h^2 \bra{\psi^h_{\zeta_0}} \hat{\mathcal{L}}_{\epsilon,\lambda}\ket{\psi^h_{\zeta_0}} = \vert\bp_0\vert^2 + \mathcal{O}(h)$ \cite{kumar_math}. If we choose normalized momenta for initial states, this expectation $\approx 1$. Therefore, a heuristic for choosing $\epsilon$ and $\alpha$ is to compute in the finite-dimensional setting, the deviation $\mathcal{D}=\vert h^2(\psi^h_{\zeta_0}\vert \sLap \vert \psi^h_{\zeta_0})-1\vert$ and identify the viable $(\epsilon,\alpha)$ pairs as the ones where this quantity is small. Note that this only requires construction of the matrix $\sLap$ and initial state vector, no spectral computations or propagations. Given that $\vert\psi_{\zeta_0}^h)$ is a restriction of a coherent state to the samples, this deviation condition is effectively evaluating the closeness of matrix elements elements of $\sLap$ (in the $\mathcal{O}(h)$ orthonormal coherent state basis) to the asymptotic quantity $\hat{\mathcal{L}}_{\epsilon,\lambda}$. In practice, we recommend computing the average of this deviation over initial states centered at several sampled data points or to fix the initial state location ($\bx_0$) and to compute $\sLap$ using several samples from the dataset.

\section{Illustrations}
\label{sec:eg}
In this section we demonstrate geodesic distance extraction using quantum propagation, and its application for visualization, on some sample datasets. We use LPCA-based constructions of initial states and position expectations.

\begin{figure*}[t]
\centering
\includegraphics[width=1\linewidth]{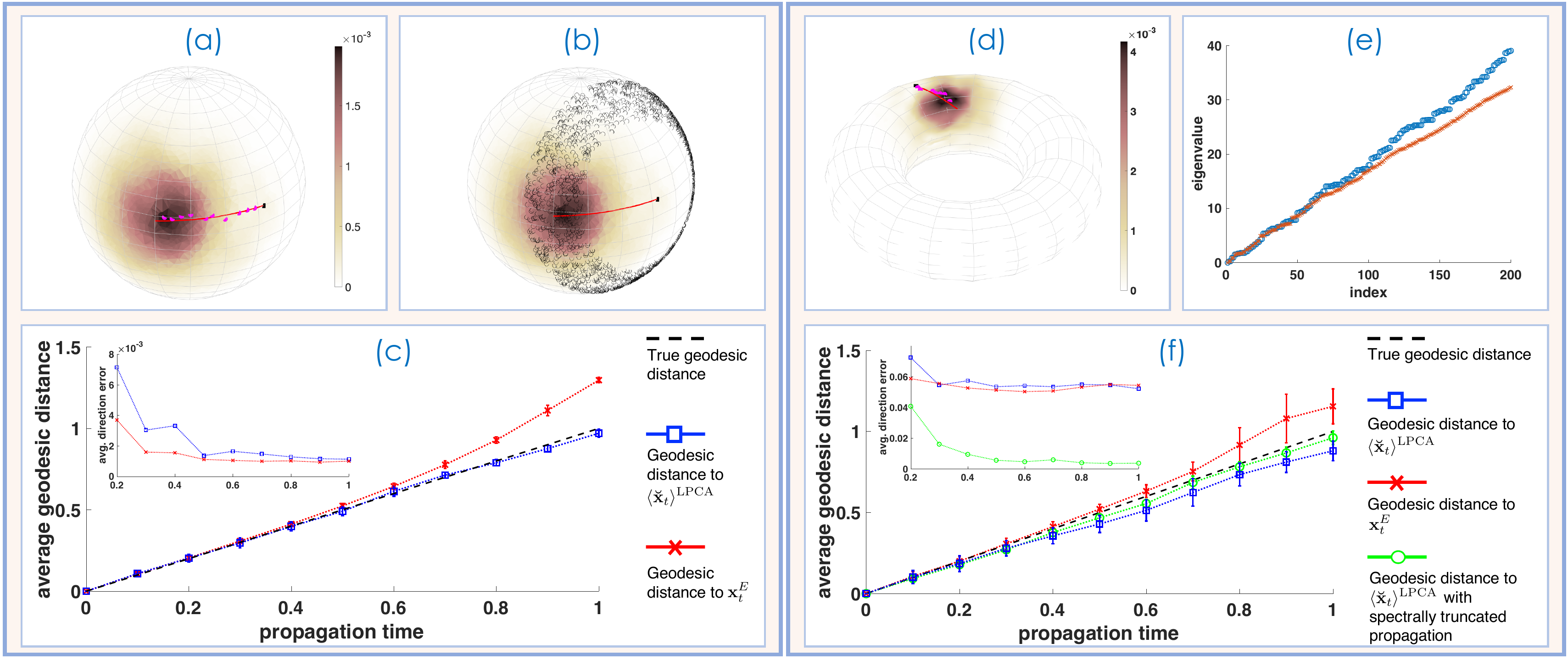}
\caption{Illustration of geodesic extraction through quantum dynamics on two sample manifolds: the sphere ((a)-(c)) and torus ((d)-(f)). In both cases, LPCA is used to construct initial states and compute expected position. \textbf{(a)} Example propagation of a coherent state on the unit sphere with $N=8000$ sample points and $\epsilon=e^{-4.7}, \alpha=1.6, \delta_{\rm PCA}=1.5$. Initial expected position is indicated by the black square. The point cloud is colored according to $q_\ell$, and triangulated using Delaunay triangulation for visualization purposes. The red curve is the true unit-speed geodesic curve up to $t=1$ with the specified initial state, and the magenta dots indicate mean positions determined through data-driven coherent state propagation for ten steps of $\Delta t=0.1$. \textbf{(b)} The propagated state at $t=1$, with the black circles indicating sample points $v_\ell$ such that $d_g(v_\ell, v_0)\leq h$, to show that the propagated state is localized in a region dictated by $h$. \textbf{(c)} The average geodesic distance of the propagated state from the initial point as a function of propagation time. The average is taken over $m=50$ propagations and the spread around the average is shown using one standard deviation error bars. The blue squares correspond to statistics of the geodesic distance to the point extracted through quantum dynamics and the red crosses correspond to Euclidean propagation (see main text). The inset shows the average error in propagation direction as a function of time. \textbf{(d)} Example propagation of a coherent state on a torus with radii ($r=0.8, R=2$) with $N=12000$ sample points and $\epsilon=e^{-3.8}, \alpha=1.8, \delta_{\rm PCA}=2.0$. The point cloud is colored according to $q_\ell$, and an $\alpha$-shape triangulation of the point cloud is used for visualization. The red curve is the true unit-speed geodesic curve up to $t=1$ with the specified initial state, and the magenta dots indicate mean positions determined through data-driven coherent state propagation for ten steps of $\Delta t=0.1$.  \textbf{(e)} A comparison of the first 200 eigenvalues of the true LB operator on the torus (blue circles) and the data-driven reconstruction (red crosses). \textbf{(f)} The average geodesic distance of the propagated state from the initial point as a function of propagation time. The average is taken over $m=50$ propagations and the spread around the average is shown using one standard deviation error bars. The blue squares correspond to statistics of the geodesic distance to the point extracted through quantum dynamics and the red crosses correspond to Euclidean propagation (see main text). The green circles correspond to statistics of geodesic distance calculated using a spectrally truncated quantum propagator. The inset shows the average error in propagation direction on the torus.
\label{fig:sphere_and_torus}}
\end{figure*}

\subsection{2-sphere}
We uniformly sample $N=8000$ points from the unit 2-sphere embedded in $\mathbb{R}^3$ ($n=3, \nu=2$). The quantum propagator $\breve{V}_{\Delta t}$ is formed with $\Delta t=0.1, \epsilon=e^{-4.7}$. \cref{fig:sphere_and_torus}(a) shows an example propagation of an initial coherent state ($\alpha=1.6$) localized a point (black square) with some initial momentum.
The red curve shows the true unit-speed geodesic path, obtained by solving the geodesic equation for the 2-sphere, with the given initial conditions, and the magenta dots show $\expect{\breve{\bx}_t}^{\rm LPCA}$ for $t=i \Delta t$ for integer $i\leq 10$. The color of the point cloud encodes the propagated coherent state $\vert \psi_{t=1, \zeta_0} )$, with the data points colored according to the probability mass at the point, $q_\ell$. The state remains localized as it propagates along the geodesic, and it is concentrated on points a geodesic distance $t$ from the initial point. In \cref{fig:sphere_and_torus}(b) we show a snapshot of the propagated state at $t=1$, with the black circles highlighting sample points $v_\ell$ satisfying $d_g(v_\ell, v_0)<h$ -- where $d_g$ is the geodesic distance on the sphere and $v_0$ is the sample the initial state is localized on -- which clearly shows that the delocalization of $\vert \psi_{t,\zeta_0})$ is set by $h$.

To quantify how well the propagated coherent states track geodesics, we repeat such propagations $m=50$ times with uniformly sampled initial points and initial momenta toward the nearest neighboring point in each case. At each $t=i\Delta t$, we compute $\expect{\breve{\bx}_t}^{\rm LPCA}$ and $\bx_t^{E}$. \cref{fig:sphere_and_torus}(c) shows the average and one standard deviation error bars for the geodesic distance of these positions to the initial point. We see that $\expect{\breve{\bx}_t}^{\rm LPCA}$ corresponds to points that are approximately a geodesic distance $t$ away on average, with very small variation. In contrast, $\bx_t^{E}$, the points computed by the na\"ive local Euclidean approximation of propagation along geodesics begin to deviate from geodesic distance $t$ for $t>0.6$.
The inset to \cref{fig:sphere_and_torus}(c) also shows the average error in direction of propagation for $\expect{\breve{\bx}_t}^{\rm LPCA}$ and $\bx_t^{E}$. For each of the random propagations, this error is computed as $\Delta_{\rm dir}(t) = \vert 1 - p_0^T g_{v_0} p'_0(t) \vert$, where $p_0$ is the initial momentum dictating propagation direction, $g_{v_0}$ is the metric tensor for the 2-sphere at the initial point $v_0$, and $p'_0(t)$ is the initial momentum that results in propagation to the point $\expect{\breve{\bx}_t}^{\rm LPCA}$ or $\bx_t^{E}$. This quantity is just the misalignment between the true momentum vector at $t=0$ and the momentum vector required to propagate to the point determined by either procedure. We evaluate $\Delta_{\rm dir}(t)$ away from $t=0$ since at very short times because of the poor resolution of the manifold given by the finite number of samples, this error can be large; \ie there may not be a data point directly on the geodesic curve and the closest point requires a significantly different initial momentum to propagate to with small $t$. This is an artifact of finite sampling that becomes negligible at larger $t$, as shown by \cref{fig:sphere_and_torus}(c). 

\subsection{Torus}
The sphere is a relatively simple manifold with constant curvature. We next illustrate our method on a more complex manifold with positive and negative curvature, the torus embedded in $\mathbb{R}^3$ ($n=3, \nu=2$). We uniformly sample $N=12000$ points on a torus with radii $r=0.8, R=2$ and form the quantum propagator $\breve{V}_{\Delta t}$ with $\Delta t=0.1, \epsilon = e^{-3.8}$. \cref{fig:sphere_and_torus}(d) shows an example propagation of an initial coherent state (constructed using LPCA, $\alpha=1.8$). The red curve shows the true unit-speed geodesic path, obtained by solving the geodesic equation for the torus, with the given initial conditions, and the magenta dots show mean positions determined through data-driven coherent state propagation for ten steps of $\Delta t$. The color of the point cloud encodes the propagated coherent state $\vert \psi_{t=1,\zeta_0} )$, with the data points colored according to the probability mass at the point, $q_\ell$. To understand more systematically the quality of geodesic distance extraction, we repeat $m=50$ such propagations from uniformly sampled initial points (with initial momenta toward the nearest neighboring point in each case) and show the statistics of the geodesic distance to the extracted point, $\expect{\breve{\bx}_t}^{\rm LPCA}$, as a function of propagation time $t$ in \cref{fig:sphere_and_torus}(f). The blue squares show the average and one standard deviation variation of the geodesic distance extracted up to $t=1$. For comparison we also show the statistics of geodesic distance to the local Euclidean approximation of geodesic propagation $\bx_t^{E}$. The inset in \cref{fig:sphere_and_torus}(f) also shows the error in propagation direction (quantified in the same way as for the sphere example) for the two propagation methods.

We see that the quantum propagation incurs slightly more error for the torus than for the sphere, however, the average error is still $\mathcal{O}(h)$. 
The discussion in \cref{sec:role_h} implies that the propagation of a coherent state with scale set by $h$ is dictated by a restricted portion of the spectrum of $e^{\frac{i}{h}\hat{H}t}$. Thus, to see the potential of the approach with larger number of data samples, it makes sense to compare to propagation under a more accurate spectral truncation of $\hat{H}$.
In SI, Sec. V we use intrinsic information about the torus to directly construct the LB operator for the torus by numerical diagonalization. We truncate the spectral data at 800 total eigenvalues \footnote{The magnitude of the largest eigenvalue used is $156.5$.} (including degeneracies) and eigenvectors and form a \emph{spectrally truncated propagator}, $\tilde{V}_{\Delta t}$, defined over a regular discretization of the torus over $N'=51300$ points.
\cref{fig:sphere_and_torus}(e) compares the first 200 eigenvalues of the data-driven LB reconstruction, and the more accurate reconstruction through numerical diagonalization, and clearly shows the disagreement between spectra after $\sim 100$ eigenvalues.
We then propagate initial coherent states (with $h=e^{-1}$ as above), extract points closest to the expected position of the propagated states, $\expect{\breve{\bx}_t}^{\rm LPCA}$, using the same approach described above, and the statistics of the geodesic distance to these points are shown in \cref{fig:sphere_and_torus}(f) as green circles. The results show extremely low error in geodesic distance of $\expect{\breve{\bx}_t}^{\rm LPCA}$, and low errors in propagation direction (green circles in the inset of \cref{fig:sphere_and_torus}(f)). We note that the reduction in error in direction is also a result of a better estimation of the tangent space at a point due to the larger number of data samples.
Since the phase-space localization $h$ is fixed throughout, we have here an example in which truncation to a better spectral approximation yields tighter geodesic approximation. By the spectral convergence of the graph Laplacian to the LB operator \cite{Trillos_Gerlach_Hein_Slepcev_2018}, we conclude that with more data we could reproduce the extremely low error geodesic extraction demonstrated with $\tilde{V}_{\Delta t}$.

For completeness we also present spectral data for the graph Laplacian for the sphere example in SI, Sec. IV, and in SI, Sec. IX we plot the deviation measure, $\mathcal{D}$, used to inform parameter choices (see \cref{sec:deviation}) for the sphere and torus datasets and indicate the ($\epsilon,\alpha$) values used above.

\subsection{Using geodesic distances to embed}
\label{sec:embed}

Now we demonstrate how the geodesic distances extracted using quantum dynamics can be used to embed datasets in Euclidean space for visualization. To do so, for each sample point $v_\ell$ in the dataset, we propagate $n_{\rm coll}$ coherent states whose position coordinates are initially localized on $v_\ell$.  The coherent states are constructed with momenta in orthogonal directions in the tangent space at $v_\ell$, estimated using LPCA, as detailed above.
Each state in the collection will propagate along a geodesic.
The propagation is done in steps of $\Delta t$, by applying $\breve{U}_{\Delta t}$, for $n_{\rm prop}$ steps. For each step, this enables the extraction of a \emph{geodesic spray} consisting of $n_{\rm coll}$ sample points that are approximately a geodesic distance $i\times \Delta t$ for $1 \leq i \leq n_{\rm prop}$ from the initial state. We populate a geodesic distance matrix $G$ using this data \footnote{The matrix $G$ is a function of ${N,\epsilon, \alpha, \Delta t, n_{\rm prop}, n_{\rm coll},\delta_{\rm PCA}}$, but we do not explicitly notate this dependence for conciseness.} and this distance matrix is used to embed the sample points in three dimensions using force-directed layout \cite{Fruchterman_Reingold_1991}. The matrix $G$ can be considered a sparsified version of the original distance matrix $\breve{T}_{\epsilon}$ based on Euclidean distances, which typically has no strictly zero entries. The embedding based on force-directed layout is not unique, and a potential direction for future research is the study of optimal embedding techniques given the approximate geodesic distances output by quantum propagation.

We follow the above procedure for the sphere and torus datasets and in \cref{fig:embeddings} we show the resulting embeddings. See caption for the parameters used in both cases. In both cases, the reconstruction is a good representation of the original manifold. Of course, there is no dimensional reduction in this case, however the reconstructed embedding clearly reproduces the geometry of the original point cloud dataset.

\begin{figure}%[tbhp]
\centering
\includegraphics[width=1\linewidth]{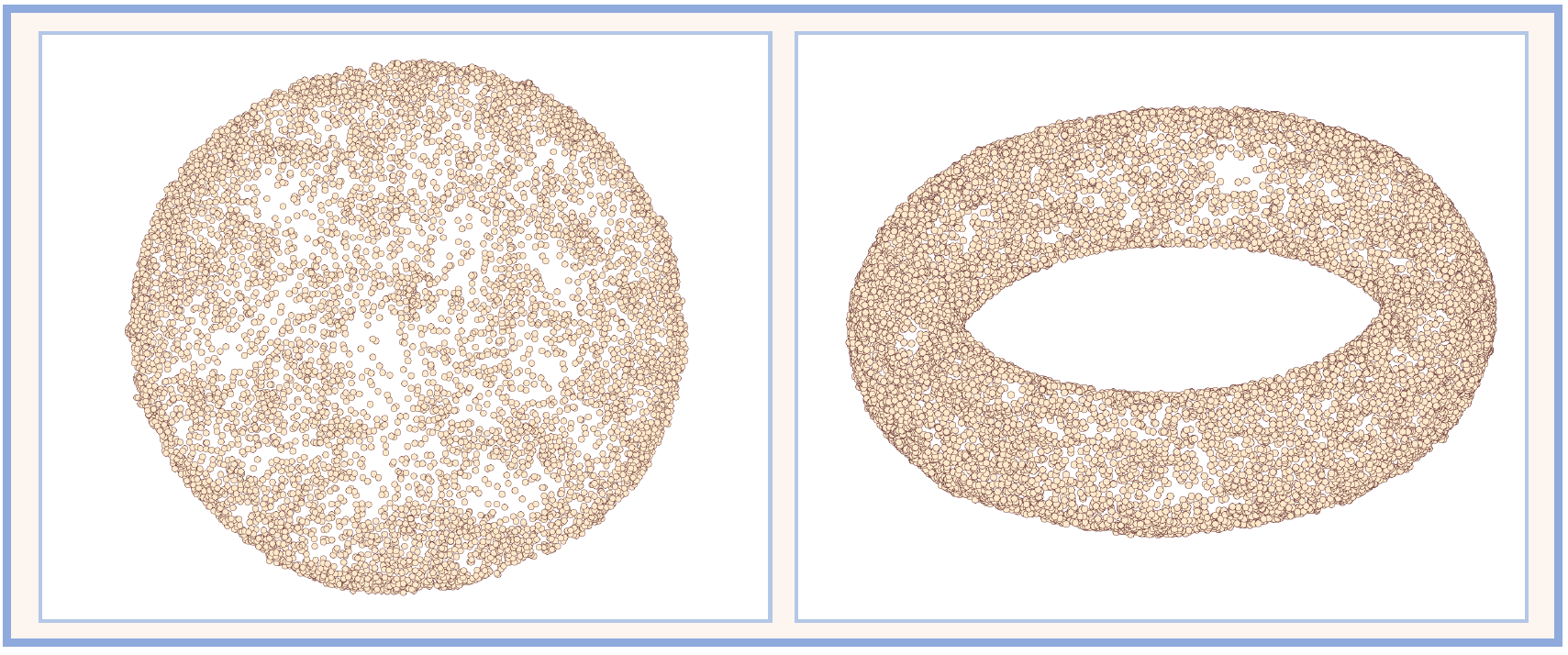}
\caption{
3D embeddings of samples from a sphere and torus, using force-based graph layout based on geodesic distances extracted using our approach. The data is described in Sec. \ref{sec:eg}. The parameters used are: (\textbf{sphere:} $\epsilon=e^{-4.7}, \alpha=1.6, \delta_{\rm PCA}=1.5, \Delta t=0.1, n_{\rm prop}=5,  n_{\rm coll}=40$), (\textbf{torus:} $\epsilon=e^{-3.8}, \alpha=1.8, \delta_{\rm PCA}=2.0, \Delta t=0.4, n_{\rm prop}=1, n_{\rm coll}=40$).}
\label{fig:embeddings}
\end{figure}

\section{An application: COVID-19 mobility data}
\label{sec:app}
We now present an application of our method to a real-world dataset, and demonstrate its utility for visualization and clustering of high-dimensional data. The global COVID-19 pandemic has resulted in massive and unprecedented disruptions to patterns of daily life. Analyzing changes in mobility during the past year has led to insights into social distancing patterns and infection modalities that can inform public health policy and management \cite{Sulyok_Walker_2020, Levin_Chao_Wenger_Proctor_2020, Chang_Pierson_Koh_Gerardin_Redbird_Grusky_Leskovec_2021, Cot_Cacciapaglia_Sannino_2021}. Geolocation information from mobile phones provides a wealth of mobility data, and has been the basis of most of the previous studies on mobility changes during the pandemic. This data can be complex and extremely high dimensional, and traditional manifold learning techniques have been shown to be informative for visualization and clustering of this data \cite{Levin_Chao_Wenger_Proctor_2020}.

In this section, we analyze aggregated mobility data in the Social Distancing Metric dataset from SafeGraph Inc. \cite{safegraph}. This is a fine-grained dataset that collects geolocation information from mobile devices, aggregates it at the census block group (CBG) level in the United States and records it daily for a period of over a year. It contains a wealth of information and we extract a simple metric to gauge the mobility patterns of citizens: we compute a daily stay-at-home (SAH) fraction for each CBG that is the ratio of the \verb|completely_home_device_count| to the \verb|device_count| values. The latter is the number of devices seen during the date whose home is determined to be in the CBG being considered, and the former is the number of these devices that were determined to not have left the home CBG during the day \cite{safegraph}. The SAH fraction for a date provides a measure of how curtailed mobility was within a CBG. To allow comparison to an earlier study with the same dataset \cite{Levin_Chao_Wenger_Proctor_2020} we limit the data to the 117-day time period from February 23, 2020 to June 19, 2020, which provides a snapshot of mobility patterns during the first three months of the pandemic.

In the following, we present results for analysis of mobility data for the state of Georgia (GA). After removing 17 CBGs with poor quality data, there are 5509 CBGs within the state. Therefore, our dataset has $N=5509$ samples, each of dimension 117. We minmax normalize the data samples so that the minimum SAH fraction is zero and maximum is one.
We performed quantum manifold learning on this dataset with parameters $\epsilon = e^{-\nicefrac{1}{2}}, \alpha=1.4, \Delta t=0.1, n_{\rm prop}=20, n_{\rm coll}=60$. See SI, Sec. IX for a plot of the deviation measure $\mathcal{D}$ for this dataset that informs these parameter choices. We compute initial states and $\expect{\breve{\bx}_t}$ using the original extrinsic coordinates and do not employ LPCA for simplicity and since the data dimension is not large enough to prohibit this. After extracting a geodesic distance matrix $G$, we embed the data in three dimensions using a force-directed layout using $G$. The data is then clustered into 5 clusters an in Ref. \cite{Levin_Chao_Wenger_Proctor_2020}. We use the k-means algorithm for clustering with the coordinates of the 3D embedding, as opposed to Gaussian mixture model (GMM) clustering as was done in Ref. \cite{Levin_Chao_Wenger_Proctor_2020}, although the results are similar with GMM clustering, see SI, Sec. VII for details.

Fig. \ref{fig:covid}(a) shows the embedding in 3D with the clusters indicated by colors. In addition, the average SAH fraction time series for each cluster are shown in \cref{fig:covid}(b), which shows that the clustering is meaningful; \ie although almost all CBGs exhibit a similar mobility pattern, the clusters reflect different amounts of overall SAH fraction. The average mobility traces of the clusters are clearly separated. Finally, \cref{fig:covid}(c) shows a map of the CBGs, with the same color coding of clusters as in the other subfigures. This geographic representation clearly shows the rural-urban divide in degree of mobility change during the pandemic; the higher SAH fraction clusters are associated with the urban centers in GA, while the majority of the state exhibited behavior consistent with lower SAH fraction curves (light blue and brown clusters).

\begin{figure*}[t]
\centering
\fbox{\includegraphics[width=0.95\linewidth]{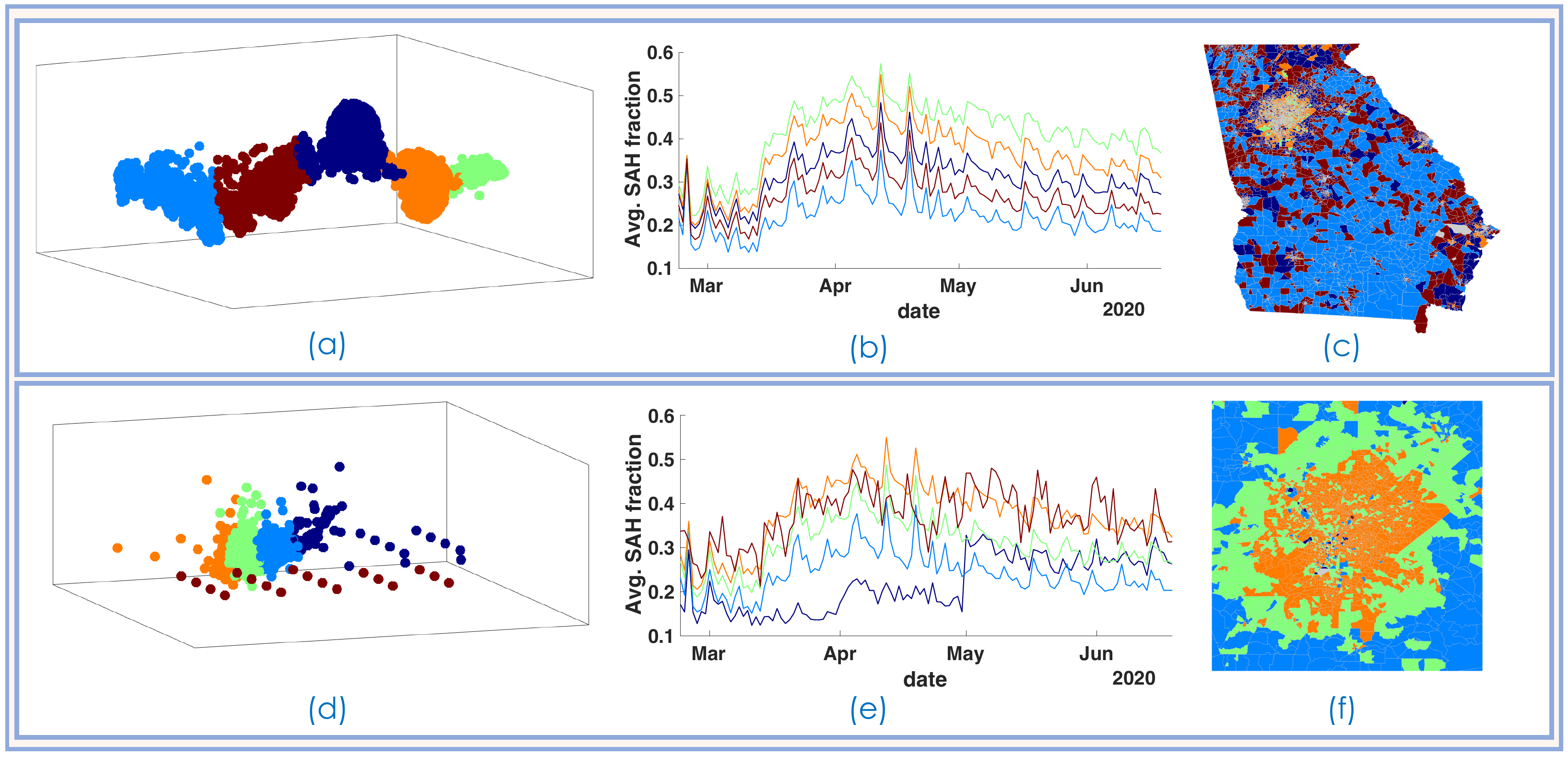}}
\caption{An application of geodesic extraction through quantum dynamics on mobility data from the US state of Georgia (GA). \textbf{(a)} Force-based graph layout in $\mathbb{R}^3$ of CBGs based on extracted geodesic distance matrix, where the location of the propagated states is calculated using the mean position $\expect{\breve{\bx}_t}$. The 3D coordinates from this embedding are used to cluster the CBGs using k-means clustering, and the five clusters are marked by color. \textbf{(b)} The average SAH fraction time series for each of the clusters in (a), with the same color-coding. \textbf{(c)} Geographic location of the clusters in GA. CBGs colored in gray are ones for which there was missing or poor quality data -- these were excluded from the analysis. \textbf{(d)} An $\mathbb{R}^3$ embedding of the same data as in (a), but with the location of the propagated state calculated using the maximum of the associated distribution over data points. Again, clustering is performed using k-means on the embedded coordinates. Poorly connected outlier CBGs are indicated by the brown and dark blue points. \textbf{(b)} Average SAH fraction time series for each of the clusters in (d), with the same color-coding. \textbf{(f)} Geographic location of the CBG clusters in the Atlanta metropolitan area. We present a zoom in into this area because some of the outlier CBGs identified in (d) are located in this area.}
\label{fig:covid}
\end{figure*}

It is instructive to compare the results in \cref{fig:covid} to those in Ref. \cite{Levin_Chao_Wenger_Proctor_2020}, where a similar clustering analysis was done with Laplacian eigenmaps.
In that work, the authors needed to embed into 14 dimensions in order to obtain good clustering, whereas we obtained meaningful clusters that represent distinct classes of SAH degree through a greater dimensional reduction to three dimensions.
An explanation for this might be offered upon looking at the asymptotic situation for manifolds: then, generally speaking, the Laplacian eigenmaps embedding places an eigenfunction of $\hat{\mathcal{L}}_{\epsilon,\lambda}$ at each coordinate axis directly and hence relies on the clustering method to pick up on the geometry exposed by the selected eigenfunctions.
It is by now well understood \cite{nadler2007fundamental} that one may need to look deep into the higher order eigenfunctions with sufficiently small $\epsilon$ to recover non-trivial features of the geometry; \emph{a priori}, this would therefore involve using a large number of the smallest eigenfunctions, especially if there are degeneracies.
On the other hand, the method we have presented assigns neighbours based directly upon the intrinsic geodesic distances, which implicitly involve higher order eigenfunctions \cite{zelditch_2017} so that the dependence on $N$ and $\epsilon$ is only in the quality of the application of Egorov's theorem and hence, of geodesic approximation.
Therefore, our method potentially enables a smaller embedding dimension to identify the same geometric features of data.

Finally, to illustrate another feature of our approach to manifold learning via quantum dynamics, in Fig. \ref{fig:covid}(d) we show 3D embedding and clustering using the same methods described above (and same parameters), except using $\ell^*$, the maximum of the probability distribution dictated by a propagated coherent state, as the estimate of the position of the propagated state. While the mean position $\expect{\breve{\bx}_t}$ enables extraction of clusters with consistent average behavior, using the maximum to determine points a certain geodesic distance away results in a procedure that is sensitive to data outliers. This is illustrated by the brown and dark blue clusters in Fig. \ref{fig:covid}(d), which are disconnected or poorly connected to the remaining data points. When examining the average SAH fraction time series for these clusters in Fig. \ref{fig:covid}(e) we see that they are distinct from the other time series. Particularly remarkable is the dark blue cluster, which shows a marked change in SAH fraction around May 1st 2020. We note that the number of these outliers is small (18 (62) data points in the brown (dark blue) cluster) and identifying them is a challenging task in such a large dataset. These outlier CBGs are distributed across the whole state, but particularly concentrated in urban areas; \eg in Fig. \ref{fig:covid}(f) we show a zoom into the Atlanta metropolitan area, which contains some of the outlier CBGs.

The application presented in this section and subsequent analysis demonstrate the applicability of our manifold learning approach to real-world data sets, even ones that are moderate in size ($N\sim 5000$ points). Remarkably, we have empirically found that our approach is also useful in smaller datasets, where the theoretical properties derived in the large $N$ limit are not necessarily guaranteed. In the SI, Sec. VIII we present results from applying our method to small high-dimensional datasets, one containing COVID-19 mobility data at the global level, and another containing National Basketball Association (NBA) player performance statistics.

\section{Discussion and Related Work}
\label{sec:disc}

Manifold learning has a long history with much of the early work dedicated to approximating geodesics or the distance map via discretization of the Hamilton-Jacobi or eikonal equations in one form or another \cite{mitchell_discrete_1987, kimmel_computing_1998}.
Even Dijkstra's algorithm and its various cousins have their justifications rooted in these non-linear PDEs on configuration space \cite{peyre_geodesic_2010}.
These methods rely either on intrinsic knowledge of the manifold such as a parameterization or some form of regularized meshing such as triangulation; in the case of Dijkstra's algorithm, which remains closer to the \emph{inverse problem} roots of manifold learning, convergence is not always guaranteed \cite{peyre_geodesic_2010}.
More recently, the field has received a resurgence of interest and many new techniques have emerged since discrete graph structures were shown to approximate operators of the form $\Delta + O(\partial^1)$ on manifolds, with high probability \cite{belkin_laplacian_2003, hein2005graphs}.
These works to date, to the best of the authors' knowledge, owe in significant part to the highly local properties of differential operators and the corresponding diffusion processes they generate
\cite{belkin_laplacian_2003,hein2005graphs,chui_special_2006,berry2016local}.
In this work, we have gone \emph{against the grain} in this sense: we have relied on the highly local nature of the diffusion process to approximate its generator, which \emph{semi-classically} has the form $\Delta + O(h \partial^1)$ and proceeded to drive localized initial conditions with \emph{non-local} dynamics.
In doing so, we have developed an entirely linear framework for approximating the highly non-linear geodesic equations, eschewed the need for short-time asymptotics and any gradient-type computations  (both of which have stability issues) as needed, for example in \cite{crane_geodesics_2013} and traded the search for the correct set of eigenfunctions -- a difficult problem in the context of manifold learning applications \cite{cheng2020spectral} -- with the optimization of a continuous parameter, $h$ through the coherent state, for which we provide a region to narrow and guide the choice of optimal parameters. Additionally, while most manifold learning methods are sensitive to the sampling density, $\pd$, as discussed in \cref{sec:quant_manilearn}, our approach has reduced sensitivity because $\pd$ contributes at low order to the principal symbol of the data-driven Hamiltonian.

The methods we employ combine diffusion dynamics on manifolds \cite{hein2005graphs,coifman_geometric_2005},
which provide a necessary step to essentially \emph{bootstrap} from discrete graph structures to differential operators $\hat{\mathcal{L}}_{\epsilon,\lambda}$ that encode geometric data, with semiclassical analysis \cite{zworski_semiclassical_2012} that informs the sense in which these operators encode configuration-space dynamics, the appropriate dynamical operators $\hat{V}_t$ and initial conditions $|\psi_{\zeta_0}^h\rangle$ with which to access this and the length scales $h$ at which this correspondence is valid.
Related to the latter aspect is the large body of work from microlocal analysis \cite{hormander2015analysis,hormander2007analysis,duistermaat1972fourier,guillemin2013semi}, which uses the concept of the wavefront set to identify propagation of singularties along the classical dynamics encoded in the driving operator.
In essence, the spectral function $\hat{V}_t$ of $\hat{\mathcal{L}}_{\epsilon,\lambda}$ is approximating the Fourier Integral Operator (FIO) $\hat{U}^t := e^{i t \sqrt{\Delta_g}}$ and the application of the coherent state $|\psi_{\zeta_0}^h\rangle$ is in fact resolving the wavefront set of the Schwartz kernel of $\hat{U}^t$ to resolution governed by $h$ and for the point $\zeta_0$ in phase space $T^*\mathcal{M}$.
It is conceivable that one could devise other FIOs with the appropriate canonical relation to resolve geodesics, or other function classes to dissect the phase space in search for the wavefront set, or canonical relation of an FIO.
In general, both are hard to deploy, especially in the presence of anisotropy; such a program has been carried out in some part for the case of data on Cartesian grids in \cite{candes2003curvelets, candes2005continuous, candes2007fast}, hence this is in a rather different setting to ours.
To the best of the authors' knowledge, ours is the first work to employ such methods in the sense of the manifold learning inverse problem.

One aspect we merely touch upon here is that of embedding the manifold.
The problem of isometrically embedding a manifold in low dimensions goes back at least to Nash \cite{nash1956imbedding}, who showed with great technique how to do this constructively.
In doing so, the proof methods also show the inherent difficulty of this task: in that setting, all the details of a manifold are assumed -- indeed, Nash's theorems solve a \emph{direct} problem -- and still, this requires highly intractable (to implement) recursive methods.
Approximation methods from the \emph{inverse} problem perspective range from minimizing certain \emph{stress} functionals
\cite{MDS, ISOMAP}
to aligning approximate tangent spaces \cite{LTSA, roweis_nonlinear_2000, HLLE} to using the spectral data of an approximated diffusion operator
\cite{belkin_laplacian_2003,coifman_geometric_2005,jones_manifold_2008}
or more recently, employing geometric controllability of the wave equation \cite{bosi2017reconstruction}.
In essentially all cases, a key role is played by the distance map on the manifold: this is approximated to some order by each method, typically chaining local Euclidean distances from the given, possibly (very) high-dimensional embedding, to global approximations of geodesics.
The quality of these approximations, going from local to global essentially controls the order to which a given embedding will be isometric and close to the original dimension of the manifold.
Thus, the task of approximating geodesics well is key to solving the often-sought embedding problem.

Recently, new deterministic methods have been proposed to resolve the embedding problem with significant improvement in dimension reduction, in a more tractable fashion than Nash's algorithm \cite{fefferman_reconstruction_I,bosi2017reconstruction}.
These stray somewhat from the present context, but there is nevertheless some intersection with \cite{bosi2017reconstruction} in that geometric control of the solution to a spectral approximation to the wave equation is used to embed the manifold from data and in doing so, also to recover the distance map.
In that case, it is the unique extension property of smooth solutions that plays a key role, rather than the specific nature of propagation and phase-space localization properties that we employ.
Other than these cases, most of the effort in manifold embedding coming from dynamics on function spaces has been rooted in diffusion processes:
eigenfunction embedding methods such as \cite{coifman_geometric_2005,jones_manifold_2008} rely heavily on the asymptotically small-time limit of the heat equation to recover geodesics, as per Varadhan's formula \cite{crane_geodesics_2013}.
To compare, on the one hand, our proposed method is only little more than the very original 0-1 graph Laplacian operator as proposed in \cite{belkin_laplacian_2003}.
On the other hand, when this operator marks geodesic neighbours, globally rather than only locally, we can interpret it as an approximation to the \emph{spherical means} operator \cite{sunada1981spherical, zelditch_2017}.
In that sense, when using single neighbour (the 0-1 case), this is a sparse representation of the canonical relation in $\cos(t \sqrt{\Delta_g})$ that encodes geodesics and simultaneously it provides a nearest neighbours approach to embedding.

A common problem in the real-world application of spectral methods is the search for eigenfunctions: a typical use-case is to embed high dimensional data in low dimensions in order to separate the data into meaningful clusters.
The heuristic is the existence of a spectral gap in the diffusion matrix $\breve{T}_{\epsilon,\lambda}:=\breve{D}^{-1}_{\epsilon,\lambda}\breve{\Sigma}^{-\lambda}_{\epsilon} \breve{T}_{\epsilon}\breve{\Sigma}^{-\lambda}_{\epsilon}$ among the top eigenvalues (which are the low-lying eigenvalues of graph Laplacian), both to identify the number of clusters and the dimension of embedding in which to identify them with lower dimensional techniques \cite{von2007tutorial}.
However, such a gap may be dampened or effectively non-existent due to various features of data manifold, such as large variances in cluster sizes or a dominatingly large background cluster, which absorbs most of the diffusion process and thus localizes many of the leading eigenvectors to itself \cite{cheng2020spectral,nadler2007fundamental}.
This is of course particularly problematic for outlier detection \cite{cheng2020spectral,miller2015spectral}.
In contrast, our approach works with the high-energy eigenfunctions of the graph Laplacian by the nature of $\hat{V}_t$ that drives the dynamics: indeed, the propagation of localized states, or in the microlocal setting, singularities, by the half-wave propagator it approximates is governed by the high-energy eigenfunctions of $\Delta_g$ \cite{zelditch_2017}.
These eigenfunctions are, by their very nature, highly \emph{delocalized}
\cite{zelditch_2017} and tend to concentrate on geodesics -- which are global -- rather than concentrating on dense regions, either due to sampling concentration or geometry, as would be the case for slow mixing of diffusion processes.
Moreover, the use of coherent states $|\psi_{\zeta_0}^h\rangle$ has another effect from the clustering perspective: the parameter $h$ sets that on propagation with this state, the operator is implicitly projected onto eigenvalues to order $1/h$.
We have suggested a simple test to give a regime in which the choice of $(\epsilon,h)$ yields good embeddings for all of the examples we have considered here.
Thus, we are employing a large range of the spectrum at once and $h$ provides an \emph{implicit} parameter for spectral truncation, with the ability to \emph{a priori} ascertain optimal regions.
This may explain the applicability of our approach to the various tests provided here, both in large and small data settings: in particular, we note that in Figure \ref{fig:covid} (d-f) we witness the existence of an outlier statistic with accompanying cluster even with a very rudimentary embedding with extremely low dimension (three), which is not seen to be present with the diffusion based approaches until at least eight eigenvectors are used and clustering is done in eight dimensions.
Further, in the SI, Sec. VIII we show more embeddings from other real-world datasets that exhibit meaningful separation and clustering.

In addition to establishing a new framework for manifold learning our work motivates several future directions. Firstly, while we used graph Laplacian convergence results in order to construct the unitary quantum propagator, it would be interesting to use other constructions with convergence guarantees.
As discussed in the 2-torus example, the slow spectral convergence of graph Laplacians motivates the search for other approaches to constructing the unitary propagator that provide direct access to the matrix elements of the FIO $\hat{U}^t = e^{i t \sqrt{\Delta_g}}$.
Secondly, while in this work we used force-based graph layout as the embedding method,  it would be fruitful to evaluate other embedding techniques, \emph{e.g.} \cite{Puchkin_2020}, and construct new ones that specifically take advantage of sparse geodesic distance matrices.
Another interesting direction is to consider the integration of our approach with techniques for data-driven modeling of dynamical systems.
In essence, our approach reconstructs the dynamical flow of a system from samples of configuration space.
Considering the convergence properties in \cite{kumar_math}, the methods can readily approximate the geodesic flow of rather general phase space observables, hence enabling
the simulation of observable time traces, $f_i \circ \Phi^t$ for many $i$.
These can be input to dynamic mode decomposition (DMD)
\cite{Kutz_2016} to construct approximations of the Koopman operator $f \mapsto f \circ \Phi^t$
\cite{Rowley_2009} for the system and the corresponding analysis of \emph{coherent structures} may be useful for applications \cite{Budisic_2012}.

Finally, our work naturally suggests the investigation of quantum computing algorithms for manifold learning. Quantum computers are widely believed to provide an exponential speedup for the simulation of quantum dynamics \cite{Georgescu_2014}, which is the core ingredient in our approach. Thus, a quantum algorithm for manifold learning based on the approach presented here could provide the ultimate route to scalable manifold learning on extreme-size datasets.

\begin{acknowledgements}
This work was supported by the Laboratory Directed Research and Development program at Sandia National Laboratories, a multimission laboratory managed and operated by National Technology and Engineering Solutions of Sandia, LLC., a wholly owned subsidiary of Honeywell International, Inc., for the U.S. Department of Energy's National Nuclear Security Administration under contract {DE-NA-0003525}. MS was also supported by the U.S. Department of Energy, Office of Science, National Quantum Information Science Research Centers.	
\end{acknowledgements}

% Bibliography
\section*{References}
\bibliography{bib}
\pagebreak

\setcounter{section}{0}
\setcounter{part}{-1}
\setcounter{prop}{0}

\renewcommand{\thesection}{\Roman{section}}
\addtolength{\cftsecnumwidth}{30pt}
\addtolength{\cftsubsecnumwidth}{40pt}

\renewcommand{\thesubsection}{\thesection.\Alph{subsection}}

\onecolumngrid
\part{}
\linespread{1.5}
\parttoc
\linespread{1}
\raggedbottom
\pagebreak

\section{Notation}
In this section, we collect and explain the notation used in this work.

\begin{table*}[hbt]
  \begin{tabular}{|l|p{13cm}|}
\hline
  $\mathcal{M}$ & Manifold that data is sampled from. \\
\hline
	$g$ & Metric of manifold $\mathcal{M}$. \\  
\hline
  $\nu$ & Intrinsic dimension of $\mathcal{M}$. \\
\hline 
	$d_g(\bx_1,\bx_2)$ & Intrinsic, geodesic distance between points $\bx_1,\bx_2 \in \mathcal{M}$. \\
\hline
	$\Delta_g$ & Laplace-Beltrami operator on $\mathcal{M}$. \\
\hline
   $n$ & Dimension of extrinsic coordinates for points sampled from $\mathcal{M}$.\\
\hline
   $V=\{v_i\}_{i=1}^N$ & Dataset, consisting of $N$ vectors in $v_i\in\mathbb{R}^n$. \\
\hline
   $\mathsf{p}$ & Sampling distribution dictating sampling from $\mathcal{M}$.\\
\hline
	$\iota(\bx)$ & Extrinsic coordinates for a point $\bx \in \mathcal{M}$. \\
\hline
	$\eta^{\bx_0}(\bx)$ & Normal coordinates centered at $\bx_0$ for a point $\bx \in \mathcal{M}$. \\
\hline
	$\mathscr{S}$ & Schwartz space of functions on $\mathbb{R}^d$. \\
\hline
	$\mathscr{S}'$ & Space of tempered distributions on $\mathbb{R}^d$. \\
\hline
	$\bx=(x^1, ... x^\nu),\bp=(p_1, ..., p_\nu)$ & Position and momentum phase space variables defined on $T^* \mathcal{M}$. \\
\hline
	$\zeta = (\bx,\bp) \in T^* \mathcal{M}$ & A point in phase space. \\
\hline
	$\Phi_q^t$ & Hamiltonian flow representing time evolution associated with symbol $q$. \\
\hline
   $a \in C^\infty(T^*\mathcal{M})$ & A symbol representing a physical observable in phase space. \\
\hline
	$h^kS^m$ & Symbol class of order $(k,m)$. \\
\hline
	$h^k\Psi^m$ & Space of pseudo-differential operators of order $(k,m)$.\\
\hline
	$\hat{A}: C^\infty(\mathcal{M}) \to C^\infty(\mathcal{M})$ & Operator representing a quantum observable. \\		
\hline
	$\Op_h(a)$ & Quantization of a symbol $a$ to an operator $\hat{A}$.\\
\hline
	$\hat{U}_t: C^\infty(\mathcal{M}) \to C^\infty(\mathcal{M})$ & Unitary propagator representing time evolution map for a quantum system. \\
\hline
	$\ket{\psi} \in C^\infty(\mathcal{M})$ & An (infinite-dimensional) quantum state. \\
\hline
	$\ket{\psi^h_{\zeta_0}} \in C^\infty(\mathcal{M})$ & A coherent state centered at phase space point $\zeta_0$. \\
\hline
	$|\psi)\in \mathbb{C}^{N}$ & An $N$-dimensional complex vector that depends on the dataset $V$. If a corresponding $\ket{\psi}$ is defined, $|\psi)$ is the restriction of $\ket{\psi}$ to $V$. Sometimes we will abuse this notation and denote both the vector and a $C^\infty$ extension by $|\psi)$. \\
\hline 
	$\breve{A}\in \mathbb{C}^{N\times N}$ & An $N\times N$ complex matrix that depends on the dataset $V$. If a corresponding $\hat{A}$ is defined, $\breve{A}$ is a finite-dimensional approximation of $\hat{A}$. \\
\hline
	inj($\bx_0$) & Injectivity radius of point $\bx_0 \in \mathcal{M}$. \\
\hline
$\tilde{\nu}$ & Dimension for local principal component analysis. \\
\hline
	$\breve{O}:\mathbb{R}^n \to \mathbb{R}^{\tilde{\nu}}$ & Local principal component analysis (LPCA) linear map \\
\hline
	$\vartheta_\ell^{v_0} \in \mathbb{R}^{\tilde{\nu}}$ & LPCA coordinates for data point $\ell$, with LPCA performed at point $v_0$.\\
\hline
	$\epsilon>0$ & Scale parameter dictating neighborhood size for sampled data points on $\mathcal{M}$. \\
\hline
	$0\leq \lambda\leq 1$ & Normalization parameter dictating convergence properties of data-driven Markov generator. \\
\hline
	$\alpha>0$ & Scale parameter dictating uncertainty of coherent states constructed from dataset $V$. \\
\hline
	$\Delta t>0$ & Time step taken by quantum propagator. \\
\hline
	$n_{\rm prop} \in \mathbb{Z}$ & Number of $\Delta t$ time steps to propagate.  \\
\hline
	$n_{\rm coll} \in \mathbb{Z}$ & Number of initial states to propagate. \\
\hline
	$\delta_{\rm PCA}>0$ & Radius of Euclidean ball that dictates the samples points over which LPCA is performed.\\
\hline
  \end{tabular}
%  \caption{}
\end{table*}

\section{Theoretical proofs}
\label{sec:proofs}
\subsection{Proof of Proposition 1}
\begin{prop}
Let $\alpha \geq 1$, $h > 0$, $\zeta_0 := (\bx_0, \bp_0) \in T^*\mathcal{M}$ and let $\psi_{\zeta_0}^h$ be a coherent state localized at $\zeta_0$. Then with $\epsilon := h^{(2 + \alpha)}$ we have,
\begin{align}
 (h^2 \hat{\mathcal{L}}_{\epsilon,\lambda}) \psi^h_{\zeta_0}(\bx_0) = |\bp_0|^2_{g(\bx_0)}\psi^h_{\zeta_0}(\bx_0) + \mathcal{O}(h).
\label{eq:L_coh_si}
\end{align}
\label{thm:thm1_si}
\end{prop}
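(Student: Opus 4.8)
The plan is to evaluate $\hat{\mathcal{L}}_{\epsilon,\lambda}\psi^h_{\zeta_0}(\bx_0)$ directly from the integral representation of the continuum operator and then extract its $h$-dependence by power counting. First I would use the exponential decay of the kernel $k$ to localize: up to an error that is $\mathcal{O}(h^\infty)$, the value of $\hat{\mathcal{L}}_{\epsilon,\lambda}u$ at $\bx_0$ depends only on $u$ restricted to a geodesic ball $B$ of radius $C\sqrt{\epsilon}\log(1/h)$ about $\bx_0$, so we may pass to normal coordinates $\eta := \ncoord{\bx_0}$ centered at $\bx_0$ and work in $\mathbb{R}^\nu$. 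By the admissibility of the coherent-state phase, in these coordinates $\varphi(\eta,\zeta_0)=\langle\bp_0,\eta\rangle + i\lVert\eta\rVert^2 + \mathcal{O}(\lVert\eta\rVert^3)$; hence $\nabla\varphi|_{\eta=0}=\bp_0$ is real, $\operatorname{Im}\operatorname{Hess}\varphi|_{0}>0$, $\Delta_g\varphi|_0=\mathcal{O}(1)$, and $\psi^h_{\zeta_0}(\bx_0)=h^{-\nu/4}$.

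The core of the argument is a Laplace-type expansion of the localized integral. Writing $\hat{\mathcal{L}}_{\epsilon,\lambda}u(\bx_0)=\tfrac{2c_0}{c_2\epsilon}\bigl(u(\bx_0)-(P_{\epsilon,\lambda}u)(\bx_0)\bigr)$, where $P_{\epsilon,\lambda}$ is the density-normalized $\sqrt\epsilon$-scale averaging operator whose discrete analogue is $\breve{D}^{-1}_{\epsilon,\lambda}\breve{\Sigma}^{-\lambda}_{\epsilon}\breve{T}_{\epsilon}\breve{\Sigma}^{-\lambda}_{\epsilon}$ in \cref{eq:L_N}, I substitute $\eta=\sqrt\epsilon\,z$ so that the kernel becomes $\epsilon$-independent and the Riemannian volume density, the sampling density $\pd$, and the $\breve{\Sigma}^{-\lambda}_{\epsilon}$-type normalizations get Taylor-expanded in powers of $\sqrt\epsilon$. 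The essential point is the separation of scales $\sqrt\epsilon=h^{1+\alpha/2}\ll h\ll\sqrt h$: the averaging radius is much smaller than both the oscillation wavelength $h$ of $e^{i\varphi/h}$ and the Gaussian-envelope width $\sqrt h$ of the coherent state, so $\psi^h_{\zeta_0}(\bx_0+\sqrt\epsilon\,z)$ may be Taylor-expanded about $\bx_0$ with the $k$-th term carrying a factor $\sim(\sqrt\epsilon\,\lvert\bp_0\rvert/h)^k=\mathcal{O}(h^{k\alpha/2})$, so the expansion converges and truncates with controlled remainder. Collecting terms of order $\epsilon^0$ — the odd Gaussian moments vanish, the $k=0$ term cancels the subtracted $u(\bx_0)$, and the first-order density factors pair with $\nabla u$ — reproduces exactly the operator of \cref{eq:L_eps}, so that $\hat{\mathcal{L}}_{\epsilon,\lambda}\psi^h_{\zeta_0}(\bx_0)=\bigl(\Delta_g+\tfrac{2(1-\lambda)}{\pd}\langle\nabla\pd,\nabla\rangle_g\bigr)\psi^h_{\zeta_0}(\bx_0)+E$, with the remainder bounded by $\lvert E\rvert\le C\,\epsilon\,\lVert\psi^h_{\zeta_0}\rVert_{C^4(B)}\le C'\,\epsilon\,h^{-\nu/4-4}$.

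It then remains to multiply through by $h^2$ and count powers of $h$, using that each derivative hitting $e^{i\varphi/h}$ produces a factor $\mathcal{O}(h^{-1})$. The leading term gives $h^2\Delta_g(e^{i\varphi/h})=\bigl(\lvert\nabla\varphi\rvert^2_g+ih\,\Delta_g\varphi\bigr)e^{i\varphi/h}$, which at $\bx_0$ equals $\bigl(\lvert\bp_0\rvert^2_{g(\bx_0)}+\mathcal{O}(h)\bigr)\psi^h_{\zeta_0}(\bx_0)$ since $\nabla\varphi|_{\bx_0}=\bp_0$ is real and $\Delta_g\varphi|_{\bx_0}$ is bounded; the density term contributes $h^2\cdot\tfrac{2(1-\lambda)}{\pd}\langle\nabla\pd,\nabla\varphi\rangle_g\cdot\tfrac{i}{h}\,\psi^h_{\zeta_0}=\mathcal{O}(h)\,\psi^h_{\zeta_0}$; and the remainder contributes $h^2\lvert E\rvert\le C''\,h^2\epsilon\,h^{-\nu/4-4}=C''\,h^{\alpha}\,h^{-\nu/4}=\mathcal{O}(h)\,\psi^h_{\zeta_0}(\bx_0)$, which is precisely where the hypothesis $\alpha\ge 1$ enters. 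Adding these yields \cref{eq:L_coh_si}, with the $\mathcal{O}(h)$ understood relative to $\psi^h_{\zeta_0}(\bx_0)$ (equivalently, after factoring out $e^{i\varphi/h}$). Conceptually this is the statement that, under the choice $\epsilon=h^{2+\alpha}$, $h^2\hat{\mathcal{L}}_{\epsilon,\lambda}$ is a semiclassical $\PsiDO$ of order $(0,2)$ with principal symbol $\lvert\bp\rvert^2_g$ — the $\PsiDO$ construction carried out in \cite{kumar_math} — to which the wave-packet/principal-symbol identity recalled in \cref{sec:qdyn_geo} applies, the computation above being just that identity made quantitative for this operator.

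I expect the main obstacle to be making the remainder estimate genuinely uniform in $h$. Unlike the usual graph-Laplacian pointwise-consistency lemmas, which are applied to a fixed smooth test function, here the test function $\psi^h_{\zeta_0}$ itself oscillates at frequency $h^{-1}$ and concentrates on an $\mathcal{O}(\sqrt h)$ scale, so one must verify that the $C^4$-type bound on the Laplace-expansion remainder — together with the exponential tail of $k$ and the Taylor tails of the volume element, of $\pd$, and of the $\breve{\Sigma}^{-\lambda}_{\epsilon}$ normalizations — really does combine with the prefactor $h^2$ to give $\mathcal{O}(h)$: i.e., that no higher-order differential operator hidden in the $\mathcal{O}(\epsilon)$ of \cref{eq:L_eps} escapes the power count once its order as a differential operator is weighed against the factor $\epsilon^j=h^{j(2+\alpha)}$. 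A secondary, more routine point is extracting $\nabla\varphi|_{\bx_0}=\bp_0$, $\operatorname{Im}\operatorname{Hess}\varphi|_{\bx_0}>0$, and boundedness of $\Delta_g\varphi$ near $\bx_0$ from the admissibility hypothesis rather than from the idealized local form $\langle\bp_0,\eta\rangle+i\lVert\eta\rVert^2$, and checking that the $\mathcal{O}(\lVert\eta\rVert^3)$ discrepancy between $\varphi$ and that idealized form is itself absorbed into the $\mathcal{O}(h)$ after the rescaling $\eta=\sqrt\epsilon\,z$.
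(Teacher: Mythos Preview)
Your proposal is correct and takes essentially the same approach as the paper: expand $\hat{\mathcal{L}}_{\epsilon,\lambda}$ as $\Delta_g + (\text{density term}) + \sum_{j\geq 2}\epsilon^{j-1}G_j$ with $G_j$ a PDO of order $2j$, apply it to the coherent state, and observe that after multiplying by $h^2$ the $j$-th term scales like $(\epsilon h^{-2})^{j-1}=h^{\alpha(j-1)}$, so $\alpha\geq 1$ forces all corrections to be $\mathcal{O}(h)$. The only difference is packaging: the paper cites the full operator expansion from \cite{hein2005graphs} and then power-counts, whereas you re-derive that expansion via the Laplace/rescaling argument before doing the identical power count --- your ``$C^4$ remainder'' shorthand is a slight oversimplification (as you yourself note in the obstacles paragraph, the higher $G_j$'s must each be weighed against their $\epsilon^{j-1}$ prefactor), but once that is done the argument is the paper's.
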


\noindent\textit{Proof}: The expansion of $\hat{\mathcal{L}}_{\epsilon,\lambda}$ defined in the main text is pointwise a Taylor series and can be found in \cite{hein2005graphs} to be, on application to $u \in C^{\infty}$ in local coordinates,
\begin{align}
	\hat{\mathcal{L}}_{\epsilon,\lambda}[u](\bx) = \left( \Delta_g + \frac{2(1 - \lambda)}{\pd} \langle \nabla \pd, \nabla\cdot\rangle_{g_{\bx}} \right)[u](\bx)	
		+ \sum_{j=2}^{\infty} \frac{\epsilon^{j-1}}{j!} G_j(\bx,D)[u](\bx),
\end{align}
with each $G_j$ a partial differential operator (PDO) of order $2j$ with smooth coefficients in the variable $\bx$.
The issue with applying this to a coherent state is that we must let $h \to 0$ while also having $\epsilon \to 0$, but the applications of the PDOs to $\psi^h_{\zeta_0}$ bring about terms of negative orders in $h$, which may lead to the effect of higher order PDOs becoming asymptotically dominant.
To curtail this, we must scale $h$ appropriately according to $\epsilon$ and also rescale $\hat{\mathcal{L}}_{\epsilon,\lambda}$ (in a manner common in semiclassical analysis).
For generality, we let $\epsilon := \epsilon(h)$ be a function of $h > 0$.

On letting $s(z) := \exp_{\bx}(z)$ provide normal coordinates, we find that on application to a coherent state,
\[
	\hat{\mathcal{L}}_{\epsilon\lambda}[\psi^h_{\zeta_0}](\bx) = h^{-2} \psi^h_{\zeta_0}(\bx) \langle \nabla_z, \nabla_z \rangle \big|_{z = 0} \phi \circ s
		+ \mathcal{O}(h^{-1}) \psi^h_{\zeta_0}(\bx) + \sum_{j=2}^{\infty} \frac{\epsilon^{j-1}}{j!} \mathcal{O}(h^{-2j}),
\]
wherein $\phi$ denotes the complex phase of $\psi^h_{\zeta_0} = e^{\frac{i}{h} \phi(\bx ; \zeta_0)}$ and $\mathcal{O}(h^{-k})$ denotes a polynomial in $1/h$ of degree $k$ with coefficients depending only on derivatives of $\phi$ and the smooth functions appearing in the coefficients of the operators $G_j$.
We wish to recover the dominant term of the semi-classical application
\begin{align*}
	h^2 \Delta_g[\psi^h_{\zeta_0}](\bx)
		&= \psi^h_{\zeta_0}(\bx) \langle \nabla_z, \nabla_z \rangle \big|_{z = 0} \phi \circ s + \mathcal{O}(h)	\\
		&= |d \phi|_{\bx}|^2 \psi^h_{\zeta_0}(\bx) + \mathcal{O}(h),
\end{align*}
which follows from $ds|_{z = 0} = I$, as the dominant term from the $\hat{\mathcal{L}}_{\epsilon,\lambda}$ approximation.
To do so, we must balance the asymptotic scaling in the terms of higher order: namely, on setting $\epsilon \in \mathcal{O}(h^{2 + \beta})$ with $\beta \geq 0$, then we have $\epsilon^{j-1} \mathcal{O}(h^{-2(j-1)}) \in \mathcal{O}(h^{\beta})$ for all $j \geq 2$, so that
\begin{align*}
	 (h^2 \hat{\mathcal{L}}_{\epsilon,\lambda}) [\psi^h_{\zeta_0}](\bx_0)
		&=  (h^2 \Delta_g) \psi^h_{\zeta_0}(\bx_0) + \mathcal{O}(h) + \mathcal{O}(h^{\beta})	\\
		&= |\bp_0|^2_{g(\bx_0)}\psi^h_{\zeta_0}(\bx_0) +\mathcal{O}(h) + \mathcal{O}(h^{\beta}),
\end{align*}
since $\phi(\bx_0 ; \zeta_0) = 0$ and $d\phi|_{\bx = \bx_0} = -\bp_0$ by definition of the admissibility condition on $\phi$.
If now we use $\epsilon = h^{2 + \alpha}$ with $\alpha \geq 1$, then the remainder term is just $\mathcal{O}(h)$ as in the statement of the Proposition.

\subsection{Proof of Proposition 2}

We restate the data-driven constructions of coherent states and positions expectations detailed in the main text here for ease of reference. 

The coherent state definition using extrinsic coordinates provided by $v_i \in \mathbb{R}^n$, with expected position corresponding to the point $v_0$ is given by a vector with elements
\begin{align}
	\left[\vert \tilde{\psi}^h_{\zeta_0} )\right]_{\ell} = e^{\frac{i}{h}(v_\ell-v_0)^{\sf T}p_0}e^{-\frac{\lVert v_\ell - v_0 \rVert ^2}{2h}}, \quad 1\leq \ell \leq N
	\label{eq:coh_extrinsic_si}
\end{align}
$p_0 = (v_n-v_0)/||v_n-v_0|| \in \mathbb{R}^n$ is a unit norm vector between the initial point and a nearby neighboring point according to the original Euclidean distances in $\mathbb{R}^n$. 

Alternatively, the coherent state can be formulated using local coordinates constructed using LPCA.
We perform LPCA on samples points within a Euclidean distance $\delta_{\rm PCA}$ from the  sample point $v_0$ corresponding to the initial point $\bx_0$; \ie on sample points in the set $\mathcal{S}_0 = \{\ell ~\vert~ \vert\vert v_0-v_\ell\vert\vert^2 \leq \delta_{\rm PCA} \}$. Let $\vartheta^{v_0}_\ell \in \mathbb{R}^{\tilde{\nu}}$ be the LPCA coordinates for all $v_\ell$ with $\ell \in \mathcal{S}_0$. Once the LPCA approximation to the tangent space is formed, the initial momentum can be chosen to be a vector in this space. Then the coherent state is defined as the vector:
\begin{align}
	\left[\vert \tilde{\psi}^h_{\zeta_0} )\right]_{\ell} =
	\begin{cases}
		e^{\frac{i}{h}(\vartheta^{v_0}_\ell-\vartheta^{v_0}_0)^{\sf T}p_0}e^{-\frac{\lVert \vartheta^{v_0}_\ell - \vartheta^{v_0}_0 \rVert ^2}{2h}},&  \quad \text{if} \quad \ell \in \mathcal{S}_0 \\
		0,& \quad \text{otherwise},
	\end{cases}
	\label{eq:coherent_state_data_si}
\end{align}
with $p_0$ a unit vector in $\mathbb{R}^{\tilde{\nu}}$.

To calculate the expected position using extrinsic coordinates, we start with $q_\ell := \vert[\vert \psi_{t,\zeta_0})]_\ell\vert^2, ~\forall \ell$, which is the probability distribution over data points determined by the propagated state. Then, the expected position calculated using extrinsic coordinate is the point closest (in $L^2$ distance in $\mathbb{R}^n$) to $\bar{v}=\sum_{\ell} q_\ell v_l$. We denote this data point $\expect{\breve{\bx}_t}$.

To calculate this expectation in LPCA coordinates, we first perform LPCA around datapoint $v_{\ell^*}$, where $\ell^* = \textrm{argmax}_\ell q_\ell$, using all data points in $\mathcal{S}_{\ell^*}= \{\ell~\vert~ \vert\vert v_{\ell^*}-v_\ell\vert\vert^2 \leq \delta_{\rm PCA} \}$. The mean position is then defined as the data point that is closest (according to Euclidean distance in the LPCA space) to the empirical LPCA mean position $\bar{\vartheta} = \sum_{\ell\in\mathcal{S}_{\ell^*}} q_\ell \vartheta^{v_{\ell^*}}_\ell$, where $\vartheta^{v_{\ell^*}}_\ell$ are LPCA coordinates for data point $\ell$. We notate this approximated expected position as $\expect{\breve{\bx}_t}^{\rm LPCA}$.

Finally, for use in the following, we define the Riemannian normal coordinates centered at a point $\bx_0$ as $\ncoord{\bx_0}(\bx) = (\ncoord{\bx_0}_1(\bx), ..., \ncoord{\bx_0}_\nu(\bx))$.

Then, restating Proposition 2,
\begin{prop}
	\label{prop:prop2_si}
	Let $\beta := (2 + \alpha)(\nu + 1) + 1$ and $\beta_0 := (2 + \alpha)(\nicefrac{5 \nu}{2} + 4) + 2(\nu + 2)$.
	Then, $\exists ~ h_0 > 0$ such that
	\begin{enumerate}
	\item	if the initial state is given by \cref{eq:coh_extrinsic_si} and $N^{-\nicefrac{1}{\gamma_0}} \lesssim h < h_0$, then with high probability, $d_g(\breve{\bx}_t, \bx_t) < h$ and
	\item	if the initial state is given by \cref{eq:coherent_state_data_si} and $N^{-\nicefrac{1}{\gamma_1}} \lesssim h < h_0$, then with high probability, $d_g(\breve{\bx}_t, \bx_t) < h$,
	\end{enumerate}
	wherein $\gamma_0 := \max\{ \nu(\nu/4 + 1 + \beta), \beta_0 \}$ and $\gamma_1 := \max\{ (\nu + 2)(\nu/4 + 1 + \beta)/2, \beta_0 \}$ whenever $\nu \geq 2$ and if $\nu = 1$ is possible, these maxima must be taken with $3\nu/2 + 2\beta$.
\end{prop}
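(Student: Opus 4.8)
\noindent\textit{Proof strategy.} The plan is to reduce the statement to the consistency and localization results for the data-driven propagator $\breve V_t$ acting on \emph{exact} coherent states, established in \cite{kumar_math}, and then to close the one remaining gap: the phases in \cref{eq:coh_extrinsic_si} and \cref{eq:coherent_state_data_si} are assembled from ambient (or LPCA) data and are therefore only \emph{approximately} admissible. First I would show that, on an event of probability tending to $1$ over the draw of $V$, the data-driven initial vector $\sket{\tilde\psi^h_{\zeta_0}}$ equals a genuine coherent state localized at $\zeta_0$ plus an $\mathcal{O}(h)$ error in $L^2(\mathcal{M})$ — in fact in a norm against which the coordinate observables $\Op_h(u_j)$ and the (symmetrized) propagator $\breve V_t$ are bounded. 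Second, I would propagate this estimate: $\breve V_t$ is unitarily similar to a unitary through the fixed similarity $\breve D^{1/2}_{\epsilon,\lambda}$, so the $\mathcal{O}(h)$ bound transfers to $\sket{\psi_{t,\zeta_0}}$ and hence to the empirical means $\bar v$, $\bar\vartheta$ that define $\expect{\breve\bx_t}$ and $\expect{\breve\bx_t}^{\LPCA}$ (both extraction rules of the main text being handled identically, since each empirical mean is an $\mathcal{O}(h)$ perturbation). Third, I would invoke the quantum–classical correspondence for graph Laplacians (\cite{kumar_math}; cf.\ \cref{thm:thm1_si}) with $a = u_j$ — a bounded smooth coordinate function, hence in $h^0S^0$ — to conclude that these means equal the coordinates of $\bx_t := \pi_{\mathcal{M}}\Flow_t(\bx_0,\bp_0)$ up to $\mathcal{O}(h)$, and then combine the localization of the propagated coherent state with $\rho_N$-density of the sample to get that the nearest sample point $\breve\bx_t$ satisfies $d_g(\breve\bx_t,\bx_t)=\mathcal{O}(h)$, which for $h<h_0$ (with $\gamma_0,\gamma_1$ chosen so the sampling contributions are subdominant) is the claimed $d_g(\breve\bx_t,\bx_t)<h$.

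The geometric content of the first stage, in the extrinsic case \cref{eq:coh_extrinsic_si}, comes from the isometric $C^\infty$ embedding with bounded second fundamental form. In normal coordinates $\ncoord{\bx_0}$ centered at $\bx_0$, the chord $v_\ell-v_0$ equals $\ncoord{\bx_0}$ plus a purely normal quadratic term plus an $\mathcal{O}(\lvert\ncoord{\bx_0}\rvert^3)$ tangential remainder, while $\lVert v_\ell-v_0\rVert^2 = \lvert\ncoord{\bx_0}\rvert^2 + \mathcal{O}(\lvert\ncoord{\bx_0}\rvert^4)$. Since the Gaussian envelope confines the state to $\lVert v_\ell-v_0\rVert=\mathcal{O}(\sqrt h)$, the phase $\tfrac1h(v_\ell-v_0)^{\sf T}p_0$ and the envelope $\lVert v_\ell-v_0\rVert^2/2h$ differ from those of an admissible phase by $\mathcal{O}(h)$ in the exponent $\tfrac ih(\cdot)$, which is $\mathcal{O}(h)$ once integrated against $\lvert\psi\rvert^2$. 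The one genuinely delicate source is the momentum direction: $p_0=(v_n-v_0)/\lVert v_n-v_0\rVert$ is the chord to one of the $n_{\rm coll}$ nearest neighbours, which with high probability lies within an ambient radius $\rho_N\to0$ at a rate fixed by $N$, and bounded curvature then gives that $p_0$ deviates from a unit tangent direction by $\mathcal{O}(\rho_N)$, contributing $\mathcal{O}(\rho_N/\sqrt h)$ to the exponent. Requiring this, together with the graph-Laplacian concentration of $\sLap$ about $\hat{\mathcal{L}}_{\epsilon,\lambda}$ at the scaling $\epsilon=h^{2+\alpha}$, and with the Monte-Carlo fluctuation of the empirical inner products (whose variance carries the coherent-state normalization factor $h^{-\nu/2}$ and the $\epsilon$-powers bundled into $\beta,\beta_0$), all to be $\mathcal{O}(h)$, is precisely what forces $h\gtrsim N^{-1/\gamma_0}$ with $\gamma_0=\max\{\nu(\nu/4+1+\beta),\beta_0\}$.

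For the LPCA state \cref{eq:coherent_state_data_si} the only modification is that the role of normal coordinates is played by LPCA coordinates on $\mathcal{S}_0$, which by \cite{singer2012vector} agree with normal coordinates up to second order in $\delta_{\rm PCA}$, with high probability. The reduction above then goes through verbatim, except that the relevant sample complexity is that of estimating a $\nu$-dimensional local covariance rather than of resolving neighbourhoods in the ambient space; this replaces the exponent $\nu$ by $(\nu+2)/2$ and yields $\gamma_1=\max\{(\nu+2)(\nu/4+1+\beta)/2,\beta_0\}$. When $\nu=1$ several of these denominators degenerate and one instead controls the direction, density, and coordinate errors directly, which produces the separate exponent $3\nu/2+2\beta$ recorded in the statement.

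The hard part will be the joint probabilistic bookkeeping of the first two stages: one must run, on a \emph{single} high-probability event, the concentration of $\sLap$ around $\hat{\mathcal{L}}_{\epsilon,\lambda}$ (strong enough that $\breve V_t$ tracks the half-wave propagator on the $\mathcal{O}(1/h)$ energy window selected by the coherent state, as in the spectral-truncation discussion of the main text), the near-neighbour density estimate controlling $p_0$, the LPCA consistency of \cite{singer2012vector}, and the control of the empirical expectations — and then calibrate $\epsilon=h^{2+\alpha}$ against all of these so that every contribution is $\mathcal{O}(h)$. Tracking how these rates compose is exactly what generates the explicit exponents $\beta,\beta_0,\gamma_0,\gamma_1$. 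By comparison, the geometric comparisons among ambient chords, LPCA coordinates, and normal coordinates are routine consequences of the bounded-second-fundamental-form hypothesis, and the final passage from an $\mathcal{O}(h)$ bound on the propagated state to an $\mathcal{O}(h)$ bound on $d_g(\breve\bx_t,\bx_t)$ follows from localization plus sample density, as recorded in \cite{kumar_math}.
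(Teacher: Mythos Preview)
Your high-level strategy matches the paper's: (i) show the data-driven initial vector is a perturbation of a genuine coherent state; (ii) propagate this perturbation; (iii) invoke the convergence of coherent-state means to the geodesic point from \cite{kumar_math}; (iv) take a union bound. The geometric comparisons you outline (chord versus normal coordinates via bounded second fundamental form; LPCA via \cite{singer2012vector}) are the right ones and are exactly what the paper packages into its two preparatory lemmas.

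There is, however, a genuine gap at your step (ii), and it propagates back to distort your accounting of $\beta$. You argue that $\breve V_t$ is similar to a unitary through $\breve D^{1/2}_{\epsilon,\lambda}$, so an $\mathcal{O}(h)$ bound on the initial perturbation ``transfers'' to the propagated state. The paper does \emph{not} do this, and for a reason: the relevant control is in $L^\infty$ (you ultimately need $\sum_\ell u_\ell(q_\ell-\breve q_\ell)$ small, and the localization/mean arguments from \cite{kumar_math} are stated pointwise), and the discrete propagator, via its Nystr\"om extension, satisfies only a crude $L^\infty$ bound of the form
\[
|\breve V_t[u](\bx)| \lesssim \|u\|_\infty + \|u\|_{N,2}\, t\, \epsilon^{-(\nu+1)/2}\cdot(\text{bounded density factors}).
\]
This $\epsilon^{-(\nu+1)/2}$ amplification is the entire reason $\beta$ takes the specific value $(2+\alpha)(\nu+1)+1$: the initial state must be shown to be an $\mathcal{O}(h^\beta)$ perturbation of a coherent state (not $\mathcal{O}(h)$), so that after applying the bound above twice --- once for the difference, once for the sup of each state --- one recovers $\|\,|\breve\psi_{t,\zeta_0}|^2-|\psi_{t,\tilde\zeta_0}|^2\|_\infty\lesssim h^\beta\epsilon^{-(\nu+1)}=h$. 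Your attribution of $\beta$ to ``Monte-Carlo fluctuation of the empirical inner products'' is therefore off: $\beta$ is fixed by propagator amplification, while $\beta_0$ is what comes from the separate concentration result (\cite{kumar_math}, Prop.~4) governing $d_g(\expect{\bx_t}_{\zeta_0},\bx_t)\le h$ for the \emph{true} coherent state. Likewise, the exponents $\nu(\nu/4+1+\beta)$, $(\nu+2)(\nu/4+1+\beta)/2$, and $3\nu/2+2\beta$ in $\gamma_0,\gamma_1$ arise as the Chernoff rates needed to guarantee the $\mathcal{O}(h^\beta)$ (not $\mathcal{O}(h)$) closeness of the prepared state at step (i); with your $\mathcal{O}(h)$ target these rates would not reproduce the stated $\gamma$'s.
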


This proposition is a statement about the accuracy of the calculated expected positions with respect to the position determined by the classical geodesic flow. There are in total four combinations of state preparation and expectation calculation to consider:
\begin{enumerate}
\item	The state $\sket{\psi^h_{\zeta_0}}$ is prepared using extrinsic coordinates as per \cref{eq:coh_extrinsic_si} and the mean is also computed in extrinsic coordinates, giving $\bar{v}$.	\label{app:thm3-ext-ext-mean}
\item	The state is still \cref{eq:coh_extrinsic_si}, while the mean is computed in LPCA coordinates, giving $\bar{\vartheta}$.	\label{app:thm3-ext-lpca-mean}
\item	The state is $\sket{\psi^h_{\zeta_0}}$ prepared using LPCA coordinates as per \cref{eq:coherent_state_data_si} and the mean is also computed in LPCA coordinates, $\bar{\vartheta}$. 	\label{app:thm3-lpca-lpca-mean}
\item	The state is still \cref{eq:coherent_state_data_si}, while the mean is now computed in extrinsic coordinates, giving $\bar{v}$.	\label{app:thm3-lpca-ext-mean}
\end{enumerate}
In all cases, we are using states whose phases are some perturbation away from being admissible.
When the initial states are true coherent states, it is shown in \cite[$\S 6$]{kumar_math} that the corresponding version of this proposition holds in the case of extrinsic means with a better convergence rate and similarly, a better convergence rate than stated here follows for the LPCA case upon ensuring that the projected coordinates are diffeomorphic in a neighborhood of the maximum of the propagated coherent state.
In the proof of this proposition below, we extend these results to the case of initial states that are perturbations of sufficiently low order, of coherent states.
Thus, the main argument rests on criteria for our constructed initial states $\sket{\psi^h_{\zeta_0}}$ to be such perturbations, which is the content of the following two preliminary lemmas.

We begin with a criterion for the phases of the initial states, which is where the approximations ultimately land and we record it as,

\begin{lemma} \label{lem:cs-approx}
Let $\varphi \in C^{\infty}(\mathcal{M} \times T^*\mathcal{M})$ be an admissible phase and $\zeta_0 \in T^*\mathcal{M}$.
Then, for $\ket{\psi^h_{\zeta_0}}$ a (normalized) coherent state localized at $\zeta_0$ with phase $\varphi$ and given $\phi \in C^{\infty}(\mathcal{M})$, if $\varepsilon > 0$ and there is a constant $C\geq 0$ such that $||\phi(\cdot) - \varphi(\cdot ; \zeta_0)||_{\infty} \leq C h^{\nicefrac{\nu}{4} + 1} \varepsilon$, then there is $\varepsilon_0 > 0$ so that for all $\varepsilon \in (0, \varepsilon_0]$, $\breve{\psi}^h_{\zeta_0}(\cdot) := e^{\frac{i}{h} \phi(\cdot)}/|| e^{\frac{i}{h} \phi(\cdot)} ||_{L^2(\mathcal{M})} = \psi^h_{\zeta_0}(\cdot) + \mathcal{O}(\varepsilon)$.
\end{lemma}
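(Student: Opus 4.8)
The plan is to reduce the claim to an elementary perturbation estimate for normalized vectors together with two $L^2$-norm bounds. Abbreviate $f := e^{\frac{i}{h}\varphi(\cdot\,;\zeta_0)}$ and $g := e^{\frac{i}{h}\phi(\cdot)}$, both nonvanishing elements of $C^\infty(\mathcal{M}) \subset L^2(\mathcal{M})$, so that (the $h^{-\nu/4}$ prefactor in \cref{eq:coherent-state} being irrelevant after normalization) $\psi^h_{\zeta_0} = f/\lVert f\rVert_{L^2}$ and $\breve\psi^h_{\zeta_0} = g/\lVert g\rVert_{L^2}$, and the $\mathcal{O}(\varepsilon)$ is measured in $L^2(\mathcal{M})$. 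Adding and subtracting $g/\lVert f\rVert$ and using the reverse triangle inequality $\bigl\lvert\,\lVert f\rVert - \lVert g\rVert\,\bigr\rvert \le \lVert f - g\rVert$ gives the standard bound $\bigl\lVert f/\lVert f\rVert - g/\lVert g\rVert\bigr\rVert_{L^2} \le 2\,\lVert f - g\rVert_{L^2}/\lVert f\rVert_{L^2}$, so it suffices to show $\lVert f - g\rVert_{L^2}/\lVert f\rVert_{L^2} = \mathcal{O}(\varepsilon)$ with the implicit constant and the threshold $\varepsilon_0$ chosen independently of $h$ (for $h$ in some range $(0,h_0]$).

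For the numerator, estimate pointwise: $\lvert f(\bx) - g(\bx)\rvert = \lvert e^{\frac{i}{h}\varphi(\bx;\zeta_0)}\rvert\,\bigl\lvert 1 - e^{\frac{i}{h}(\phi(\bx) - \varphi(\bx;\zeta_0))}\bigr\rvert$. Admissibility of $\varphi$ (cf. \cite{kumar_math}) guarantees $\operatorname{Im}\varphi(\cdot\,;\zeta_0) \ge 0$ on $\mathcal{M}$, so $\lvert e^{\frac{i}{h}\varphi(\bx;\zeta_0)}\rvert \le 1$. Applying the elementary bound $\lvert 1 - e^{w}\rvert \le \lvert w\rvert e^{\lvert w\rvert}$ with $w = \tfrac{i}{h}(\phi - \varphi)$ and the hypothesis $\lVert \phi - \varphi(\cdot\,;\zeta_0)\rVert_\infty \le C h^{\nu/4 + 1}\varepsilon$ yields $\lvert w\rvert \le C h^{\nu/4}\varepsilon$; fixing $\varepsilon_0$ so that $C h_0^{\nu/4}\varepsilon_0 \le \ln 2$ then gives $\bigl\lvert 1 - e^{\frac{i}{h}(\phi - \varphi)}\bigr\rvert \le 2C h^{\nu/4}\varepsilon$ uniformly on $\mathcal{M}$ for all $h \le h_0$ and $\varepsilon \le \varepsilon_0$. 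Integrating the square over the compact manifold gives $\lVert f - g\rVert_{L^2(\mathcal{M})} \le 2C\sqrt{\operatorname{vol}(\mathcal{M})}\,h^{\nu/4}\varepsilon$.

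For the denominator I would invoke the coherent-state normalization estimate already established in \cite{kumar_math}: writing $\varphi$ in Riemannian normal coordinates $\ncoord{\bx_0}$ about $\bx_0$ via its admissible normal form, one has $\operatorname{Im}\varphi(\bx;\zeta_0) \le c\lVert\ncoord{\bx_0}(\bx)\rVert^2$ on a fixed normal ball $B$ around $\bx_0$, hence $\lvert e^{\frac{i}{h}\varphi(\bx;\zeta_0)}\rvert^2 \ge e^{-2c\lVert\ncoord{\bx_0}(\bx)\rVert^2/h}$ there, and a Gaussian integral gives $\lVert f\rVert_{L^2(\mathcal{M})}^2 \ge \int_{B} e^{-2c\lVert\ncoord{\bx_0}(\bx)\rVert^2/h}\,dV \ge c' h^{\nu/2}$ for some $c' > 0$ and all sufficiently small $h$. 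Dividing the two bounds, $\lVert f - g\rVert_{L^2}/\lVert f\rVert_{L^2} \le \bigl(2C\sqrt{\operatorname{vol}(\mathcal{M})}/\sqrt{c'}\bigr)\varepsilon = \mathcal{O}(\varepsilon)$ with an $h$-independent constant, and combined with the first paragraph this gives $\breve\psi^h_{\zeta_0} = \psi^h_{\zeta_0} + \mathcal{O}(\varepsilon)$ in $L^2(\mathcal{M})$, as claimed.

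The main obstacle — though a mild one within the framework of \cite{kumar_math} — is the lower bound $\lVert f\rVert_{L^2} \gtrsim h^{\nu/4}$: one must make the Gaussian comparison rigorous from the admissible normal form (controlling the $\mathcal{O}(\lVert\ncoord{\bx_0}\rVert^3)$ remainder on a ball of radius that can be taken uniform) and verify that the resulting constant $c'$, hence the implicit $\mathcal{O}(\varepsilon)$ constant and the threshold $\varepsilon_0$, are uniform in $h$ (and in $\zeta_0$, if such uniformity is needed when the lemma is applied in the proof of \cref{prop:prop2_si}). Everything else is a routine combination of the triangle inequality and the complex-exponential estimate; note that the exponent $\nu/4 + 1$ in the hypothesis is precisely what makes the $h^{\nu/4}$ factors cancel between numerator and denominator, leaving the clean $\mathcal{O}(\varepsilon)$, and also shows why the statement cannot hold in sup norm.
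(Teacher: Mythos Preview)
Your $L^2$ argument is internally consistent, but it proves strictly less than the lemma as the paper states and uses it: the paper's estimate is \emph{pointwise} (i.e., in $L^\infty$), and this is exactly what is invoked in the proof of \cref{lem:model-cs}, where one needs $\lVert \dcstate{\tilde\zeta_0} - \cstate{\zeta_0}\rVert_{L^\infty} \lesssim h^{\beta}$. Your closing remark that ``the statement cannot hold in sup norm'' is therefore incorrect, and points to the one idea you are missing.

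The gap is your $L^2$ bound on $f-g$. You estimate $\lVert f-g\rVert_{L^2} \le \sqrt{\operatorname{vol}(\mathcal{M})}\,\lVert f-g\rVert_\infty \lesssim h^{\nu/4}\varepsilon$, throwing away the Gaussian localization of $f$. The paper instead writes $g - f = f\,(e^{\frac{i}{h}\upsilon} - 1)$ with $\upsilon := \phi - \varphi$, so that
\[
	\lVert f - g\rVert_{L^2} \le \lVert f\rVert_{L^2}\,\lVert e^{\frac{i}{h}\upsilon} - 1\rVert_\infty \lesssim h^{\nu/4}\cdot h^{\nu/4}\varepsilon = h^{\nu/2}\varepsilon ,
\]
which is sharper by a factor $h^{\nu/4}$ (here one also uses the \emph{upper} bound $\lVert f\rVert_{L^2}\lesssim h^{\nu/4}$, which follows from the same admissible-phase Gaussian comparison you cite for the lower bound). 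With this in hand the pointwise decomposition
\[
	|\breve\psi - \psi| \le \frac{|f-g|\,\lVert f\rVert_{L^2} + |f|\,\bigl|\lVert f\rVert_{L^2} - \lVert g\rVert_{L^2}\bigr|}{\lVert f\rVert_{L^2}\,\lVert g\rVert_{L^2}}
	\lesssim \frac{h^{\nu/4}\varepsilon\cdot h^{\nu/4} + 1\cdot h^{\nu/2}\varepsilon}{h^{\nu/4}(h^{\nu/4} - C'h^{\nu/2}\varepsilon)} \lesssim \varepsilon
\]
goes through, using $|f-g|\le C'h^{\nu/4}\varepsilon$ and $|f|\le 1$ pointwise. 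The reason your attempt at the sup-norm version seemed to blow up is precisely that the second term above is $\sim \lVert f-g\rVert_{L^2}/h^{\nu/2}$, and your non-sharp $h^{\nu/4}\varepsilon$ bound gives $h^{-\nu/4}\varepsilon$ there; the factorization fixes this.
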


\begin{proof}
We denote by $\varphi$ the specification $\varphi(\cdot ; \zeta_0)$ and $\upsilon := \phi - \varphi$.
Then, a Taylor expansion of $e^{\frac{i}{h} z}$ yields that there is a constant $C' > 0$ such that $||e^{\frac{i}h \phi} - e^{\frac{i}{h} \varphi}||_{\infty} \lesssim ||e^{\frac{i}{h} \upsilon} - 1||_{\infty} \leq C' h^{\frac{\nu}{4}} \varepsilon$ and hence, $||e^{\frac{i}{h} \phi} - e^{\frac{i}{h} \varphi}||_{L^2} \leq ||e^{\frac{i}{h} \upsilon} - 1||_{\infty} ||e^{\frac{i}{h} \varphi}||_{L^2} \leq C'^2 h^{\frac{\nu}{2}} \varepsilon$, so $| \, ||e^{\frac{i}{h} \phi}||_{L^2} - ||e^{\frac{i}{h} \varphi}||_{L^2} \, | \leq C' h^{\frac{\nu}{2}} \varepsilon$.
Thus,
\begin{align*}
	| \breve{\psi}^h_{\zeta_0} - \psi^h_{\zeta_0} | 
		& = \left| \frac{e^{\frac{i}{h} \phi} ||e^{\frac{i}{h} \varphi}||_{L^2} -  e^{\frac{i}{h} \varphi} ||e^{\frac{i}{h} \phi}||_{L^2}}{||e^{\frac{i}{h} \varphi}||_{L^2} ||e^{\frac{i}{h} \phi}||_{L^2}} \right|	\\
		&\leq \frac{|e^{\frac{i}{h} \phi} - e^{\frac{i}{h} \varphi}| ||e^{\frac{i}{h} \varphi}||_{L^2} + |e^{\frac{i}{h} \varphi}(||e^{\frac{i}{h} \varphi}||_{L^2} - ||e^{\frac{i}{h} \phi}||_{L^2})|}{||e^{\frac{i}{h} \varphi}||_{L^2} ||e^{\frac{i}{h} \phi}||_{L^2}}	\\
		& \leq C' \varepsilon \frac{||e^{\frac{i}{h} \varphi}||_{L^2} h^{\frac{\nu}{4}} + |e^{\frac{i}{h} \varphi}| h^{\frac{\nu}{2}}}{||e^{\frac{i}{h} \varphi}||_{L^2} (||e^{\frac{i}{h} \varphi}||_{L^2} - C' h^{\frac{\nu}{2}} \varepsilon)}\\
		& \lesssim \varepsilon
\end{align*}
whenever $\varepsilon$ is sufficiently small.
\end{proof}

We wish to see that the states prepared in practice, as in Sec. 2.B of the main text, are themselves perturbations of bonafide coherent states.
That this is achievable with high probability is given in,

\begin{lemma} \label{lem:model-cs}
Let $\sket{\psi^h_{\zeta_{\operatorname{ext}}}}$ be prepared as in \cref{eq:coh_extrinsic_si} with $\zeta_{\operatorname{ext}} := (v_0, p_0) \in \mathbb{R}^{2n}$ and $\sket{\psi^h_{\zeta_{\operatorname{LPCA}}}}$ be prepared as in \cref{eq:coherent_state_data_si} with $\zeta_{\operatorname{LPCA}} := (\vartheta_0^{v_0}, p_0) \in \mathbb{R}^{2 \tilde{\nu}}$ and $\tilde{\zeta}_{\operatorname{LPCA}} := (v_0, \tilde{p}_0) \in \mathbb{R}^{2n}$ its realization in the extrinsic space with $\tilde{p}_0$ emanating from $v_0$ and residing in the LPCA subspace approximation of $T_{v_0} \iota(\mathcal{M})$.
Then, setting $\bx_0 := \iota^{-1}(v_0)$, we have for all $\beta \geq 0$ that
\begin{enumerate}
\item	there is an $L^2$ normalized coherent state $\cstate{\zeta_0}$ localized at $\zeta_0 := (\bx_0, \ncoord{\bx_0} \iota^{-1}(v_n)) \in T^*\mathcal{M}$ and a constant $h_0 > 0$ such that for all $h \in (0, h_0]$,
	\[
		\Pr[||d\iota|_{\bx_0} \zeta_0 - \zeta_{\ext}||_{\mathbb{R}^n} > h^{\nicefrac{\nu}{4} + 1 + \beta}
			\wedge ||[\dcstate{\zeta_{\ext}}] - [\cstate{\zeta_0}]||_{\infty} > h^{\beta}]
		  \leq e^{-\Omega(N h^{\nu (\nicefrac{\nu}{4} + 1 + \beta)})} + e^{-\Omega(N h^{\nicefrac{3 \nu}{2} + 2\beta})}
	\]
	and
\item	there is an $L^2$ normalized coherent state $\cstate{\zeta_0}$ localized at $\zeta_0 = (\bx_0, \zeta_{\bp_0}) \in T^*\mathcal{M}$ and a constant $h_0 >0$ such that if $\delta > 0$ and $\delta N^{-\frac{2}{(\nu + 2)(\nicefrac{\nu}{4} + 1 + \beta)}} \leq h \leq h_0$, then
	\[
		\Pr[||d\iota|_{\bx_0} \zeta_0 - \tilde{\zeta}_{\LPCA}||_{\mathbb{R}^n} > h^{\nicefrac{\nu}{4} + 1 + \beta}
			\wedge ||[\dcstate{\zeta_{\LPCA}}] - [\cstate{\zeta_0}]||_{\infty} > h^{\beta}]
		\leq e^{-\Omega(\delta^2)} + e^{-\Omega(N h^{\nicefrac{3 \nu}{2} + 2\beta})} .
	\]
\end{enumerate}
\end{lemma}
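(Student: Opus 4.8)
\emph{Strategy.} The plan is to reduce both parts to \cref{lem:cs-approx}. For a suitable $\zeta_0=(\bx_0,\bp_0)\in T^*\mathcal{M}$ I will show that, with high probability, the phase of the constructed vector lies within $\mathcal{O}(h^{\nu/4+1+\beta})$ in $L^\infty$ of an admissible phase $\varphi(\cdot\,;\zeta_0)$, so that the discrete analogue of \cref{lem:cs-approx} (run over the samples rather than over $\mathcal{M}$) delivers $\mathcal{O}(h^\beta)$ agreement of the normalized vectors, provided the $\ell^2$ normalizing sum behaves like its continuum counterpart. In part~1 I take $\bp_0$ to be the $g$-dual of the orthogonal projection onto $T_{\bx_0}\mathcal{M}$ of $d\iota|_{\bx_0}^{-1}(v_n-v_0)$, which in normal coordinates is a rescaling of $\ncoord{\bx_0}\iota^{-1}(v_n)$; in part~2 I take $\bp_0$ to be the preimage under $d\iota|_{\bx_0}$ of $p_0\in\mathbb{R}^{\tilde\nu}$, which by construction sits in the LPCA approximation of $T_{\bx_0}\iota(\mathcal{M})$. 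In each case the two events in the probability statement are exactly (a) the data-driven momentum failing to be $h^{\nu/4+1+\beta}$-close to $d\iota|_{\bx_0}\bp_0$, and (b) the resulting normalized vectors failing to be $h^\beta$-close in $\ell^\infty$; so it suffices to establish (a)$^c\Rightarrow$(b)$^c$ deterministically together with one extra concentration event, and to bound $\Pr[\text{(a)}]$ and that extra event.

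\emph{Deterministic geometry.} Since $\iota$ is isometric with bounded second fundamental form, Taylor expansion about $\bx_0$ gives $v_\ell-v_0=d\iota|_{\bx_0}\ncoord{\bx_0}(\bx_\ell)+\mathcal{O}(\|\ncoord{\bx_0}(\bx_\ell)\|^2)$ and $\|v_\ell-v_0\|^2_{\mathbb{R}^n}=\|\ncoord{\bx_0}(\bx_\ell)\|^2+\mathcal{O}(\|\ncoord{\bx_0}(\bx_\ell)\|^4)$, with constants fixed by the curvature bound. The Gaussian envelope restricts the indices that matter in the $\ell^\infty$ comparison to the core $\|\ncoord{\bx_0}(\bx_\ell)\|\lesssim\sqrt{h\log(1/h)}$, outside of which both the constructed vector and the restriction of $\cstate{\zeta_0}$ are $\mathcal{O}(h^\infty)$. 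On the core, the constructed phase $(v_\ell-v_0)^{\mathsf T}p_0+\tfrac{i}{2}\|v_\ell-v_0\|^2$ agrees with the local normal form $\langle\bp_0,\ncoord{\bx_0}(\bx_\ell)\rangle+i\|\ncoord{\bx_0}(\bx_\ell)\|^2$ of an admissible phase up to $\mathcal{O}(\|\ncoord{\bx_0}\|^2\theta+\|\ncoord{\bx_0}\|^3)$, where $\theta$ is the angular error between $p_0$ and $d\iota|_{\bx_0}\bp_0$; on the core this is $\mathcal{O}(h\theta+h^{3/2})$ up to logarithms. For part~2 the same computation is run in LPCA coordinates, the embedding expansion being replaced by the second-order agreement of LPCA coordinates with $\ncoord{\bx_0}$ from \cite{singer2012vector}, giving a residual $\mathcal{O}(h\,\delta_{\rm PCA}^2+h^{3/2})$ up to logarithms. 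Requiring $\theta$ (resp. $\delta_{\rm PCA}^2$) to be $\lesssim h^{\nu/4+1+\beta}$ then forces the total phase discrepancy below $h^{\nu/4+1+\beta}$, and \cref{lem:cs-approx} with $\varepsilon=h^\beta$ yields (b)$^c$ once the $\ell^2$ normalizing sum is controlled.

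\emph{Probabilistic estimates.} For part~1, if the chosen near neighbour $v_n$ lies at geodesic distance $\rho$ from $v_0$ then $\theta=\mathcal{O}(\rho)$, and $\rho$ is bounded by the $n_{\rm coll}$-th nearest-neighbour distance; since $\pd$ is bounded below, $\Pr[\rho>r]\le\Pr[\#(V\cap B_r(v_0))<n_{\rm coll}]$ with the count Binomial of mean $\Theta(Nr^\nu)$, so a Chernoff bound gives $e^{-\Omega(Nr^\nu)}$, and $r\asymp h^{\nu/4+1+\beta}$ produces the first term $e^{-\Omega(Nh^{\nu(\nu/4+1+\beta)})}$. The common second term is the concentration of the normalization: $\sum_\ell e^{-\|v_\ell-v_0\|^2/h}$ is a Monte-Carlo estimate of $N\int_{\mathcal{M}}e^{-\|\iota(\bx)-v_0\|^2/h}\pd\,d\bx=\Theta(Nh^{\nu/2})$, a Gaussian integral of width $\sqrt h$, with summands in $[0,1]$ of variance $\Theta(h^{\nu/2})$; Bernstein's inequality bounds the probability that its relative error exceeds the (polynomially small in $h$) tolerance needed for the $\mathcal{O}(h^\beta)$ target by $e^{-\Omega(Nh^{3\nu/2+2\beta})}$, matching the second exponential and certifying that the discrete $\ell^2$ normalization tracks the continuum $L^2(\mathcal{M})$ one used in \cref{lem:cs-approx}. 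For part~2 the direction estimate instead comes from the high-probability LPCA guarantee of \cite{singer2012vector}: achieving $\mathcal{O}(\delta_{\rm PCA}^2)$ tangent-space accuracy with confidence parameter $\delta$ needs $\Omega(\delta^2)$ effective samples in $B_{\delta_{\rm PCA}}(v_0)$, which after tuning $\delta_{\rm PCA}\asymp h^{(\nu/4+1+\beta)/2}$ is exactly the hypothesis $\delta N^{-2/((\nu+2)(\nu/4+1+\beta))}\le h$ and contributes $e^{-\Omega(\delta^2)}$; a union bound with the normalization event closes part~2.

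\emph{Main obstacle.} The delicate part is the joint bookkeeping of three scales — the Gaussian width $\sqrt h$, the phase-accuracy target $h^{\nu/4+1+\beta}$, and the sampling resolution (the $n_{\rm coll}$-th neighbour distance, or $\delta_{\rm PCA}$) — together with tracking how the curvature- and LPCA-induced residuals propagate through the data-dependent normalization so that the final $\ell^\infty$ norm over all $N$ components is genuinely dominated by the $\mathcal{O}(\sqrt{h\log(1/h)})$-core and the exponents above come out sharp. A secondary point, which dictates the orthogonal projections defining $\bp_0$, is checking that the resulting $\varphi(\cdot\,;\zeta_0)$ is honestly admissible (its differential must lie in $T^*\mathcal{M}$), with the projection error absorbed into $\theta$ (resp. $\delta_{\rm PCA}^2$); and for part~2 one must also ensure the LPCA coordinate map is a diffeomorphism on the core, which holds with the stated probability.
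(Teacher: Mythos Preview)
Your overall architecture matches the paper's: reduce to \cref{lem:cs-approx}, control the phase discrepancy deterministically, control the discrete normalization by concentration, and invoke Chernoff (part~1) and the LPCA bound of \cite{singer2012vector} (part~2) for the probabilistic inputs. The exponents you extract for the two exponential terms are the right ones.

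The substantive gap is in your deterministic geometry. You compare the constructed phase to the \emph{normal-coordinate local form} $\langle \bp_0,\ncoord{\bx_0}(\bx)\rangle+i\|\ncoord{\bx_0}(\bx)\|^2$ and pick up a residual $\mathcal{O}(\|\ncoord{\bx_0}\|^3)$; on the Gaussian core $\|\ncoord{\bx_0}\|\sim\sqrt{h}$ this is $\mathcal{O}(h^{3/2})$. But \cref{lem:cs-approx} needs the phase discrepancy below $h^{\nu/4+1+\beta}$, and $h^{3/2}\le h^{\nu/4+1+\beta}$ fails already for $\nu\ge 3$ with $\beta=0$ (and for any $\nu$ once $\beta$ is large). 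The same $h^{3/2}$ term appears in your part~2 residual. So the route via normal coordinates and a core argument does not deliver the stated bounds for general $\nu,\beta$.

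The paper avoids this entirely by a different choice of model admissible phase. Rather than the normal-coordinate form, it takes (for part~2)
\[
\varphi(\bx;\zeta_0)=\langle d\iota|_{\bx_0}\zeta_{\bp_0},\,\Pi_{\bx_0}[\iota(\bx_0)-\iota(\bx)]\rangle+\tfrac{i}{2}\tfrac{\langle d\iota|_{\bx_0}\zeta_{\bp_0}\rangle}{\langle p_0\rangle}\|\Pi_{\bx_0}[\iota(\bx_0)-\iota(\bx)]\|^2,
\]
with $\Pi_{\bx_0}$ the \emph{exact} orthogonal projection onto $T_{v_0}\iota(\mathcal{M})$. This phase is admissible (one checks $d\varphi|_{\bx=\bx_0}=-\zeta_{\bp_0}$ and positive-definiteness of $d^2\Im\varphi$) and is structurally identical to the LPCA phase, which uses $\hat\Pi=\breve O^{\sf T}\breve O$ in place of $\Pi_{\bx_0}$. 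Hence $|\phi-\varphi|\lesssim\|\hat\Pi-\Pi_{\bx_0}\|\cdot(\text{bounded})$ \emph{globally on $\mathcal{M}$}, with no Taylor expansion and no core truncation; one then simply takes $\|\hat\Pi-\Pi_{\bx_0}\|\lesssim h^{\nu/4+1+\beta}$ from \cite{singer2012vector}. For part~1 the analogous move (carried out in \cite{kumar_math}) is to build the admissible phase from $\iota(\bx_0)-\iota(\bx)$ directly, so that the only discrepancy with \cref{eq:coh_extrinsic_si} is in the momentum direction. The fix to your argument is therefore not in the probabilistic bookkeeping but in the choice of $\varphi$: match the extrinsic/projection structure of the data-driven phase exactly, and the curvature residuals you are fighting disappear.
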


\begin{proof}
The discussion in \cite[$\S 3.1$]{kumar_math} 
shows that $\dcstate[\tilde]{\zeta_{\ext}}$ is an $\mathcal{O}(h^{\nicefrac{\nu}{4} + \beta})$ perturbation of $\cstate[\tilde]{\zeta_0}$ whenever $v_n$ is chosen such that $||v_0-v_n||_{\mathbb{R}^n} \lesssim h^{\nicefrac{\nu}{4} + 1 + \beta} < \kappa$.
Then, $d_g(\iota^{-1}(v_0), \iota^{-1}(v_n)) \lesssim h^{\nicefrac{\nu}{4} + 1 + \beta}$ and by a Chernoff bound, such a point belongs to our sample set with probability at least $1 - e^{-\Omega(N h^{\nu (\nicefrac{\nu}{4} + 1 + \beta)})}$.
Further, by 
\cite[Lemma 19]{kumar_math}, 
$| ||\dcstate[\tilde]{\zeta_{\ext}}||_N^2 - ||\dcstate[\tilde]{\zeta_{\ext}}||_{L^2}^2 | \lesssim h^{\nicefrac{\nu}{2} + \beta}$ with probability at least $1 - e^{-\Omega(N h^{\nicefrac{3 \nu}{2} + 2\beta})}$.
Combining these events with \cref{lem:cs-approx} gives,
\begin{align*}
	| \dcstate{\zeta_{\ext}} - \cstate{\zeta_0} |
		&\leq \left| \dcstate{\zeta_{\ext}} - \frac{\dcstate[\tilde]{\zeta_{\ext}}}{||\dcstate[\tilde]{\zeta_{\ext}}||_{L^2}} \right|
		+ \left| \frac{\dcstate[\tilde]{\zeta_{\ext}}}{||\dcstate[\tilde]{\zeta_{\ext}}||_{L^2}} - \cstate{\zeta_0} \right|	
		\lesssim h^{\beta}
\end{align*}
and with a union bound we arrive at the probabilistic bound in the statement of the first part of the Lemma.

Now, for the second part, let $\iota : \mathcal{M} \hookrightarrow \mathbb{R}^n$ denote the isometric embedding such that for each $1 \leq \ell \leq N$, there is $\bx_{\ell} \in \mathcal{M}$ so that $v_{\ell} = \iota(\bx_{\ell}) \in V$.
Further, for each $\bx \in \mathcal{M}$, let $\Pi_{\bx} : \mathbb{R}^n \to T_{\iota(\bx)} \iota(\mathcal{M}) \subset \mathbb{R}^n$ denote the orthogonal projection onto the subspace of $\mathbb{R}^n$ tangent to the submanifold at $\iota(\mathcal{M})$ and set  $\varphi(\bx ; \zeta_{\bx}, \zeta_{\bp}) := \langle d\iota|_{\zeta_{\bx}} \zeta_{\bp}, \Pi_{\zeta_{\bx}} [\iota(\zeta_{\bx}) - \iota(\bx)] \rangle + \frac{i}{2} \frac{\langle d\iota|_{\zeta_{\bx}} \zeta_{\bp} \rangle}{\langle p_0 \rangle} ||\Pi_{\zeta_{\bx}} [\iota(\zeta_{\bx}) - \iota(\bx)]||_{\mathbb{R}^n}^2$ with $\zeta := (\zeta_{\bx}, \zeta_{\bp}) \in T^*\mathcal{M}$.
Then, $\varphi \in C^{\infty}(\mathcal{M} \times T^*\mathcal{M})$ and satisfies,
\[
	d\varphi|_{\bx = \zeta_{\bx}} = -\zeta_{\bp} , \quad\quad d^2 \Im(\varphi)|_{\bx = \zeta_{\bx}} = \frac{\langle d\iota|_{\zeta_{\bx}} \zeta_{\bp} \rangle}{\langle p_0 \rangle} d\iota|_{\zeta_{\bx}}^{\sf T} d\iota|_{\zeta_{\bx}} ,
\]
hence it is an admissible phase.
The corresponding coherent state $\ket{\psi^h_{\zeta}}$ with phase $\varphi$ will serve as the \emph{model} for the LPCA prepared state \cref{eq:coherent_state_data_si}.

The LPCA procedure \cite[$\S 2$]{singer2012vector} yields a linear map, $\breve{O} : \mathbb{R}^n \to \mathbb{R}^{\tilde{\nu}}$ for $1 \leq \tilde{\nu} \leq n$.
Its matrix representation is given by $\breve{O} := [u_1, \ldots, u_{\tilde{\nu}}]^{\sf T}$ such that for each $1 \leq j \leq \tilde{\nu}$, the column vector $u_j \in \mathbb{R}^n$ is the left singular vector corresponding to $j$-th largest singular value of a local neighbourhood matrix at $v_0$.
Suppose, without loss of generality, that the embedding $\iota(\mathcal{M})$ is appropriately translated and rotated such that $v_0 = 0 \in \mathbb{R}^n$ and the subspace $T_{v_0} \iota(\mathcal{M}) \subset \mathbb{R}^n$ is aligned so that the first $\nu$ vectors of the standard basis $\{ e_1, \ldots, e_n \}$ (meaning $e_j$ has $j$-th entry equal to $1$ and is otherwise zero) of $\mathbb{R}^n$ form an orthonormal basis for $T_{v_0} \iota(\mathcal{M})$ and there are $\zeta_{\bp_1}, \ldots, \zeta_{\bp_{\nu}} \in T^*_{\zeta_{\bx_0}}(\mathcal{M})$ such that for each $1 \leq j \leq \nu$, $d\iota|_{\zeta_{\bx_0}} \zeta_{\bp_j} = e_j$.
Now assume the event, call it $\mathcal{D}$, that $\tilde{\nu} = \nu$ and recall that the phase constructed in \cref{eq:coherent_state_data_si} is given by
\[
	\phi(\bx) := \langle p_0, \breve{O}[\iota(\bx_0) - \iota(\bx)] \rangle_{\mathbb{R}^{\nu}} + \frac{i}{2} ||\breve{O}[\iota(\bx_0) - \iota(\bx)]||_{\mathbb{R}^{\nu}}^2 .
\]
Since the singular vectors $u_1, \ldots, u_{\nu}$ are orthonormal, we have $\breve{O} \breve{O}^{\sf T} = I_{\nu}$, so
\[
	\phi(\bx) = \langle \tilde{p}_0, \hat{\Pi}[\iota(\bx_0) - \iota(\bx)] \rangle_{\mathbb{R}^n} + \frac{i}{2} ||\hat{\Pi}[\iota(\bx_0) - \iota(\bx)]||_{\mathbb{R}^n}^2
\]
with $\tilde{p}_0 := \breve{O}^{\sf T} p_0$ and $\hat{\Pi} := \breve{O}^{\sf T} \breve{O}$.

Further suppose we are in the event, call it $\mathcal{A}_{\varepsilon}$ for $\varepsilon > 0$, that there are $w_1, \ldots, w_{\nu} \in \mathbb{R}^n$, which form an orthonormal basis for $T_{\iota(v_0)} \iota(\mathcal{M})$ such that for each $1 \leq j \leq \nu$, $||u_j - w_j||_{\infty} \leq \varepsilon$.
Then, $||\hat{\Pi} - \Pi_{\zeta_{\bx_0}}|| \lesssim \varepsilon$ and setting $\zeta_{\bp_0} := d\iota|_{\zeta_{\bx_0}}^{\sf T} \Pi_{\zeta_{\bx_0}} \tilde{p}_0$, we have
\begin{align*}
	|&\phi(\bx) - \varphi(\bx ; \zeta_{\bx_0}, \zeta_{\bp_0})|	\\
		&\quad \leq |\langle \tilde{p}_0, (\hat{\Pi} - \Pi_{\zeta_{\bx_0}}) \cdot \iota(\bx) \rangle
			+ \langle (I_n - \Pi_{\zeta_{\bx_0}}) \tilde{p}_0, \Pi_{\zeta_{\bx_0}} \cdot \iota(\bx) \rangle|
			 + \frac{1}{2} \left| ||\Pi_{\zeta_{\bx_0}} \iota(\bx)||^2 - ||\hat{\Pi} \iota(\bx)||^2 \right|	\\
		&\quad \leq ||p_0|| \cdot ||(\hat{\Pi} - \Pi_{\zeta_{\bx_0}}) \cdot \iota(\bx)|| + ||(\hat{\Pi} - \Pi_{\zeta_{\bx_0}}) \tilde{p}_0|| \cdot ||\Pi_{\zeta_{\bx_0}} \cdot \iota(\bx_0)||	
		 + \frac{1}{2} \left| ||\Pi_{\zeta_{\bx_0}} \iota(\bx)|| + ||\hat{\Pi} \iota(\bx)|| \right| \cdot ||(\hat{\Pi} - \Pi_{\zeta_{\bx_0}}) \cdot \iota(\bx)||	\\
		&\quad \lesssim \varepsilon ,
\end{align*}
wherein we have used that $\hat{\Pi} \tilde{p}_0 = \tilde{p}_0$.

We now wish to apply \cref{lem:cs-approx}, so we require that $\varepsilon \lesssim h^{\nicefrac{\nu}{4} + 1 + \beta}$.
Further, if $\delta N^{-\frac{2}{(\nu + 2)(\nicefrac{\nu}{4} + 1 + \beta)}} \leq h$, then from the proof of \cite[Theorem B.1]{singer2012vector}, we find that the corresponding event $\mathcal{A}_{\varepsilon}$ happens with probability at least $1 - e^{-\Omega(\delta^2)}$.
Moreover, within the same event, we find that the singular values resulting from LPCA and corresponding to $\breve{O}$ have the following \emph{gap property}: if $\sigma_1 \geq \cdots \geq \sigma_n \geq 0$ are the singular values corresponding to $u_j, \ldots, u_n$, then for $1 \leq j \leq \nu$, $\sigma_j \sim N \varepsilon^{\nicefrac{\nu}{2} + 1}$, while for each $\nu + 1 \leq m \leq n$, $\sigma_m \lesssim N \varepsilon^{\nicefrac{\nu}{2} + 2}$.
Therefore, for all $(j,m) \in \{ 1, \ldots, \nu \} \times \{ \nu + 1, \ldots, n \}$, we have $\sigma_j/\sigma_m \gtrsim h^{-(2 + \nicefrac{\nu}{4})}$, so within the same probabilistic event we can determine the dimension $\nu$, \emph{viz}., the event $\mathcal{D}$ occurs simultaneously with $\mathcal{A}_{\varepsilon}$.
By the proof of \cref{lem:cs-approx}, we so far have that $||e^{\frac{i}{h} \phi} - e^{\frac{i}{h} \varphi}||_{L^{\infty}} \lesssim h^{\nicefrac{\nu}{4} + \beta}$ and $||e^{\frac{i}{h} \phi}/||e^{\frac{i}{h} \phi}||_{L^2} - \cstate{\zeta_0}||_{L^{\infty}} \lesssim h^{\beta}$.
An application of 
\cite[Lemma 19]{kumar_math}
gives that $| ||\dcstate[\tilde]{\zeta_{\LPCA}}||_N^2 - ||e^{\frac{i}{h} \phi}||_{L^2}^2 | \lesssim h^{\nicefrac{\nu}{2} + \beta}$ with probability at least $1 - e^{-\Omega(N h^{\nicefrac{3 \nu}{2} + 2\beta})}$.
Thus, proceeding as in the previous part we find that assuming also this event gives $||\dcstate{\zeta_{\LPCA}} - \cstate{\zeta_0}||_{L^{\infty}} \lesssim h^{\beta}$, whence upon taking a union bound over all considered events, we arrive at the probabilistic bound as stated in the second part of the present Lemma.
\end{proof}

Now that we may associate to each prepared state a proper coherent state, we wish to run the convergence results of 
\cite[$\S 6$]{kumar_math} 
on the corresponding coherent state and pull the results back to the originally prepared state, up to perturbative terms.
This entails knowing that the propagated \emph{density} $|\dcstate{t, \tilde{\zeta}_0}|^2$, for $\tilde{\zeta}_0 \in \{ \zeta_{\ext}, \zeta_{\LPCA} \}$, is also a perturbation of $|\hat{V}_t \cstate{\zeta_0}|^2$ with $\zeta_0$ related to $\tilde{\zeta}_0$ by the previous Lemma.
As an intermediary step, we will bound $|| |\breve{V}_t \cstate{\zeta_0}|^2 - |\dcstate{t,\tilde{\zeta}_0}|^2 ||_{\infty}$ with the help of a \emph{generic} $L^{\infty}$ bound on the propagation of $C^{\infty}$ functions as discussed in 
\cite[$S 6.2$]{kumar_math}. 
Such an $L^\infty$ bound makes sense because $\breve{V}_t$ is a finite-dimensional operator that extends to the continuum by Nyst\"om extension (discussed in detail in Sec. 6 of \cite{kumar_math}): 
\begin{align}
\breve{V}_t [u](\bx) = f(0)u(\bx) + \breve{T}_{\epsilon,\lambda} [Df(\breve{T}_{\epsilon,\lambda})[u]](\bx),	
\end{align}
where $f(z)=e^{it\sqrt{\frac{2c_0}{c_2}\frac{(1-z)}{\epsilon}}}$, $Df(z) = \nicefrac{(f(z) - f(0))}{z}$, and $\breve{T}_{\epsilon,\lambda}$ is the diffusion operator defined in the main text.
If $u \in C^{\infty}$, the generic $L^{\infty}$ bound mentioned above is
\begin{equation}	\label{eq:discrete-prop-sup-bound}
	|\breve{V}_t[u](\bx)|
		\quad \lesssim ||u||_{\infty} +  \frac{||u||_{N,2} t \epsilon^{-\frac{\nu + 1}{2}} ||k||_{\infty} ||p_{N,\epsilon,\lambda}||_{N,\infty}^{\nicefrac{1}{2}}}{(\inf p_{N,\epsilon}^{\lambda}) (\min_{j \in [N]} p_{N,\epsilon}^{\lambda}) \min_{j \in [N]} p_{N,\epsilon,\lambda}^{\nicefrac{3}{2}}} ,
\end{equation}
wherein $||\cdot||_{N,2}^2 := \frac{1}{N} \sbraket{\cdot}{\cdot}$, $||\cdot||_{N,\infty}$ denotes the maximum of a function over the sample set and $p_{N,\epsilon}, p_{N,\epsilon,\lambda} \in C^{\infty}(\mathcal{M})$ are given by
\begin{gather*}
	p_{N,\epsilon}(\bx) := \frac{1}{N} \sum_{j=1}^N k_{\epsilon}(\bx,\bx_j),
	\quad p_{N,\epsilon,\lambda}(\bx) := \frac{1}{N} \sum_{j=1}^N k_{N,\epsilon,\lambda}(\bx, \bx_j),	\\
	k_{N,\epsilon,\lambda}(\bx,\by) := \frac{k_{\epsilon}(\bx,\by)}{(p_{N,\epsilon}(x) p_{N,\epsilon}(y))^{\lambda}} ,
	\quad k_{\epsilon}(\bx,\by) := \epsilon^{-\nicefrac{\nu}{2}} k(||\iota(\bx) - \iota(\by)||^2/\epsilon);
\end{gather*}
these are related to the matrix definitions from Sec. 1.A of the main text via
\begin{gather*}
	[\breve{T}_{\epsilon}]_{i,j} = N k_{\epsilon}(\iota^{-1} v_i, \iota^{-1} v_j),	\\
	[\breve{\Sigma}_{\epsilon}]_{j,j} = N p_{N,\epsilon} \circ \iota^{-1}(v_j) .
\end{gather*}
Therefore, once we have a sampling based bound on the factors involving $p_{N,\epsilon}$ and $p_{N,\epsilon,\lambda}$ and consistency of the $||\cdot||_{N,2}$ norms for $\cstate{\zeta_0}$ and $\dcstate{\tilde{\zeta}_0}$, we can establish the deviation of their corresponding propagated densities.
This leads to the recovery of geodesic propagation on $\mathcal{M}$ up to errors decaying in $h$ and $N$ simultaneously, with high probability, which we now establish in,

\begin{proof}[Proof of Proposition 2]
We proceed with the cases \ref{app:thm3-ext-ext-mean} - \ref{app:thm3-lpca-ext-mean}.
Recall that $\expect{\breve{\bx}_t}$ and $\expect{\breve{\bx}_t}^{\LPCA}$ denote the mean approximated geodesic point as computed in extrinsic or LPCA coordinates, respectively and that this is independent of how the corresponding initial state is prepared.
We will denote by $\expect{\bx_t}$ either one of these points and specify a particular one with its specific notation when needed.
Regarding the initial state, let $\tilde{\zeta}_0 \in \{ \zeta_{\ext}, \zeta_{\LPCA} \}$ and let $\cstate{\zeta_0}$ be the coherent state corresponding to $\dcstate{\tilde{\zeta}_0}$ as constructed in \cref{lem:model-cs}.
We will work with the approximations with respect to $\cstate[\breve]{t,\zeta_0} := \breve{V}_t \cstate{\zeta_0}$, so we denote
\begin{gather*}
	\breve{q}_{\ell} := |[\cstate[\breve]{t, \zeta_0}]_{\ell}|^2 ,	\\
	\breve{v}_{\ell} := \sum_{\ell \in S_{\ell^*}} v_{\ell} \breve{q}_{\ell} ,
	\quad	\breve{\vartheta}_{\ell} :=  \sum_{\ell \in S_{\ell^*}} \vartheta_{\ell}^{v_{\ell^*}} \breve{q}_{\ell}
\end{gather*}
and we denote $\expect{\bx_t}_{\zeta_0} \in \{ \expect{\breve{\bx}_t}_{\zeta_0}, \expect{\bx_t}^{\LPCA}_{\zeta_0} \}$ with $\expect{\breve{\bx}_t}_{\zeta_0}$ the closest point, with respect to $||\cdot||_{\mathbb{R}^n}$, to $\breve{v}_{\ell}$ in $\iota(\mathcal{M})$ and likewise, by $\expect{\bx_t}^{\LPCA}_{\zeta_0}$ the closest point to $\breve{\vartheta}_{\ell} \in \mathbb{R}^{\tilde{\nu}}$ with repsect to the Euclidean norm in $\mathbb{R}^{\tilde{\nu}}$.
Further, we will use the notation from the proof of \cref{lem:model-cs} to let $\breve{O} : \mathbb{R}^{n} \to \mathbb{R}^{\tilde{\nu}}$ denote the LPCA map.
We will also need the continuum counterparts to the discretely constructed \emph{degree functions} $p_{N,\epsilon,\lambda'}$ for $\lambda' \in \{ 0, \lambda \}$, namely: $p_{\epsilon}(\bx) := \int_{\mathcal{M}} k_{\epsilon}(\bx, \by) \pd(\by) d\by$, $k_{\epsilon,\lambda}(\bx,\by) := k_{\epsilon}(\bx,\by)/(p_{\epsilon}(\bx) p_{\epsilon}(\by))^{\lambda}$ and $p_{\epsilon,\lambda}(\bx) := \int_{\mathcal{M}} k_{\epsilon,\lambda}(\bx,\by) \pd(\by) d\by$.
 
Now set $\beta, \delta_0 \geq 0$, $\varepsilon > 0$ and assume the occurrence of the following events:
\begin{alignat*}{4}
&	\mathcal{A}_0(h^{\beta} ; \tilde{\zeta}_0) && : &\quad& ||\dcstate{\tilde{\zeta}_0} - \cstate{\zeta_0}||_{L^{\infty}} \lesssim h^{\beta} ,	\\
&	\mathcal{E}_{\lambda'}(\delta_0) && : &\quad& ||p_{N,\epsilon,\lambda'} - p_{\epsilon,\lambda'}||_{\infty} \lesssim \delta_0	\\
&	\mathcal{G}_{\ext} && : &\quad& d_g(\iota^{-1}(\expect{\breve{\bx}_t}_{\zeta_0}), \bx_t) \leq h && \text{if } \expect{\bx_t} = \expect{\breve{\bx}_t}	\\
&	\mathcal{G}_{\LPCA}(\varepsilon) && : &\quad&
		\begin{cases}
			d_g(\iota^{-1} \breve{O}^{\sf T} \expect{\bx_t}_{\zeta_0}^{\LPCA}, \bx_t) \leq h ,	\\
			\tilde{\nu} = \nu \wedge ||\hat{\Pi} - \Pi|| \leq \varepsilon
		\end{cases}
	&& \text{if } \expect{\bx_t} = \expect{\bx_t}^{\LPCA} .
\end{alignat*}

We now see that these events imply that the computed mean is intrinsically $\mathcal{O}(h)$ close to $\bx_t$.
Let $\underline{C}_{\lambda'} := \inf p_{\epsilon,\lambda'} > 0$ and note that if $\delta_0 = \min\{ \underline{C}_0^{2\lambda}, \underline{C}_{\lambda}^{\nicefrac{3}{2}} \}/2$, then by \cref{eq:discrete-prop-sup-bound} and the events $\mathcal{A}_0(h^{\beta} ; \tilde{\zeta}_0)$ and $\mathcal{E}_{\lambda'}(\delta_0)$,
\begin{align*}
	||  |\cstate[\breve]{t, \zeta_0}|^2 - |\dcstate{t, \tilde{\zeta}_0}|^2 ||_{\infty}	
		&\leq || |\cstate[\breve]{t, \zeta_0}| - |\dcstate{t, \tilde{\zeta}_0}| ||_{\infty} (||\cstate[\breve]{t, \zeta_0}||_{\infty} + ||\dcstate{t, \tilde{\zeta}_0}||_{\infty})	\\
		& \lesssim h^{\beta}  \epsilon^{-\frac{\nu + 1}{2}} (h^{-\nicefrac{\nu}{4}} + \epsilon^{-\frac{\nu + 1}{2}}) .
\end{align*}
Therefore, setting $\beta = (2 + \alpha)(\nu + 1) + 1$ gives $|| |\cstate[\breve]{t, \zeta_0}|^2 - |\dcstate{t, \tilde{\zeta}_0}|^2 ||_{\infty} \lesssim h$.
With this, we have that for any $u \in C^{\infty}$,
\[
	|\sum_{\ell \in S_{\ell^*}} u_{\ell} (q_{\ell} - \breve{q}_{\ell}) | \lesssim h
\]
and hence upon denoting
\begin{gather*}
	\bar{v}_{\ell} := \sum_{\ell \in S_{\ell^*}} v_{\ell} q_{\ell} ,
	\quad	\bar{\vartheta}_{\ell} :=  \sum_{\ell \in S_{\ell^*}} \vartheta_{\ell}^{v_{\ell^*}} q_{\ell} ,
\end{gather*}
we have in particular,
\[
	|| \bar{v}_{\ell} - \breve{v}_{\ell} ||_{\mathbb{R}^n} \lesssim h ,
	\quad	|| \bar{\vartheta}_{\ell} - \breve{\vartheta}_{\ell} ||_{\mathbb{R}^{\nu}} \lesssim h,
\]
wherein the second inequality follows from event $\mathcal{G}_{\LPCA}$.
Hence, $||\expect{\breve{\bx}_t} - \expect{\breve{\bx}_t}_{\zeta_0}||_{\mathbb{R}^n} \lesssim h$ so under the event $\mathcal{G}_{\ext}$ with $h < \kappa$, we have $d_g(\iota^{-1} \expect{\breve{\bx}_t}, \bx_t) \lesssim h$.
Likewise, $||\expect{\bx_t}^{\LPCA} - \expect{\bx_t}_{\zeta_0}^{\LPCA}||_{\mathbb{R}^n} \lesssim h$ and there is a neighbourhood $V \subset \mathbb{R}^{\nu}$ of the origin, a neighbourhood $\mathcal{O} \subset \mathcal{M}$ of $\iota^{-1} \breve{O}^{\sf T} \expect{\bx_t}^{\LPCA}$ and a constant $\varepsilon_0 > 0$ such that for all $\varepsilon \in (0, \varepsilon_0]$, the event $\mathcal{G}_{\LPCA}(\varepsilon)$ gives
that $\iota^{-1} \breve{O}^{\sf T} : V \to \mathcal{M}$ is a diffeomorphism.
Therefore, we have also in the LPCA case that $d_g(\iota^{-1} \breve{O} \expect{\bx_t}^{\LPCA}, \bx_t) \lesssim h$.

The event $\mathcal{A}(h^{\beta} ; \tilde{\zeta}_0)$ happens with probability bounded below by that given in \cref{lem:model-cs}.
The event $\mathcal{E}_{\lambda'}$ happens with probability at least $1 - e^{-\Omega(N \epsilon^{\nicefrac{\nu}{2}})}$, 
by the arguments in the proof of
\cite[Lemma 17]{kumar_math}. 
Since $\varepsilon \in (0, \varepsilon_0]$ can be taken to be constant, following the proof of \cref{lem:model-cs}, the second event of $\mathcal{G}_{\LPCA}(\varepsilon)$ happens with probability at least $1 - e^{-\Omega(N^{\nicefrac{4}{(\nu + 2)}})}$.
The event $\mathcal{G}_{\ext}$ and the first event of $\mathcal{G}_{\LPCA}$ happen with probabilities bounded below by
\cite[Proposition 4]{kumar_math}: 
namely, there is a constant $h_0 > 0$ such that if $N^{-\nicefrac{1}{\beta_0}} \lesssim h \leq h_0$ with $\beta_0 := (2 + \alpha)(\nicefrac{5 \nu}{2} + 4) + 2(\nu + 2)$, then
\begin{equation} \label{eq:mean-plbd}
	\Pr[d_g(u(\expect{\bx_t}_{\zeta_0}), \bx_t) \leq h] > 1 - e^{-\Omega(N h^{\beta_0})}
\end{equation}
with $u := \iota^{-1}$ for $\expect{\bx_t}_{\zeta_0} = \expect{\breve{\bx}_t}_{\zeta_0}$ and $u := \iota^{-1} \breve{O}^{\sf T}$ for $\expect{\bx_t}_{\zeta_0} = \expect{\bx_t}_{\zeta_0}^{\LPCA}$.
Then, a union bound for each of the cases among the combinations of values of $\tilde{\zeta}_0$ and $\expect{\bx_t}$ gives the consistencies in the statement of the Proposition, \emph{with high probability}.
\end{proof}

\section{Tangent space estimation via local PCA}
\label{sec:tspace}
Local principal component analysis (LPCA) is utilized to formulate the initial coherent state and to calculate the expected position of the propagated state using only local data. Both of these procedures rely on the fact that LPCA provides an estimate of the tangent space of a manifold from sampled data, and LPCA coordinates match normal coordinates on the manifold up to third order in the size of the local neighborhood \cite{10.1093/imaiai/iat003}. 

In this section we illustrate the quality of LPCA constructed tangent spaces for the sphere and torus model manifolds for which intrinsic global coordinates are known and analytical expressions for all geometric quantities can be derived. We parameterize both two-dimensional manifolds by angular coordinates $(\theta, \phi)$ (explicit forms of the coordinatizations given below).

We quantify the accuracy of LPCA tangent space estimates as a function of the LPCA neighborhood size, $\delta_{\rm PCA}$ using two error metrics.
First, we compare the tangent space estimated by LPCA to the true tangent space at a point on the manifold using the subspace angle error metric:
\begin{align}
	\Delta_{TS} = \arccos(\vert V_1^{\mathsf{T}} V_2\vert),
\end{align}
where $V_1 (V_2)$ is a matrix whose columns are vectors that span the LPCA estimated tangent space (true tangent space). The second error metric we employ quantifies the error in the estimated norm of vectors in the tangent space at a point ($\theta_0, \phi_0)$. We compute $\vert \bp \vert = \vert \vartheta_0 - \vartheta_j\vert$, where $\bp$ is a vector between the point of interest $(\theta_0, \phi_0)$ with LPCA coordinates $\vartheta_0$, and some nearby point $\vartheta_j$. This quantity is compared against the tangent vector norm $\hat{n} = \sqrt{[d\theta, d\phi] g_{\theta, \phi}(\theta_0, \phi_0)[d\theta, d\phi]^{\mathsf{T}}}$, with $d\theta=\theta_j-\theta_0$ and $d\phi =\phi_j-\phi_0$, as:
\begin{align}
	\Delta_{N} = \vert|\vert \bp \vert - \hat{n}\vert|
\end{align}
This is an important error to quantify because the momenta for initial coherent states are normalized in LPCA coordinates, and in order to ensure unit speed propagation we require $\vert \bp \vert=1$.

\subsection{Sphere}
Consider a sphere unit radius parameterized by angles $0\leq\theta < \pi$, $0\leq \phi < 2\pi$  with radii $R, r$ $(r<R)$ and parameterized by angles $\theta$, $\phi$, with Euclidean coordinates in a 3D embedding given by:
\begin{align}
	(x,y,z) = (\sin(\theta)\cos(\phi), \sin(\theta)\sin(\phi), \cos(\theta))
\end{align}
The tangent space for this manifold at a point $\theta_0, \phi_0$ is a plane in the 3D embedding space spanned by the two vectors
\begin{align}
	e_1 &= [ -r\sin(\theta_0)\cos(\phi_0), -r\sin(\theta_0)\sin(\phi_0), r\cos(\theta_0) ]\nn \\
	e_2 &= [ -(R + r\cos(\theta_0))\sin(\phi_0), R + r\cos(\theta_0)\cos(\phi_0), 0]
\end{align}

We uniformly sample $N=8000$ points from a 3D embedding of the sphere and evaluate the error metrics $\Delta_{TS}$ and $\Delta_N$ at all points, as a function of $\delta_{\rm PCA}$, the Euclidean distance cutoff that dictates the LPCA neighborhood size. Fig. \ref{fig:lpca}(a) shows two examples of the estimated versus true tangent space at a point on sphere for $\delta_{\rm PCA}=0.3, 1.7$. Then in Fig. \ref{fig:lpca}(b) we show the average value of these error metrics (averaged over all 8000 sampled points) and one standard deviations error bars.

\begin{figure*}[t]
\centering
	\includegraphics[width=1\linewidth]{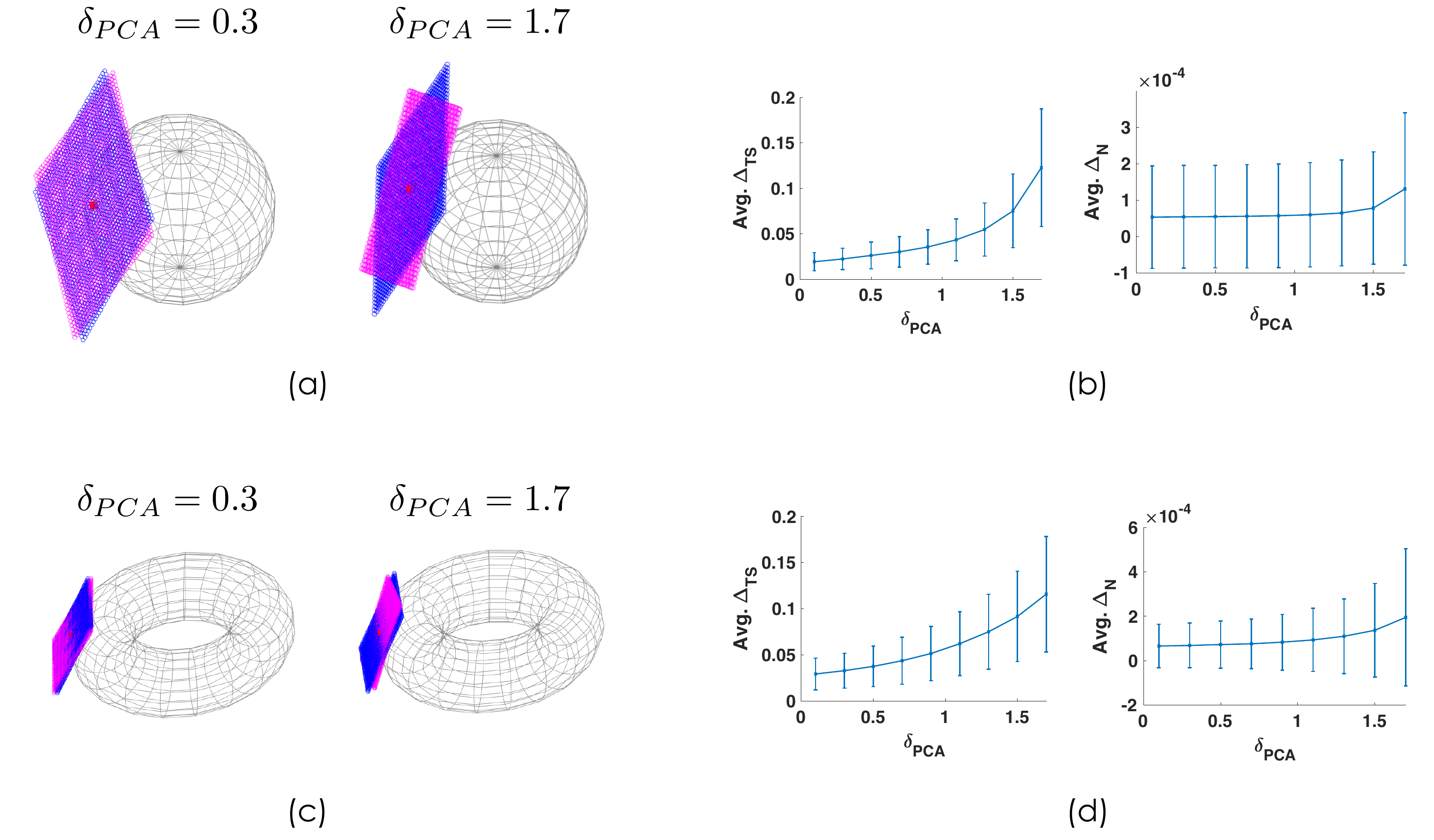}
	\caption{Error metrics for LPCA-based tangent space estimation for two model manifolds.\label{fig:lpca}}
\end{figure*}

\subsection{2-torus}
Consider a 2-torus with radii $R, r$ $(r<R)$ and parameterized by angles $0\leq \theta < 2\pi$, $0 \leq \phi < 2\pi$, with Euclidean coordinates in a 3D embedding given by:
\begin{align}
		(x,y,z) = ((R+r\cos(\theta))\cos(\phi), (R+r\cos(\theta))\sin(\phi), r\sin(\theta))
		\label{eq:torus_p}
\end{align}
The tangent space for this manifold at a point $\theta_0, \phi_0$ is a plane in the 3D embedding space spanned by the two vectors
\begin{align}
	e_1 &= [ -r\sin(\theta_0)\cos(\phi_0), -r\sin(\theta_0)\sin(\phi_0), r\cos(\theta_0) ]\nn \\
	e_2 &= [ -(R + r\cos(\theta_0))\sin(\phi_0), R + r\cos(\theta_0)\cos(\phi_0), 0]
\end{align}

We uniformly sample $N=12000$ points from a 3D embedding of a 2-torus with $R=2, r=0.8$ and evaluate the error metrics $\Delta_{TS}$ and $\Delta_N$ at all points, as a function of $\delta_{\rm PCA}$, the Euclidean distance cutoff that dictates the LPCA neighborhood size. Fig. \ref{fig:lpca}(c) shows two examples of the estimated versus true tangent space at a point on the 2-torus for $\delta_{\rm PCA}=0.3, 1.7$. Then in Fig. \ref{fig:lpca}(d) we show the average value of these error metrics (averaged over all 12000 sampled points) and one standard deviations error bars.

As is clear the statistics of these error metrics for the sphere and torus examples, the LPCA estimate of the tangent space is best for small $\delta_{\rm PCA}$. This creates a trade-off for our application: to get the best estimate of the tangent space a small $\delta_{\rm PCA}$ is preferred, but in order to obtain a good construction of an initial coherent state, we require a local neighborhood ($\mathcal{S}_0$ in the main article) that is not too small. To resolve this trade-off we fix $\delta_{\rm PCA}$ to be large enough to construct good fidelity coherent states, but construct those states using LPCA coordinates extracted from an LPCA computed in a neighborhood of size $\gamma\delta_{\rm PCA}$, for a scale factor $\gamma \leq 1$. In the illustrations shown in Sec. 3 of the main text, we used $\gamma=0.1$ for the sphere, and $\gamma=0.05$ for the torus.

Note that in general, a different choice of $\delta_{\rm PCA}$ can be made for the initial state construction and the expectation calculation (if both are done using LPCA). In all our illustrations in this work however, we choose the same value for both.

\section{Approximation of spectrum for sphere}
In the main article we showed how the spectrum of the data-driven approximation of the LB operator constructed from $N=12000$ samples from the 2-torus only matches the spectrum for the true LB operator on this manifold up to $\sim 100$ eigenvalues. Similarly, \Cref{fig:sphere_spec} shows the spectra for the LB operator on the sphere (blue circles) and the eigenvalues of the data-driven approximation constructed from $N=8000$ points (uniformly) sampled from the sphere (red crosses). We see again that these quantities begin to disagree at higher eigenvalues -- of particular note is the inability of the data-driven construction to match the degeneracies of the high eigenvalues.

\begin{figure*}[t]
\centering
	\includegraphics[width=0.5\linewidth]{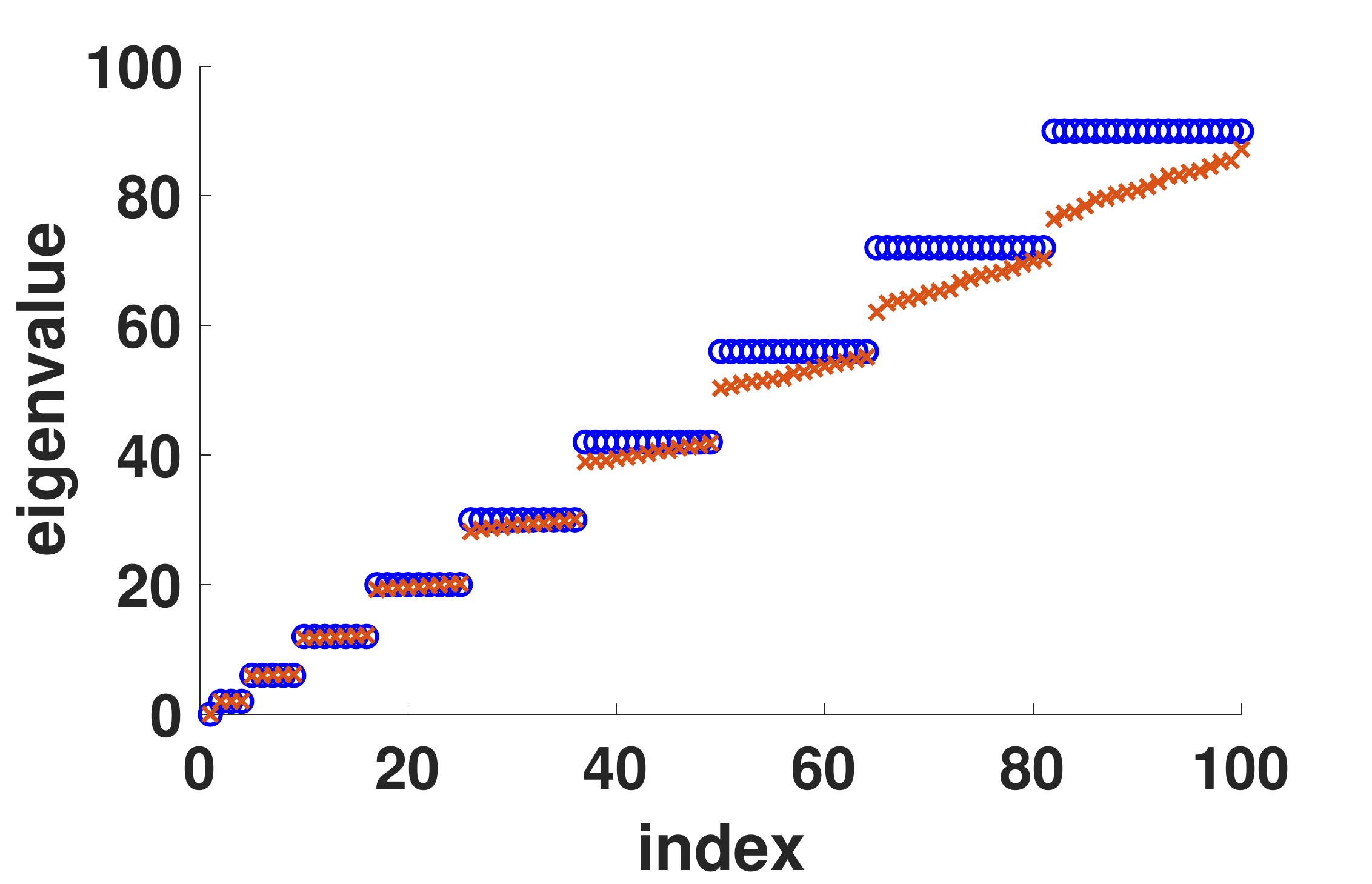}
	\caption{Eigenvalues of the LB operator on the sphere (blue circles) compared to the eigenvalues of the data-driven approximation of the LB operator constructed from $N=8000$ samples from the sphere with $\epsilon=e^{-4.8}$. \label{fig:sphere_spec}}
\end{figure*}

\section{Spectrally truncated propagator for the 2-torus}
In the main article we present the quantum evolution of coherent state by a spectrally truncated propagator for the 2-torus, as a way of illustrating the performance of our approach to extracting geodesics when the amount of data allows very accurate approximation of the Laplace-Beltrami (LB) operator on the data manifold. Here we present details of how the spectrally truncated propagator is constructed and additional data on propagation for long times with this propagator.

We construct the spectrally truncated propagator by numerically solving for the spectrum and eigenvectors of the LB operator on the 2-torus. Consider the explicit parameterization of a 2-torus with radii $r,R, r<R$ embedded in $\mathbb{R}^3$ given in \cref{eq:torus_p}. In these coordinates, the LB operators can be be explicitly written as:
\begin{align}
	\Delta = -\frac{1}{r^2\rho(\theta)}\frac{\partial}{\partial\theta}\rho(\theta)\frac{\partial}{\partial \theta} - \frac{1}{\rho(\theta)^2}\frac{\partial^2}{\partial\phi^2}
\end{align}
We wish to solve for the eigenfunctions and eigenvalues of this operator, \ie solve
\begin{align}
	-\frac{1}{r^2\rho(\theta)}\frac{\partial}{\partial\theta}\rho(\theta)\frac{\partial}{\partial \theta} \varphi_n(\theta, \phi) - \frac{1}{\rho(\theta)^2}\frac{\partial^2}{\partial\phi^2} \varphi_n(\theta,\phi) = \lambda_n \varphi_n(\theta,\phi),
	\label{eq:eval_eq}
\end{align}
for $\lambda_n$ and $\varphi_n(\theta,\phi)$. Taking into account the symmetries of the torus, we can separate variables and write the eigenfunctions as $\varphi_n(\theta,\phi) = \psi_n(\theta)e^{ik_n\phi}$, for $k_n \in \mathbb{N}_0$, and consequently \cref{eq:eval_eq} can be simplified to an equation for $\psi_n$ as
\begin{align}
	& -\frac{1}{r^2\rho(\theta)}\frac{d}{d\theta}\rho(\theta)\frac{d}{d\theta}\psi_n(\theta) + \frac{k_n^2}{\rho(\theta)^2}\psi_n(\theta) = \lambda_n \psi_n(\theta),\nn \\
	\Rightarrow &  \left[ \frac{-\rho'(\theta)}{r^2\rho(\theta)} \frac{d}{d\theta} - \frac{1}{r^2}\frac{d^2}{d\theta^2} + \frac{k_n^2}{\rho(\theta)^2} \right]\psi_n(\theta) = \lambda_n \psi_n(\theta)
	\label{eq:eval_psi_only}
\end{align}
with boundary conditions $\psi_n^{(j)}(0)=\psi_n^{(j)}(2\pi)$ on $\psi_n$ and all of its derivatives. 

We can approximate solutions to \cref{eq:eval_psi_only} by iterating over integers $k_n$, and for each value of $k_n$, discretizing the $\theta$ domain into $N_\theta$ points, and approximating the quantity in the square brackets by a matrix, $M(k_n)$. Then numerically solving a matrix eigenproblem of the form $M(k_n)\Psi = \lambda_n\Psi$ yields discrete approximations of the LB eigenfunctions and eigenvalues, with error scaling as $\mathcal{O}(\nicefrac{1}{N_\theta^2})$. We choose a fine enough discretization of the $\theta$ domain and take enough iterations of $k_n$ so that the numerically approximated spectral data is converged. 
The discretization used for the $r=0.8$ torus studied in the main text was 1023 (uniformly spaced) points in the domain $[0,2\pi)$.
We apply this procedure to approximate the first 400 eigenvalues, $\lambda_n$, and associated partial eigenvectors $\psi_n(\theta)$, of the 2-torus LB operator. 
This data can be used to construct full eigenvectors, $\varphi_n(\theta,\phi)$, by discretizing the $\phi$ coordinate and evaluating the phase $e^{ik_n\phi}$ at each point, for each $n$. Note that for each $k_n$ that generates a solution to the eigensystem in \cref{eq:eval_psi_only}, so does $-k_n$. Therefore, for each $\lambda_n, \psi_n(\theta)$ pair, we construct two $\varphi_n(\theta, \phi)$, with $\pm k_n$ phase factors.
Using this eigendata we form approximate unitary propagator as $\tilde{V}_{\Delta t} = X \exp(i\sqrt{\Lambda}\Delta t) X\dg$, where $\Lambda$ is a matrix of the first 800 eigenvalues (with degeneracies) and the columns of $X$ are the corresponding eigenvectors, $\varphi(\theta, \phi)$. In order to have a numerically tractable propagator $\tilde{V}_{\Delta t}$, we downsample the 1023 points in $\theta$ to $171$ and discretize $\phi$ to $300$ (uniformly spaced) points, yielding an approximation of the propagator evaluated on $N'=51300$ total points on the torus.

\subsection{Propagation details}
Propagation with the spectrally truncated propagator is performed in almost the same manner as described in the main text. Initial states are formed using LPCA coordinates with $\delta_{\rm PCA}=2.0$, and to allow comparison to the data-driven propagation, we fix the coherent state uncertainty parameter at the value dictated by the $\epsilon,\alpha$ choice made in the main text for the data-driven propagator with $N=12000$ points; \ie $h=e^{-1}$. An important difference to the procedure described in the main text is that for the spectrally truncated propagation, the initial momentum $p_0$ is chosen to be a uniformly random 2-vector in LPCA coordinates, instead of being a vector towards the nearest neighbor to any point. By the proof of Proposition 2 in \cref{sec:proofs} this is valid because the LPCA coordinates approximate the tangent space when there is sufficient density of local sampling. 

\subsection{Long-time propagation}
To support the 2-torus numerics in the main article, in Fig. \ref{fig:torus_exactprop} we show an example of long-time propagation with this spectrally truncated propagator. The propagation is remarkably accurate -- the coherent state remains localized and the propagation is along a unit-speed geodesic curve as the distance propagated plotted in \cref{fig:torus_exactprop}(b) shows.

\begin{figure*}[t]
\centering
	\includegraphics[width=1\linewidth]{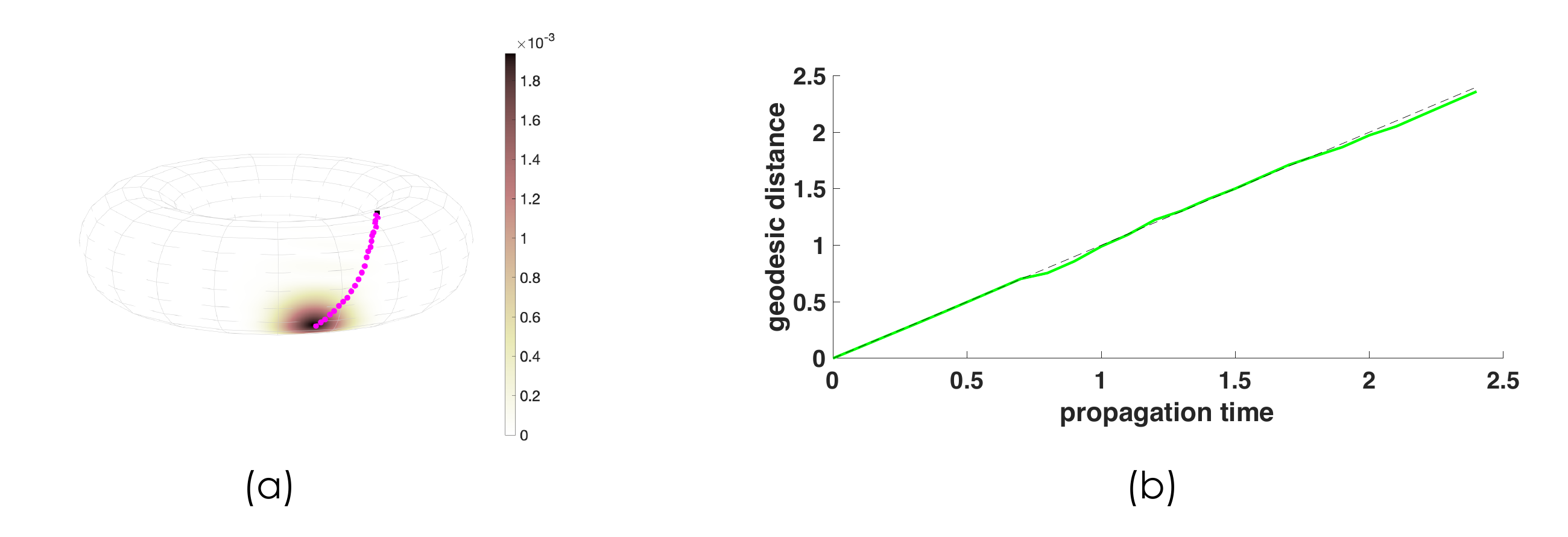}
	\caption{An example of long time propagation with a spectrally truncated propagator on the 2-torus. {\bf(a)} The path of a coherent state propagated for $t=2.4$ that traverses a significant portion of the torus. Note that the coherent state remains localized and of roughly Gaussian profile even after such a long propagation. {\bf(b)} The geodesic distance of $\expect{\breve{\bx}_t}^{\rm LPCA}$, the mean position of the propagated state at time $t$ from the initial point. We see that the mean remains a geodesic distance $t$ away even for long times. \label{fig:torus_exactprop}}
\end{figure*}

\raggedbottom
\pagebreak
\section{Pseudocode}
\label{app:code}
In this section we provide pseudocode that implements the core component of the quantum manifold learning program described in the main article, namely, the computation of a geodesic distance matrix $D$.

The inputs to the main function $\textsc{ComputeGeodesicMatrix}$ are the parameters:
\begin{enumerate}
\setlength\itemsep{0em}
\item $V \in \mathbb{R}^{N \times n}$ is the dataset consisting of $N$ vectors, each of dimension $n$.
\item $\epsilon>0$ is the scale parameters used in constructing the graph Laplacian.
\item $\alpha>0$ is the scaling parameters that determines the relationship between $h$ and $\epsilon$.
\item $0\leq \lambda\leq 1$ determines the graph Laplacian normalization. For all the illustrations in this paper, $\lambda=1$.
\item $\Delta t > 0$ is the time step used in constructing the unitary quantum propagator.
\item $n_{\rm prop} \in \mathbb{N}$ is the number of time steps to propagate for.  Geodesic distances are determined in steps of $\Delta t$ up to a distance $n_{\rm prop}\Delta t$.
\item $n_{\rm coll} \in \mathbb{N}$ is the number of distinct momenta to propagate from each data point on the manifold.
\item \texttt{PCA} is a boolean that flags whether LPCA is used in the algorithm, or states and measurements are constructed through extrinsic coordinates in $V$.
\item $\delta_{\rm PCA}>0$ is the Euclidean distance bound that determines the points used to construct the LPCA space around each data point (only used if \texttt{PCA}=1).
\item $0<\gamma<1$ is the scale factor that determines the LPCA projection (see discussion at the end of \cref{sec:tspace}).
\end{enumerate}
\raggedbottom

\begin{algorithm}[H]
%\footnotesize
\caption{Algorithm to compute geodesic distance matrix via quantum dynamics. Input parameters are explained in the text.}\label{algorithm}
\begin{algorithmic}[1]
	\State {\bf Inputs:} $V, \epsilon, \alpha, \lambda, \Delta t, n_{\rm prop}, n_{\rm coll}$, \texttt{PCA}, $\delta_{\rm PCA}, \gamma$
	 
    \Procedure{ComputeGeodesicDistanceMatrix}{}
      \State  $\breve{V}_{\Delta t} \leftarrow$ \textsc{ComputePropagator}($V, \epsilon, \lambda, \Delta t$) \Comment{Compute unitary quantum propagator from data}
      \State Set $h = \epsilon^{\frac{1}{(2+\alpha)}}$
      \State $D$ $\leftarrow$ \texttt{ZeroMatrix}($N,N$) \Comment{Initialize $N\times N$ geodesic distance matrix with zeros}
      \If{\texttt{PCA}}
      	\For{$1 \leq j \leq N$} \Comment{Loop over all points in data set}
      		\State $\vartheta_{\ell}^j \leftarrow$ Local PCA coordinates for all points $\ell \in \mathcal{S}_0$, a radius $\delta_{\rm PCA}$ ball around $j$ \Comment{Compute LPCA coordinates}
      		\For{$1 \leq m \leq n_{\rm coll}$} \Comment{Formulate $n_{\rm coll}$ initial states centered at point $j$, each with different momentum}
      			\State $p_0 = \vartheta_k^j - \vartheta_j^j$ for point $k$ that is $m$th closest to point $j$ in LPCA coordinates
      			\State $L^2$ normalize $p_0$
      			\State Form the vector $\vert \psi_j )$ with elements $e^{\frac{i}{h}((\vartheta_\ell^j - \vartheta_j^j)^{\sf T}p_0}e^{-\frac{\vert\vert\vartheta_\ell^j - \vartheta_j^j\vert\vert^2}{2h}}$ if $\ell \in \mathcal{S}_0$ and zero otherwise.
      			\State $L^2$ normalize $\vert \psi_j )$
      			\For{$1 \leq n \leq n_{\rm prop}$} \Comment{Propagate each initial state}
      				\State Compute $\vert \psi_j^n ) = \breve{V}_{\Delta t}^n \vert \psi_j )$
      				\State $L^2$ normalize $\vert \psi_j^n )$ 
      				\State Compute $q^n = |\vert \psi_j^n )|^2$
      				\State Let $\ell^* = {\rm argmax}~ q^n$
      				\State $\vartheta_{\ell}^{\ell^*} \leftarrow$ Local PCA coordinates for all points $\ell \in \mathcal{S}_{\ell^*}$, a radius $\delta_{\rm PCA}$ ball around $\ell^*$
      				\State Compute $\bar{\vartheta}$ \Comment{Compute mean position dictated by propagated state in LPCA coordinates}
      				\State Compute point $\ell^{d}$ such that $\vert\vert \bar{\vartheta} - \vartheta^{\ell^*}_{\ell^d}\vert \vert$ is minimized
      				\If{$[D]_{j,\ell^d}==0$ or $[D]_{j,\ell^d}> n\Delta t$}
      					 \State Set $[D]_{j,\ell^d}=  [D]_{\ell^d,j} = n\Delta t$ \Comment{Set geodesic distance from $j$ to $\ell^d$ to be $n\Delta t$ unless it is already closer}
      				\EndIf
      			\EndFor	
      		\EndFor
      	\EndFor
      \Else
      	\For{$1 \leq j \leq N$} \Comment{Loop over all points in data set}
      		\For{$1 \leq m \leq n_{\rm coll}$} \Comment{Formulate $n_{\rm coll}$ initial states centered at point $j$, each with different momentum}
      			\State $p_0 = v_k - v_j$ for point $k$ that is $m$th closest to point $j$ in original $n$-dimensional coordinates
      			\State $L^2$ normalize $p_0$
      			\State Form the vector $\vert \psi_j )$ with elements $e^{\frac{i}{h}((v_\ell - v_j)^{\sf T}p_0}e^{-\frac{\vert\vert v_\ell - v_j\vert\vert^2}{2h}}$, $ \forall 1 \leq \ell N$
      			\State $L^2$ normalize $\vert \psi_j )$
      			\For{$1 \leq n \leq n_{\rm prop}$} \Comment{Propagate each initial state}
      				\State Compute $\vert \psi_j^n ) = \breve{V}_{\Delta t}^n \vert \psi_j )$
      				\State $L^2$ normalize $\vert \psi_j^n )$ 
      				\State Compute $q^n = |\vert \psi_j^n )|^2$
      				\State Compute $\bar{v}$ \Comment{Compute mean position dictated by propagated state}
      				\State Compute point $\ell^{d}$ such that $\vert\vert \bar{v} - v_{\ell^d}\vert \vert$ is minimized
      				\If{$[G]_{j,\ell^d}==0$ or $[G]_{j,\ell^d}> n\Delta t$}
      					 \State Set $[G]_{j,\ell^d}=  [G]_{\ell^d,j} = n\Delta t$ \Comment{Set geodesic distance from $j$ to $\ell^d$ to be $n\Delta t$ unless it's already closer}
      				\EndIf
      			\EndFor	
      		\EndFor
      	\EndFor
      
      \EndIf
	  \State Set all diagonal elements of $G$ to zero
      \State \textbf{return} $G$
    \EndProcedure
    
    \Procedure{ComputePropagator}{$V, \epsilon, \lambda, \Delta t$}
    	\State Compute $\left[\breve{T}_{\epsilon}\right]_{i,j} = e^{-\nicefrac{||v_i - v_j||^2}{2\epsilon}}$  \Comment{Compute Gaussian kernel matrix}
    	\State Compute $\breve{L}_{\epsilon,\lambda} = \frac{4(\breve{I}_N-\breve{D}_{\epsilon,\lambda}^{-1} \breve{\Sigma}_{\epsilon}^{-\lambda} \breve{T}_{\epsilon} \breve{\Sigma}_{\epsilon}^{-\lambda})}{\epsilon}$ \Comment{Compute normalized graph Laplacian}
    	\State \textbf{return} $\breve{V}_{\Delta t} = \exp(-i \Delta t \sqrt{\breve{L}_{\epsilon,\lambda}})$ \Comment{Compute quantum propagator}
	\EndProcedure
    \end{algorithmic}
\end{algorithm}

\section{Gaussian mixture modeling clustering for COVID-19 mobility data}
In the main article we presented results when the geodesic distance based embeddings were clustered using the k-means algorithm. To show that our conclusions are insensitive to this choice of clustering algorithm we show here clustering of the same data using the Gaussian mixture modeling (GMM), which was the clustering algorithm used in previous work with the same dataset \cite{Levin_Chao_Wenger_Proctor_2020}.

We begin with the same 3D force-based graph embeddings of the 117 dimensional data points produced in the main article, using geodesic distances computed using our approach with location of the propagated state estimated via the the expected position or the maximum of the probability distribution dictated by the propagated state. Then we perform GMM clustering into five clusters. The results for both the mean and max cases are shown in \cref{fig:gmm}. Although the clusters are slightly different from those found by k-means (both k-means and GMM are probabilistic and will produce slightly different clusters for each execution) the key conclusions hold: (i) in the case where the location is estimated via the expected position (mean), the average mobility time traces are clearly separated and thus each cluster has a distinct average mobility pattern, and (ii) in the case where the location is estimated via the maximum (max) the outlier CBGs, that exhibit an abrupt change in mobility around May 1, 2021, are grouped into a separate cluster.

\begin{figure}[t]
\centering
	\begin{subfigure}[b]{0.45\textwidth}
	\centering	
	\includegraphics[width=\textwidth]{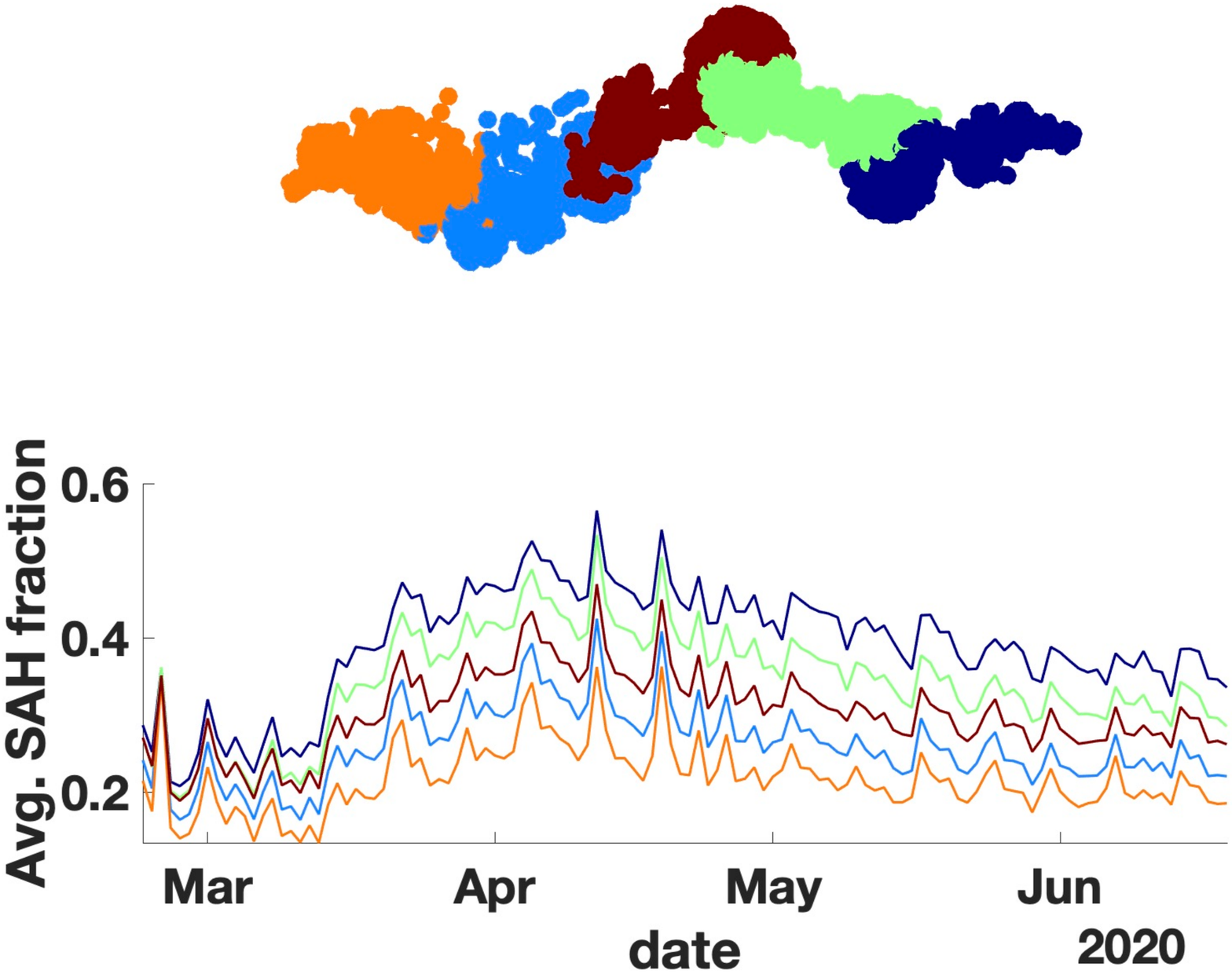}
	\caption{Mean}
	\end{subfigure}
	\hfill
	\begin{subfigure}[b]{0.45\textwidth}
	\centering	
	\includegraphics[width=\textwidth]{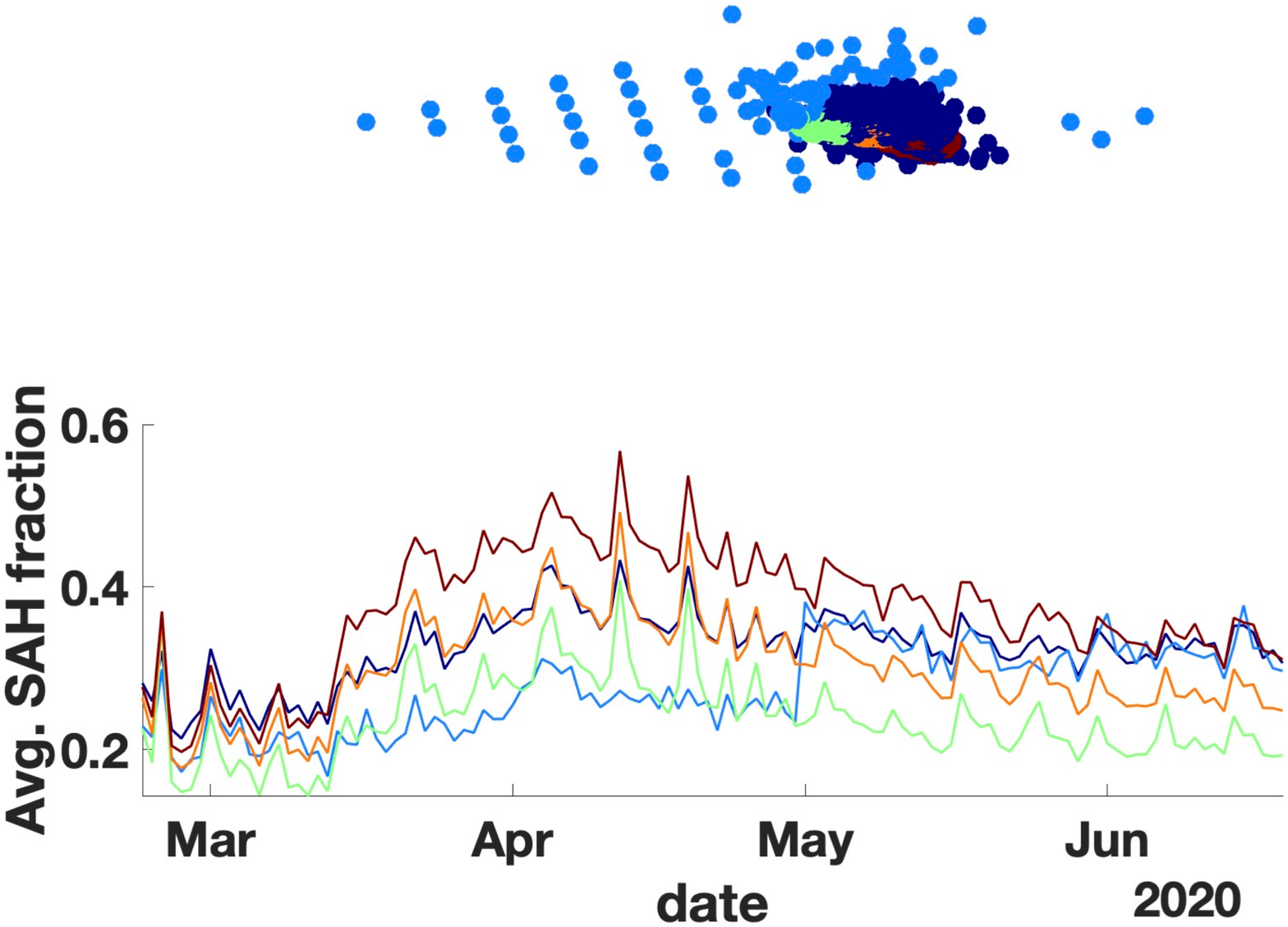}
	\caption{Max}
	\end{subfigure}
	\caption{GMM-based clustering for the same data presented in Fig. 5 of the main article, for cases where the location of the propagates states are extracted via \textbf{(a)} an expected position computation, $\expect{\breve{\bx}_t}$, or \textbf{(b)} the maximum of the resulting distribution over points, $\ell^*$.\label{fig:gmm}}
\end{figure}

\section{Small $N$ clustering examples}
In this section we present two examples of application of our manifold learning approach to clustering of small high-dimensional datasets. The theoretical results presented in the main article provide guarantees on the properties of our algorithm in the large $N$ limit, however, as we will show in this section it also reveals interesting information for small $N$ datasets, especially in the context of clustering.

\subsection{Global COVID-19 mobility patterns}
In the main article we analyzed mobility information during the first three months of the COVID-19 pandemic in the United States at a fine-grained geographic level by considering mobility data for individual census block groups in Georgia. Here we consider another dataset with mobility information, but this time analyze it at a much more coarse-grained geographic scale.

We consider the COVID-19 Community Mobility Reports dataset from Google \cite{google-covid} that aggregates mobile phone location information to capture mobility trends. The dataset charts movement trends in almost every country, including categorizing according to subregions/states in several countries, and covers dates from Feb. 15, 2020 to present day (as of the writing of this manuscript). In addition, the mobility information catalogued across six different categories of places: Retail and Recreation, Grocery and Pharmacy, Parks, Transit Stations, Workplaces and Residences. The key relevant quantity in the dataset is the \emph{percentage change in mobility} from a baseline value for each category and region calculated during the period Jan. 3rd, 2020 to Feb. 6, 2020. The baseline value is calibrated according to day of the week.

In the following, we analyze the dataset over the time period Feb. 15, 2020 to Jan. 24, 2021 (345 days). We first preprocess the data in a couple of steps. First, we coarse-grain the dataset to country-level information; \ie we consider mobility change patterns for each country. Out of the total 132 countries in the dataset we drop 6 countries with poor quality or significant missing data, leaving a dataset with $N=126$ countries. Second, we ignore the Residential category in the dataset since (i) it has different units than the other categories and the documentation specifically says it cannot be treated on the same footing as the other categories \cite{google-covid}. This leaves 5 categories across which mobility changes are tracked for 345 days per country, which defines a time series for each category. We concatenate all 5 time series for each country yielding data vectors of dimension $n=345\times 5 = 1725$.  We normalize each data sample by the maximum $L^2$ norm of the $N=126$ data samples so that each sample can be viewed as a vector in the unit ball in $R^{1725}$.

We calculate geodesic distances between countries in this dataset and then use the geodesic distance matrix to embed the data in $\mathbb{R}^3$ using the same embedding technique employed in the main article (force-directed graph layout). Due to the small size of the dataset, we form initial states and calculate expected positions using extrinsic coordinates (no local PCA) and using the maximum of the propagated state as the estimate of the expected position. The parameters used are: $\epsilon=e^{-0.6}, \alpha=1, dt=0.2, n_{\rm prop}=4, n_{\rm coll}=5$. After this embedding (which is a reduction from 1725 dimensions to 3) we perform k-means clustering using the embedding coordinates to cluster the countries into 6 clusters (we experimented with varying numbers of clusters and empirically found that 6 was optimal in terms of grouping clusters according to mobility patterns). The resulting embedding and color-coded clusters are shown in \cref{fig:globalcovid_graph}. 

The geodesic distance based embedding is meaningful is several ways. It clearly distinguishes different geographic regions, with countries in a region (\eg northern Europe) being close to each other. This is reasonable since social distancing policies and the timing of their introduction during the pandemic varied across geographic regions. In addition, the mobility patterns are highly correlated for countries within a cluster. In \cref{fig:globalcovid_traces} we show mobility patterns for some sample countries within each cluster (top) and also the average percentage change in mobility for the countries in each cluster (bottom). Both of these are shown for each of the five categories considered. Each cluster captures a distinct mobility pattern over the 345 days and the countries within a cluster show remarkably similar mobility patterns. Both of these features indicate a meaningful clustering of this high-dimensional dataset.

\begin{figure*}[t]
\centering
	\includegraphics[width=0.8
	\linewidth]{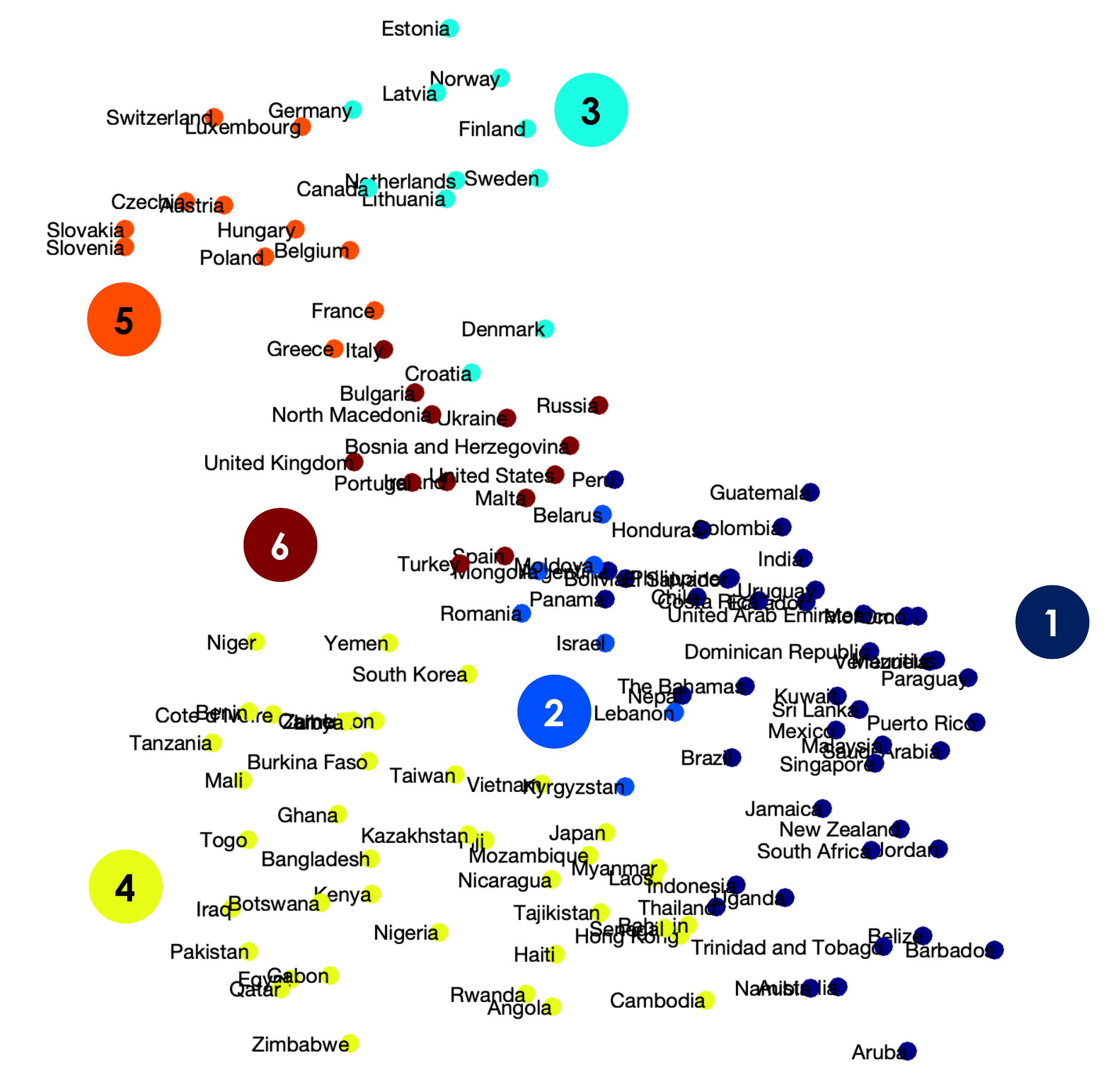}
	\caption{Force-based graph layout of countries based on geodesic distance matrix extracted using our approach applied to mobility pattern data in these countries during the COVID-19 pandemic. The colors indicate clusters found using k-means clustering based on the 3D coordinates determined by the embedding. \label{fig:globalcovid_graph}}
\end{figure*}

\begin{figure*}[t]
\centering
	\includegraphics[width=0.8
	\linewidth]{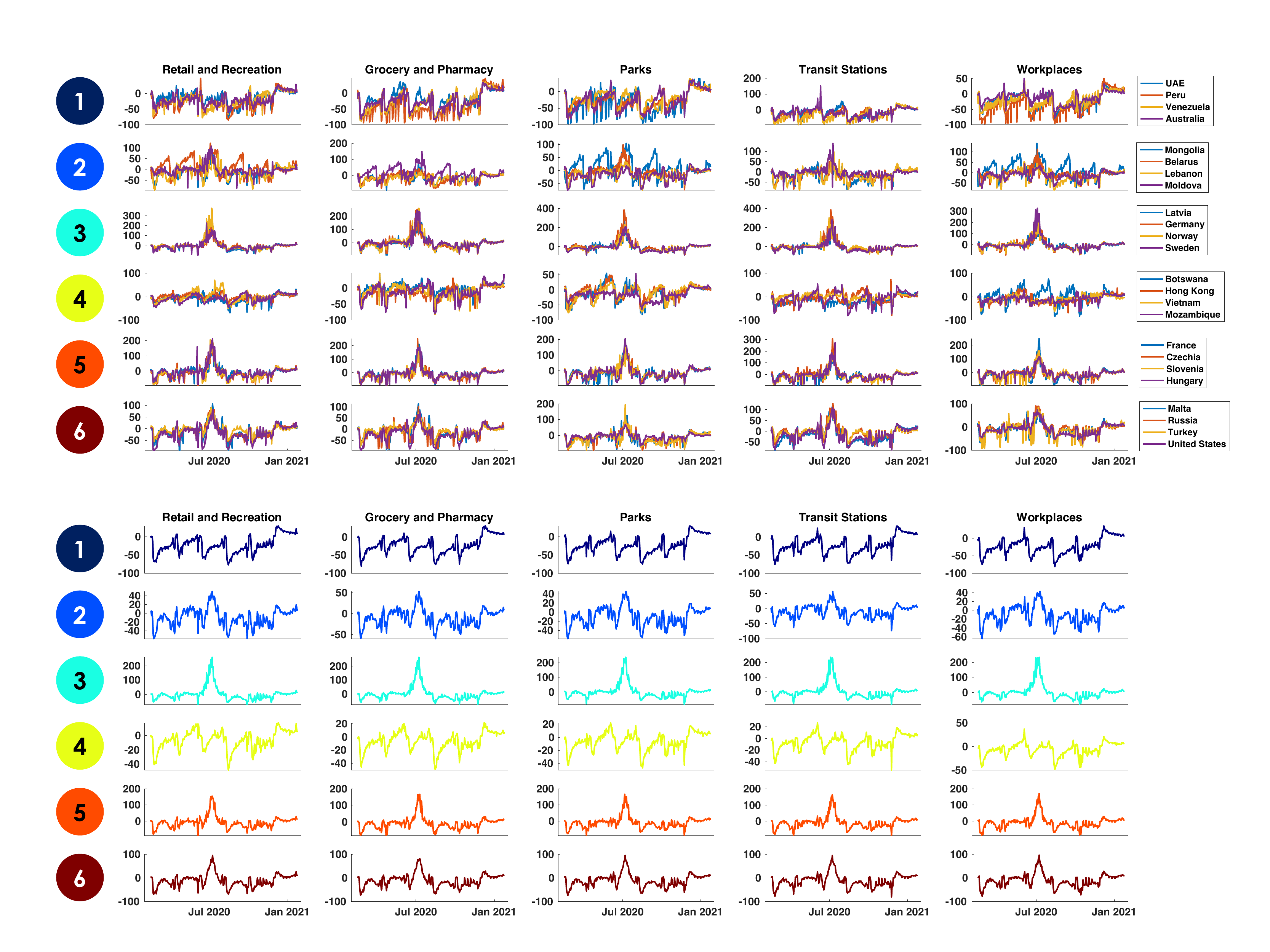}
	\caption{{\bf (Top)} Percentage mobility changes over the monitored time period (Feb 15, 2020 to Jan 24, 2021) for four randomly sampled countries in each cluster (rows), in each of the categories collected in the Google Community Mobility Reports (columns). The countries are listed to the right of each row. Note the similarity of the mobility patterns for countries in the same cluster. {\bf (Bottom)} The average percentage mobility change for all countries within each cluster, and separated across the categories (columns). Each cluster shows a distinct average mobility pattern.  \label{fig:globalcovid_traces}}
\end{figure*}

\subsection{NBA player performance}
The second small $N$, high-dimensional dataset we consider is the NBA Players dataset on Kaggle \cite{nba}, which contains over two decades of statistics on basketball players in the National Basketball Association (NBA). The data was scraped and collated by Justinas Cirtautas using the NBA Stats API. 

We first preprocess the dataset and remove any players without at least five consecutive seasons of data, which leaves $N=920$ samples in the dataset. Then we focus on three key statistics for each player, over the first five years of their career in the NBA (or first five years for which there are records in the NBA Players dataset): average points per game, average rebounds per game, and average assists per game. This yields $3\times 5=15$ features per data point. The data samples are minmax normalized so that the minimum statistic is zero and maximum is one.

We calculate geodesic distances between players in this dataset and then use the geodesic distance matrix to embed the data in $\mathbb{R}^3$ using the same embedding technique employed in the main article (force-directed graph layout).
Due to the small size of the dataset, we form initial states and calculate expected positions using extrinsic coordinates (no local PCA) and using the maximum of the propagated state as the estimate of the expected position. The parameters used are: $\epsilon=e^{-1.3}, \alpha=1.2, \Delta t=0.1, n_{\rm prop}=6, n_{\rm coll}=40$. As with the previous example, we perform k-means clustering based on the coordinates from the 3D embedding. In \cref{fig:nba_graph} we show the embedding and clustering for five clusters. Some of the nodes are also labeled with the player names. 

The embedding and clustering shown in \cref{fig:nba_graph} groups players according to performance and playing style. In \cref{fig:nba_stats} shows the average statistics over five years for each of the clusters. Each cluster clearly shows a distinct pattern in terms of average statistic over the five years. The highest performers, or superstars who excel in all three categories (points, rebounds, assists), are grouped in the light blue cluster, while the orange cluster contains specialist centers and forwards who excel at rebounding and also score points, and the brown cluster contains specialist guards who excels at assists and scoring points. 

As with the global COVID-19 mobility pattern dataset, the low-dimensional embedding and subsequent clustering of this dataset using our approach demonstrates its utility on even small datasets.

\begin{figure*}[t]
\centering
	\includegraphics[width=0.8
	\linewidth]{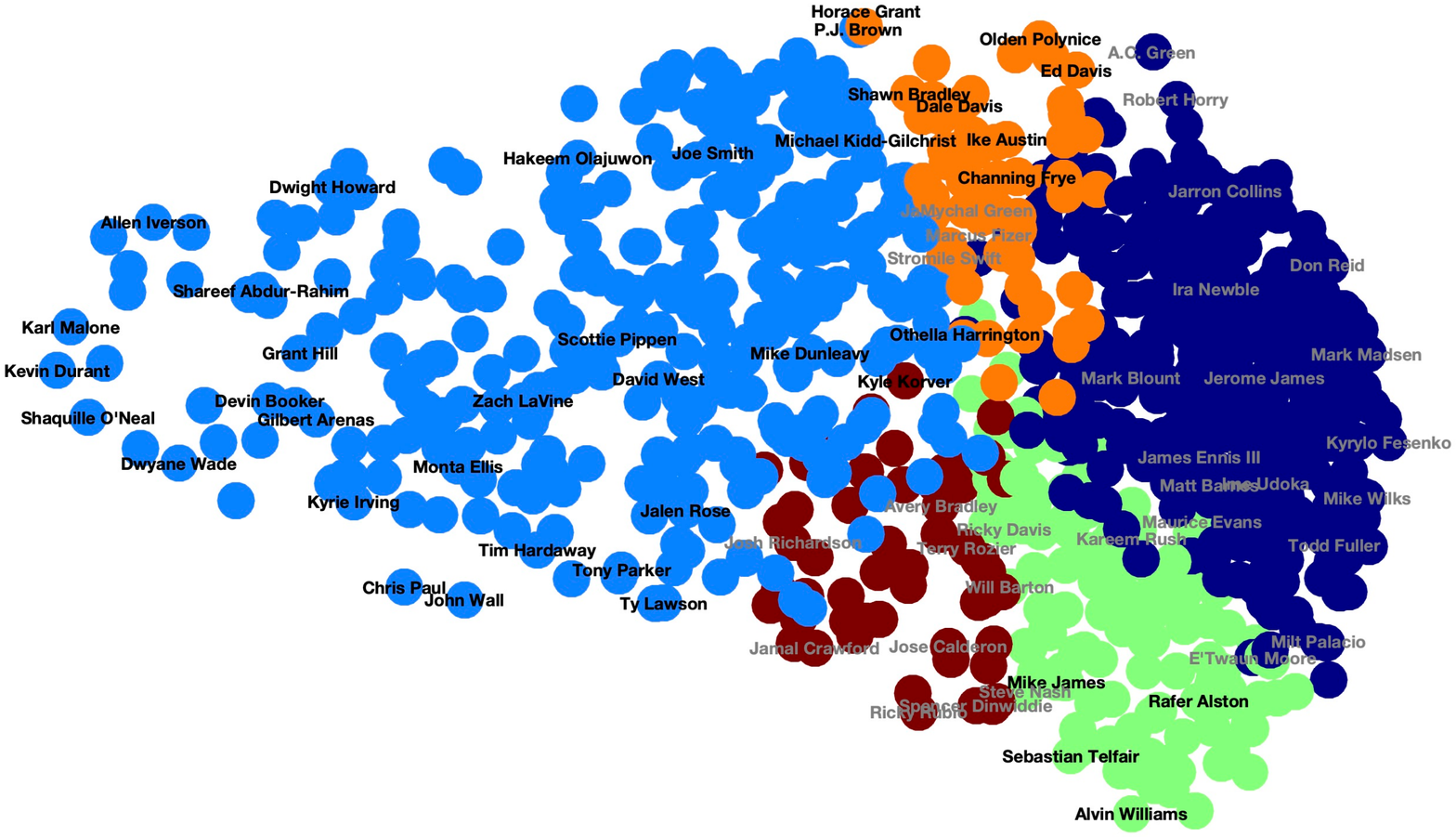}
	\caption{Force-based graph layout of 920 NBA players based on playing statistics over five years. The colors indicate clusters found using k-means clustering based on the 3D coordinates determined by the embedding. Some of the player names are listed to show representative players from each cluster (listing all names would make the plot illegible) \label{fig:nba_graph}}
\end{figure*}

\begin{figure*}[t]
\centering
	\includegraphics[width=0.6
	\linewidth]{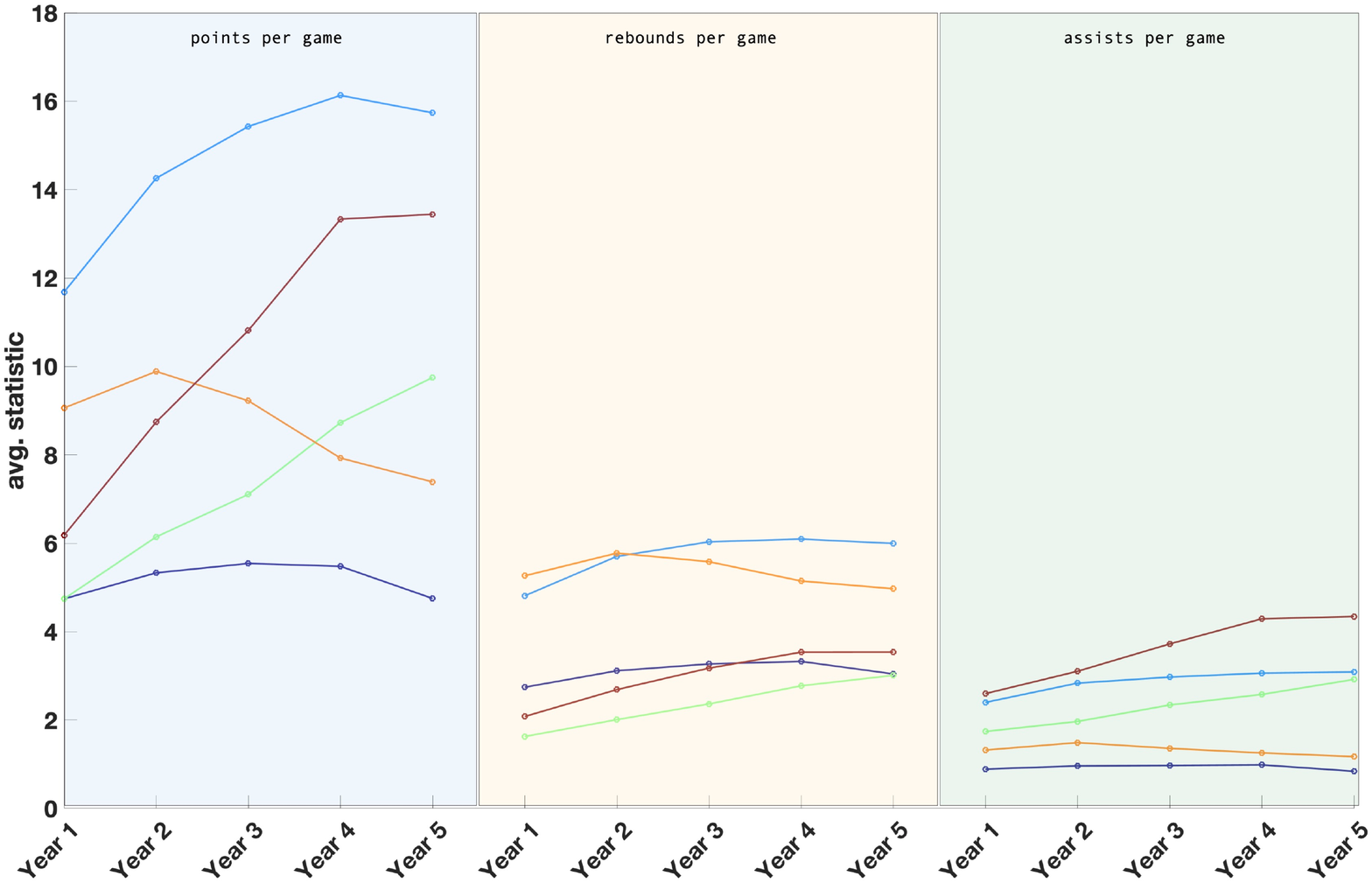}
	\caption{Average statistics, when averaged over all players in a given cluster. The colors of the averages correspond to the cluster color-codes in \cref{fig:nba_graph}. The panels show from left to right, the average points per game, average rebounds per game, and average assists per game, all over five years. \label{fig:nba_stats}}
\end{figure*}

\section{Deviation measure $\mathcal{D}$ for datasets}
The main article defined a useful quantity, $\mathcal{D}=\vert h^2(\psi^h_{\zeta_0}\vert \breve{L}_{\epsilon,\lambda} \vert \psi^h_{\zeta_0})-1\vert$, that can be used a heuristic for choosing good values of $\epsilon$ and $\alpha$ for a dataset. For ideal constructions of the propagator and coherent states, this quantity should be $\mathcal{O}(h)$. In this section we plot this deviation measure as a function of $\epsilon$ and $\alpha$ for all the datasets analyzed in this work. We average this quantity over 28 choices of initial point, $\bx_0$, on each dataset to avoid sampling-related bias around any particular point.

\begin{figure}[t]
\centering
	\begin{subfigure}[b]{0.3\textwidth}
	\centering	
	\includegraphics[width=\textwidth]{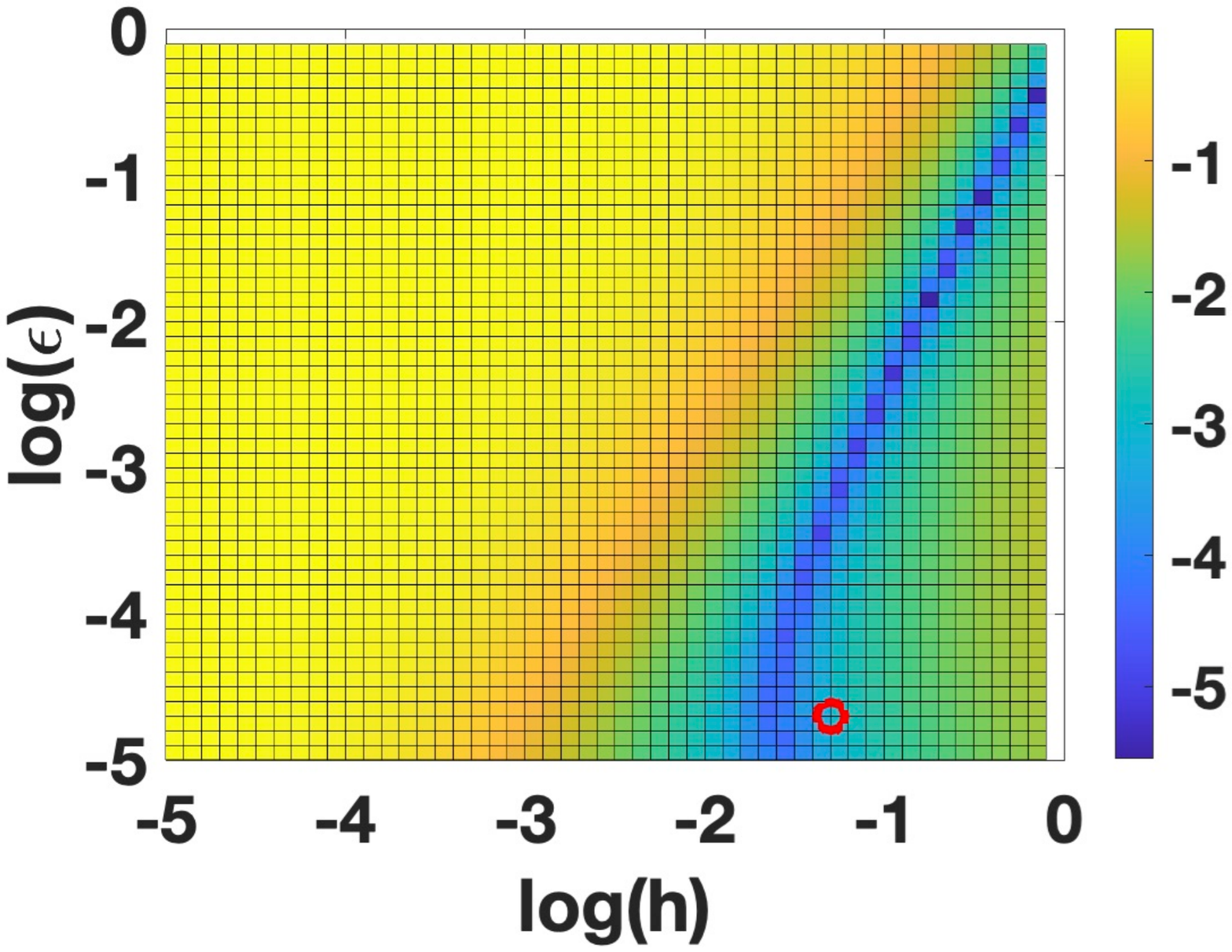}
	\caption{Sphere}
	\end{subfigure}
	\begin{subfigure}[b]{0.3\textwidth}
	\centering	
	\includegraphics[width=\textwidth]{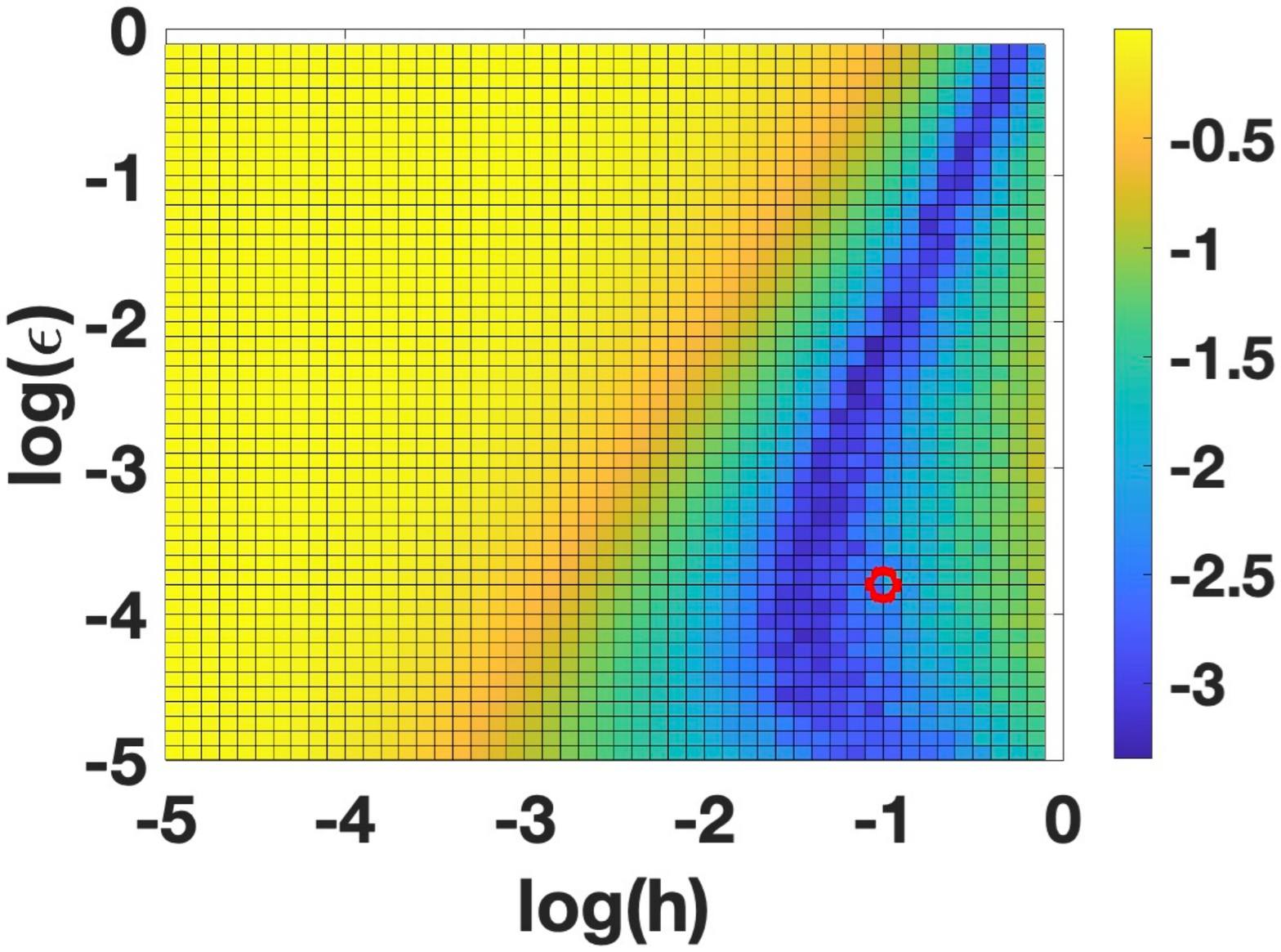}
	\caption{Torus}
	\end{subfigure}
	\begin{subfigure}[b]{0.3\textwidth}
	\centering	
	\includegraphics[width=\textwidth]{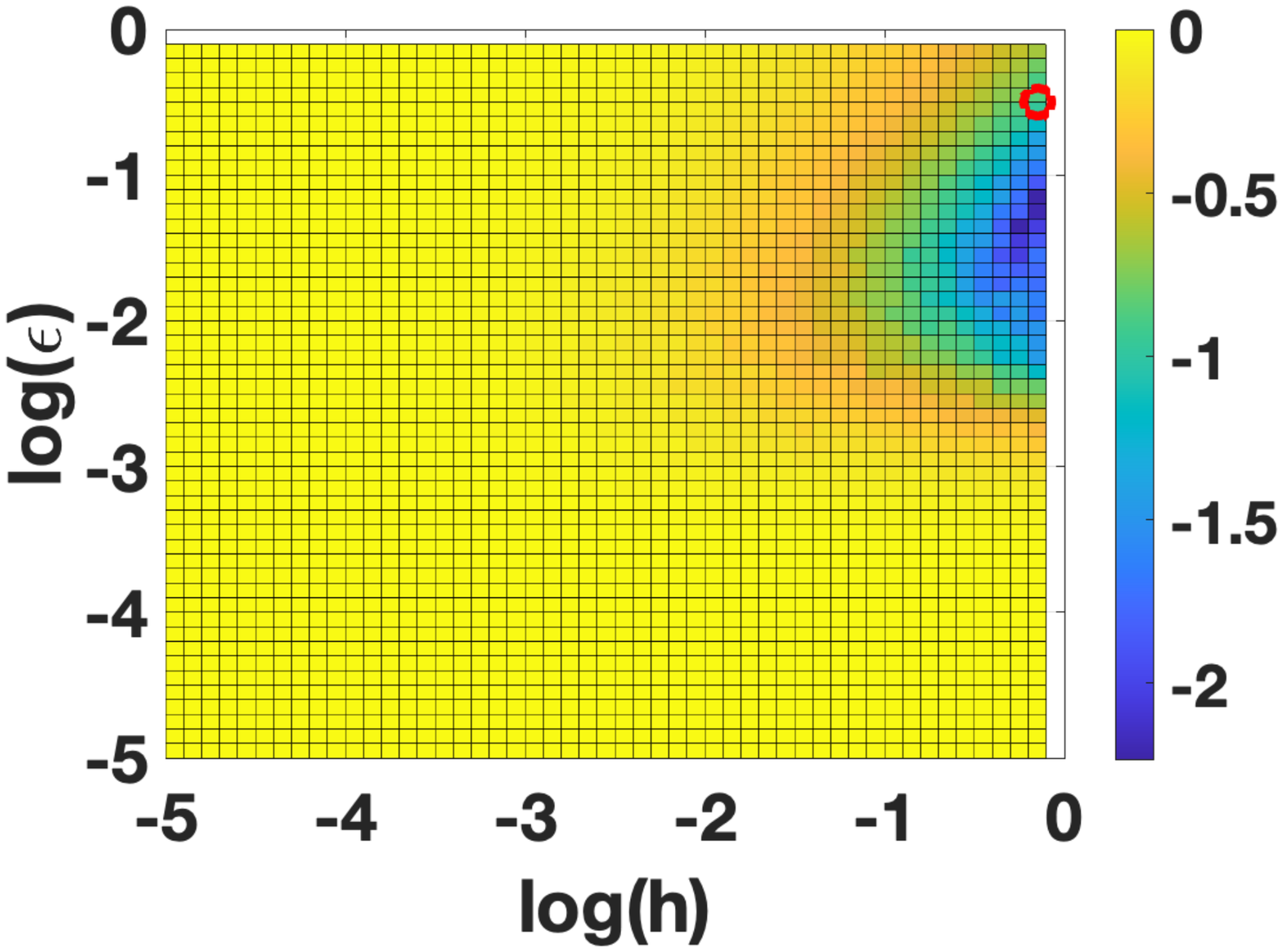}
	\caption{SafeGraph Inc. COVID mobility}
	\end{subfigure}
	\begin{subfigure}[b]{0.3\textwidth}
	\centering	
	\includegraphics[width=\textwidth]{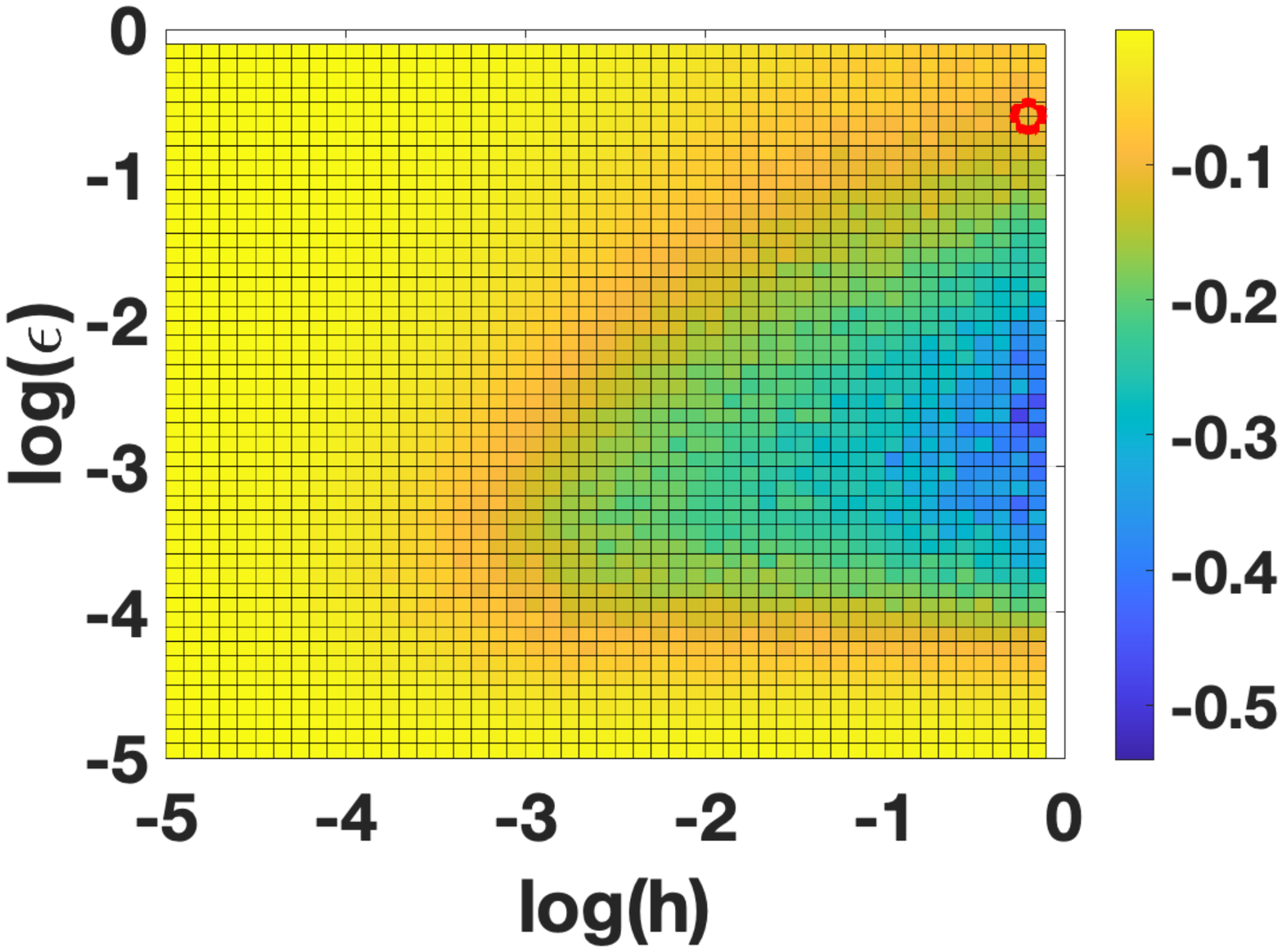}
	\caption{Google global COVID mobility}
	\end{subfigure}
	\begin{subfigure}[b]{0.3\textwidth}
	\centering	
	\includegraphics[width=\textwidth]{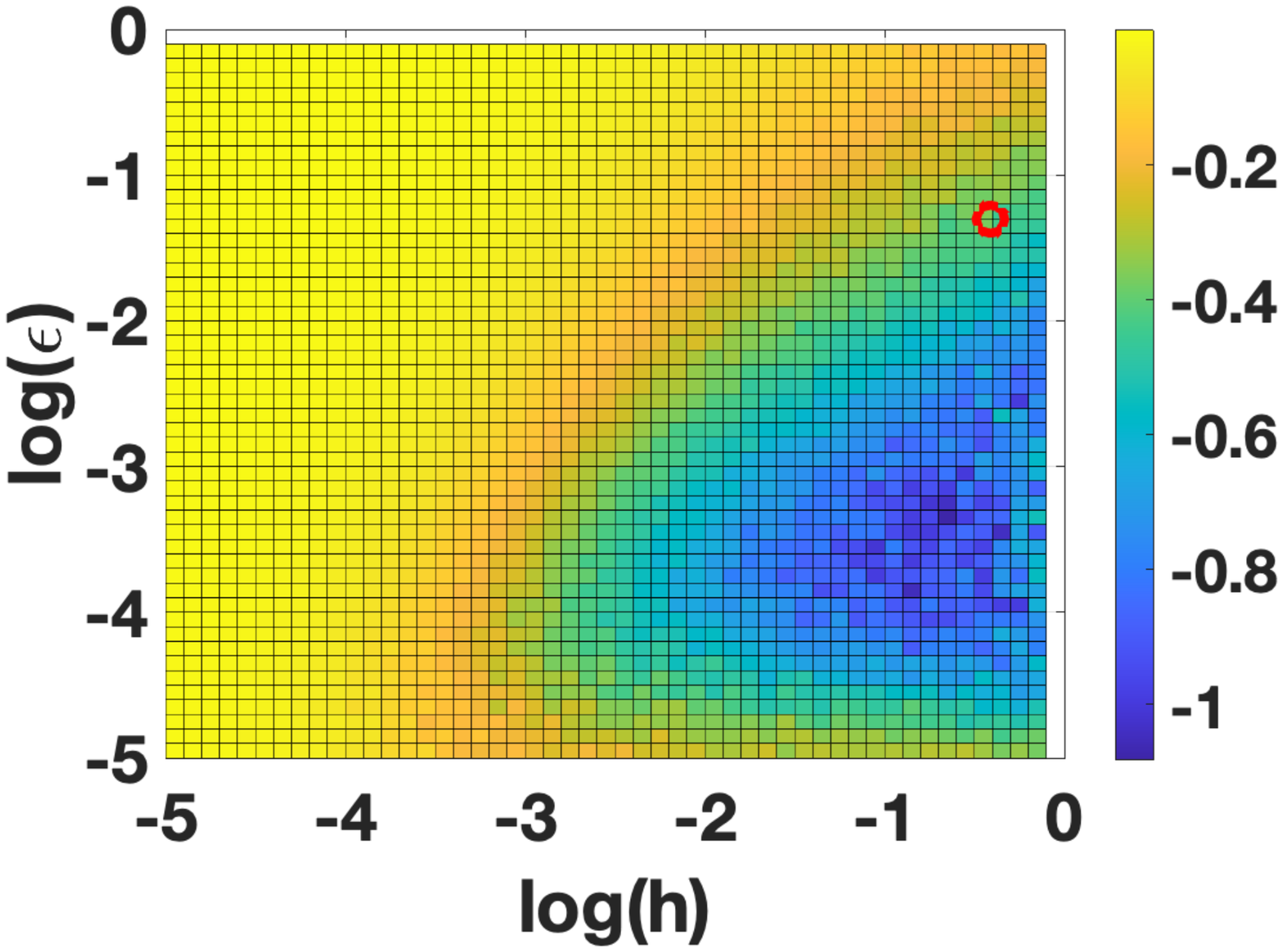}
	\caption{NBA player performance}
	\end{subfigure}

	\caption{ Plots of log of the average deviation measure $\mathcal{D}$ (averaged over 28 initial points on each dataset) as a function of $\log(\epsilon)$ and $\log(h)$ for the five datasets studied in this work. The red circle in each figure indicates the parameter values used in the analysis presented in the main article and this SI. \label{fig:D}}
\end{figure}

\end{document}